\title{The Complexity of\\Homomorphism Reconstructibility} 
\titlerunning{The Complexity of Homomorphism Reconstructibility} 
\author{Jan Böker}{RWTH Aachen University, Germany}{boeker@informatik.rwth-aachen.de}{https://orcid.org/0000-0003-4584-121X}{European Union (ERC, SymSim, 101054974).}
\author{Louis Härtel}{RWTH Aachen University, 
Germany}{haertel@informatik.rwth-aachen.de}{https://orcid.org/0009-0004-3446-5874}{}
\author{Nina Runde}{RWTH Aachen University, Germany}{runde@lics.rwth-aachen.de}{https://orcid.org/0009-0000-4547-1023}{German Research Foundation (DFG) ---  453349072.}
\author{Tim Seppelt}{RWTH Aachen University, Germany}{seppelt@cs.rwth-aachen.de}{https://orcid.org/0000-0002-6447-0568}{German Research Foundation (DFG) --- GRK 2236/2 (UnRAVeL).}
\author{Christoph Standke}{RWTH Aachen University, Germany}{standke@informatik.rwth-aachen.de}{https://orcid.org/0000-0002-3034-730X}{German Research
	Foundation (DFG) --- GR 1492/16-1 and GRK 2236/2 (UnRAVeL).}
\authorrunning{J. Böker, L. Härtel, N. Runde, T. Seppelt, C. Standke} 
\keywords{graph homomorphism, counting complexity, parameterised complexity} 
\tikzstyle{vertex}=[circle, draw=lipicsGray, fill=lipicsGray,inner sep=0pt, minimum size=2mm]
\tikzstyle{smallvertex}=[circle, draw=lipicsGray, fill=lipicsGray,inner sep=0pt, minimum size=1.2mm]
\tikzstyle{kneser}=[circle, draw=lipicsGray, minimum size=0.8cm]
\tikzstyle{cycle}=[circle, draw=lipicsGray, minimum size=1.5cm]
\newsavebox{\fminibox}
\newlength{\fminilength}
\newenvironment{fminipage}[1][\linewidth]{
	\setlength{\fminilength}{#1-2\fboxsep-2\fboxrule}\begin{lrbox}{\fminibox}\begin{minipage}{\fminilength}}{
	\end{minipage}\end{lrbox}\noindent\fbox{\usebox{\fminibox}}
}
\newcommand{\abs}[1]{\left\lvert #1 \right\rvert}
\renewcommand\phi\varphi
\renewcommand\epsilon\varepsilon
\DeclareMathOperator{\surj}{surj}
\DeclareMathOperator{\sub}{sub}
\DeclareMathOperator{\indsub}{indsub}
\newcommand\multiset[1]{\left\{ \!\! \left\{ #1 \right\} \!\! \right\}}
\newcommand{\CC}{\mathcal{C}}
\newcommand{\CF}{\mathcal{F}}
\newcommand{\CG}{\mathcal{G}}
\newcommand{\CI}{\mathcal{I}}
\newcommand{\CL}{\mathcal{L}}
\newcommand{\CS}{\mathcal{S}}
\newcommand{\NN}{\mathbb{N}}
\newcommand{\Aut}{\mathsf{Aut}}
\newcommand{\CEP}{\mathsf{C_= P}}
\newcommand{\NEXP}{\mathsf{NEXP}}
\newcommand{\NP}{\mathsf{NP}}
\newcommand{\PH}{\mathsf{PH}}
\newcommand{\SP}{\#\mathsf{P}}
\newcommand{\Poly}{\mathsf{P}}
\newcommand{\FPT}{\mathsf{FPT}}
\newcommand{\GapP}{\mathsf{GapP}}
\newcommand{\FP}{\mathsf{FP}}
\newtheorem{proviso}[definition]{Proviso}
\Crefname{proviso}{Proviso}{Provisos}
\newcommand{\dproblem}[3]{
	\begin{center}
		\begin{fminipage}[.95\linewidth]
			\textup{\textsc{#1}\begin{description}
				\item[Input] #2
				\item[Question] #3
			\end{description}}
\end{fminipage}
\end{center}}
\newcommand{\pproblem}[4]{
	\begin{center}
		\begin{fminipage}[.95\linewidth]
			\textup{\textsc{#1}\begin{description}
				\item[Input] #2
				\item[Parameter] #3
				\item[Question] #4
			\end{description}}
\end{fminipage}
\end{center}}
\newcommand{\FHomROf}[1]{\textup{\textsc{HomRec($#1$)}}\xspace}
\newcommand{\FHomR}{\FHomROf{\CF}}
\newcommand{\FGHomROf}[2]{\textup{\textsc{HomRec($#1, #2$)}}\xspace}
\newcommand{\FGHomR}{\FGHomROf{\CF}{\CG}}
\newcommand{\pFGHomR}{$p$-\FGHomR}
\newcommand{\BoundedFHomROf}[1]{\textup{\textsc{BHomRec($#1$)}}\xspace}
\newcommand{\BoundedFHomR}{\BoundedFHomROf{\CF}}
\newcommand{\BoundedFGHomROf}[2]{\textup{\textsc{BHomRec($#1, #2$)}}\xspace}
\newcommand{\BoundedFGHomR}{\BoundedFGHomROf{\CF}{\CG}}
\newcommand{\FSubROf}[1]{\textup{\textsc{SubRec($#1$)}}\xspace}
\newcommand{\FGSubROf}[2]{\textup{\textsc{SubRec($#1, #2$)}}\xspace}
\newcommand{\FGSubR}{\FGSubROf{\CF}{\CG}}
\newcommand{\pFGSubR}{$p$-\FGSubR}
\newcommand{\BoundedFSubROf}[1]{\textup{\textsc{BSubRec($#1$)}}\xspace}
\newcommand{\BoundedFSubR}{\BoundedFSubROf{\CF}}
\newcommand{\BoundedFGSubROf}[2]{\textup{\textsc{BSubRec($#1, #2$)}}\xspace}
\newcommand{\BoundedFGSubR}{\BoundedFGSubROf{\CF}{\CG}}
\newcommand{\SCR}{\textup{\textsc{$p$-SingleHomRec($\mathcal{F}, \mathcal{G}$)}}\xspace}
\newcommand{\subsize}{\textup{\textsc{$p$-EquiSizeSubRec($\mathcal{F}, \mathcal{G}$)}}\xspace}
\newcommand{\SHom}{\#\textsc{Hom}\xspace}
\newcommand{\SSat}{\#\textup{\textsc{Sat}}\xspace}
\newcommand{\SetSplitting}{\textup{\textsc{SetSplitting}}\xspace}
\newcommand{\QuadraticPolynomial}{\textup{\textsc{QPoly}}\xspace}
\newcommand{\BPoly}{\textup{\textsc{BPoly}}\xspace}
\newcommand{\DecProblemA}{\textup{\textsc{A}}\xspace}
\newcommand{\DecProblemB}{\textup{\textsc{B}}\xspace}
\newcommand{\threecolouring}{\textup{\textsc{$3$-Colouring}}\xspace}
\newcommand{\existsEqualsThreeColouring}{\textup{\textsc{EC-}}\threecolouring}
\newcommand{\threeSAT}{\textup{$3$-\textsc{Sat}}\xspace}
\newcommand{\existsEqualsThreeSAT}{\textup{\textsc{EC-}}\threeSAT}
\renewcommand\theta\vartheta
\renewcommand\phi\varphi
\renewcommand\epsilon\varepsilon
\begin{document}

\maketitle

\begin{abstract}
    Representing graphs by their homomorphism counts
    has led to the beautiful theory of homomorphism indistinguishability
    in recent years.
    Moreover, homomorphism counts have promising applications
    in database theory and machine learning,
    where one would like
    to answer queries or classify graphs
    solely based on the representation of a graph $G$ as
    a finite vector of homomorphism counts
    from some fixed finite set of graphs to $G$.
    We study the computational complexity of the arguably
    most fundamental computational problem
    associated to these representations,
    the \emph{homomorphism reconstructability problem}:
    given a finite sequence of graphs and a corresponding vector of natural numbers,
    decide whether
    there exists a graph $G$ that realises the given vector
    as the homomorphism counts from the given graphs.

    We show that this problem yields a natural example of an $\NP^{\SP}$-hard problem,
    which still can be $\NP$-hard when restricted
    to a fixed number of input graphs of bounded treewidth
    and a fixed input vector of natural numbers, or alternatively,
    when restricted to a finite input set of graphs.
    We further show that,
    when restricted to a finite input set of graphs
    and given an upper bound on the order of the graph $G$ as additional input,
    the problem cannot be $\NP$-hard unless $\Poly = \NP$.
    For this regime, we obtain partial positive results.
    We also investigate the problem's parameterised complexity
    and provide fpt-algorithms for the case
    that a single graph is given
    and that multiple graphs of the same order with subgraph
    instead of homomorphism counts are given.
\end{abstract}

\newpage

\section{Introduction}

Representing a graph in terms of homomorphism counts has proven to be fruitful in theory and applications.
Many graph properties studied in logic~\cite{dvorak_recognizing_2010,grohe_counting_2020}, algebraic graph theory~\cite{dell_lovasz_2018}, quantum information theory~\cite{mancinska_quantum_2019}, and convex optimisation~\cite{roberson_lasserre_2023,grohe_homomorphism_2022} can be expressed as homomorphism counts from some family of graphs.
Homomorphism counts provide a basis for other counting problems~\cite{curticapean_homomorphisms_2017} and have been studied extensively using diverse tools ranging from algorithmics \cite{dell_lovasz_2018} to algebra \cite{grohe_homomorphism_2022,mancinska_quantum_2019}, from combinatorics \cite{roberson_oddomorphisms_2022,seppelt_logical_2023} to category theory \cite{dawar_lovasz-type_2021,abramsky_discrete_2022}.
In database theory,
they correspond to evaluations of Boolean conjunctive queries under bag semantics \cite{chaudhuri_optimization_1993,kwiecien_determinacy_2021}.
In graph learning, representations of graphs as vectors of homomorphism counts yield embeddings into a continuous latent space and underpin theoretically meaningful and successfully field-tested machine learning architectures~\cite{NguyenM20,bouritsas_improving_2023,wolf_structural_2023}.

In this work, we consider representations of a graph $G$ as a finite vector of homomorphisms counts $\hom(\mathcal{I}, G) \in \mathbb{N}^{\mathcal{I}}$ for some finite set of graphs $\mathcal{I}$,
which we call a \emph{homomorphism embedding}.
The rich theory of homomorphism counts calls for algorithmic applications of homomorphism embeddings:
in database theory,
one would ideally like to decide properties of the graph~$G$
having access to the vector $\hom(\mathcal{I}, G)$ only~\cite{grohe_word2vec_2020,chen_queries_2021}.
In graph learning, certain entries of the homomorphism embedding might be
associated with desirable properties of the graph being embedded,
and one would like to be able to synthesise a graph having these desirable properties
from a vector in the latent space \cite{bonifati_graph_2021,guo_systematic_2022}.
Despite its ubiquity in the contexts described above, homomorphism embeddings have not undergone a systematic complexity-theoretic analysis yet.
The arguably most fundamental computational problem associated to them is to decide whether a vector $h \in \mathbb{N}^{\mathcal{I}}$ actually represents a graph, i.e.\@ it is of the form $h = \hom(\mathcal{I}, G)$ for some graph~$G$.
Therefore, we consider the following \emph{homomorphism reconstructability problem} for graph classes $\mathcal{F}$ and $\mathcal{G}$:
\dproblem{\FGHomROf{\CF}{\CG}}
{Pairs $(F_1, h_1), \dots, (F_m,h_m) \in \CF \times \NN$ where $h_1, \dots, h_m$ are given in binary.}
{Is there a graph $G \in \CG$ such that $\hom(F_i, G) = h_i$ for every $i \in [m]$?}

We simply write $\FHomR$ if $\CG$ is the class of all graphs.
While the problem has a clean-cut motivation in practical applications,
one quickly encounters surprising connections to deep theoretical
results and long-standing open questions.
One does not only have to carefully keep distance to the notorious graph reconstruction conjecture of Ulam~\cite{oneil_ulam_1970},
but also be aware that various decision problems involving the set $R(\mathcal{I})$ of all vectors $\hom(\mathcal{I}, G) \in \mathbb{N}^{\mathcal{I}}$ where $G$ ranges over all graphs have received much attention recently, e.g.\ 
the homomorphism domination problem~\cite{kopparty_homomorphism_2010}, whose decidability is open,
the homomorphism determinacy problem~\cite{kwiecien_determinacy_2021}, or the undecidable problem of determining whether inequalities of homomorphism counts hold~\cite{ioannidis_containment_1995,hatami_undecidability_2011}.
Beyond computational concerns, an abstract characterisation of the set $R(\mathcal{I})$ akin to \cite{freedman_reflection_2007,lovasz_semidefinite_2009} is desirable, yet elusive~\cite{atserias_expressive_2021}.

In this paper, we establish a firm grasp on the computational complexity
of the homomorphism reconstructability problem $\FGHomR$ by
exploring from which of its aspects computational hardness
arises and then finding restrictions for which efficient algorithms can be found.
Despite the interest in the problem, surprisingly little progress on this
question has been made.
The only result related to our work is a theorem from~\cite{ko_three_1991},
which asserts that a variant of \FGHomR with Boolean subgraph constraints
instead of homomorphism counts is $\NP^\NP$-complete.
In particular, a formal definition of $\FGHomR$
has not been made before, and we would like to remark on
a curious peculiarity of the definition we have chosen:
a polynomial-time algorithm for $\FGHomR$ may exploit
algebraic properties of homomorphism numbers and deduce
via arithmetic operations on the given numbers
whether these can be realised by a graph $G$ or not;
let us call an algorithm of this type \emph{arithmetic}.
However, it is also conceivable that an algorithm for $\FGHomR$ would operate
by explicitly constructing the graph $G$, for example, in a dynamic-programming
fashion;
let us call such an algorithm \emph{constructive}.
A constructive algorithm seems just as reasonable as an arithmetic one but
may not be a polynomial-time algorithm for $\FGHomR$
since the order of $G$ does not have to be polynomial
in the length of the binary encoding of the given homomorphism numbers.
Hence, it also seems reasonable to define the
\emph{bounded homomorphism reconstructability problem} $\BoundedFGHomR$
where an additional input $n \in \mathbb{N}$
given in unary imposes a bound on the number of vertices in $G$.
This, however, poses an additional constraint to the graph $G$, which
may make the design of arithmetic algorithms more difficult or impossible.
Hence, both $\FGHomR$ and $\BoundedFGHomR$ are arguably reasonable
definitions of the reconstructability problem, each in their own right,
and we consider both.

\paragraph*{The Ocean of Hardness}
Let $\CEP$ denote the class of all decision problems $L$
for which there is a function $f \in \SP$
and a polynomial-time computable function $g$
such that, for every instance $x$ of $L$, we have
$x \in L \iff f(x) = g(x)$~\cite{simon_central_1975,wagner_observations_1986,hemaspaandra_satanic_1995}.
We first show that
both the unbounded and the bounded reconstructability problem
are $\NP^{\CEP}$-hard when not restricted in any way.
Note that $\NP^{\CEP} = \NP^{\SP}$ since,
whenever we issue a call to the $\SP$-oracle, we may guess the output
using nondeterminism and verify it using a $\CEP$-oracle instead~\cite{toran_complexity_1991,curticapean_parity_2016};
we refer to \Cref{sec:ct} for more details about counting classes.

\begin{theorem}
    \label{th:NPCEPHardnessIntro}
    Let $\CG$ denote the class of all graphs.
    Then, $\FHomROf{\CG}$ is in $\NEXP$ and
    $\BoundedFHomROf{\CG}$ is in $\NP^{\CEP}$.
    Moreover, both problems are $\NP^{\CEP}$-hard.
	Hence, they are not contained
    in the polynomial hierarchy unless it collapses.
\end{theorem}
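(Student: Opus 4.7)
The plan is to treat the three claims separately. For $\FHomROf{\CG} \in \NEXP$, I would first show that a minimal realising graph has order at most exponential in the input size $|x|$. The key identity is that for a graph $F$ which is the disjoint union of a connected component $F'$ on at least two vertices and $k$ isolated vertices, $\hom(F, G) = |V(G)|^k \cdot \hom(F', G)$, whereas $\hom(F, G)$ for connected $F$ on $\geq 2$ vertices is invariant under adjoining isolated vertices to $G$. Consequently, either the input counts determine $|V(G)|$ exactly (for instance if $K_1$ or an $F_i$ with isolated components appears), or we may assume $G$ has no isolated vertices, in which case $|V(G)| \le h_i^{1/|V(F_i)|}$ whenever $h_i \neq 0$ and $F_i$ is connected on $\geq 2$ vertices. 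In either case $|V(G)|$ is exponentially bounded in $|x|$, so we nondeterministically guess $G$ of this size and evaluate each $\hom(F_i, G)$ in exponential time. For $\BoundedFHomROf{\CG} \in \NP^{\CEP}$, since the bound $n$ is given in unary, we guess $G$ on at most $n$ vertices and check each $\hom(F_i, G) = h_i$ by a single $\CEP$ oracle query, legitimate since the function is in $\SP$ and $h_i$ is polynomial-time computable. The closing ``hence'' clause is a standard consequence of Toda's theorem: an $\NP^{\CEP}$-hard problem lying in some fixed level $\Sigma_k^\Poly$ of $\PH$ would give $\Poly^{\SP} \subseteq \Poly^{\NP^{\CEP}} \subseteq \Delta_{k+1}^\Poly$, and combined with $\PH \subseteq \Poly^{\SP}$ this collapses $\PH$ to level $k+1$.

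The hardness will be shown by a polynomial-time many-one reduction from a canonical $\NP^{\CEP}$-complete problem, for instance the existential variant $\exists\text{-}\SSat$: given $\phi(\vec{u}, \vec{v})$ and a target $t$ in binary, decide whether some assignment $\vec{u}$ admits exactly $t$ satisfying completions $\vec{v}$. For an input $(\phi, t)$ the reduction will output a polynomial-size list of graphs $F_1, \dots, F_m$ with targets $h_1, \dots, h_m$ such that (i) every realising graph $G$ has a canonical ``variable-gadget'' form $G_{\vec{u}}$ encoding some assignment $\vec{u}$, and (ii) one designated constraint $\hom(F_\phi, G) = t$ encodes the counting check, where $F_\phi$ is built from $\phi$ so that $\hom(F_\phi, G_{\vec{u}})$ equals the number of $\vec{v}$ satisfying $\phi(\vec{u}, \vec{v})$, following standard gadgetry from the $\SHom$-counting literature. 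Since each $G_{\vec{u}}$ has polynomial order, the same reduction with unary bound $n := |V(G_{\vec{u}})|$ also yields hardness for $\BoundedFHomROf{\CG}$.

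The principal obstacle is (i): carving out the $2^n$-sized family $\{G_{\vec{u}}\}$ using only polynomially many homomorphism-count equalities. Lovász's classical theorem guarantees that the full infinite count profile determines a graph up to isomorphism, but here we must specify a superpolynomial-sized family with polynomial resources, leaving exactly one binary degree of freedom per variable. The plan is to design $G_{\vec{u}}$ as a disjoint union of $n$ template components, each of two possible isomorphism types indexed by $u_i \in \{0,1\}$, and to use counts from small graphs (edges, paths, cycles, cliques, and small disjoint unions thereof) to enforce the component structure while leaving the per-component binary choice unconstrained. The correctness of the count in (ii) then follows from standard interpolation and gadget-replacement arguments in the spirit of Dyer--Greenhill and Curticapean et al.
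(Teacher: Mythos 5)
Your membership argument for $\NEXP$ has a genuine flaw. The inequality ``$|V(G)| \le h_i^{1/|V(F_i)|}$ for connected $F_i$ on $\geq 2$ vertices'' is false (it points the wrong way: $\hom(F,G)\le |V(G)|^{|V(F)|}$ gives a \emph{lower} bound on $|V(G)|$), and more importantly, deleting isolated vertices does not bound the witness: take $F_1 = K_3$, $h_1 = 6$, and $G$ a triangle plus an arbitrarily large perfect matching --- no isolated vertices, yet $|V(G)|$ is unbounded. What you actually need is the paper's locality lemma: restrict $G$ to the union of the images of all homomorphisms $F_i \to G$, which preserves every count and yields a witness of order at most $\sum_i h_i|V(F_i)|$, exponential in the binary input. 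Your $\NP^{\CEP}$ membership for the bounded problem and the Toda-based collapse argument are fine and match the paper.

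The hardness half is where the real gap lies: you correctly identify the principal obstacle --- carving out a $2^n$-sized family $\{G_{\vec u}\}$ of admissible witnesses using polynomially many count constraints, with soundness (every realising $G$ must have the intended form) being the hard direction --- but you leave it unresolved, and it is not resolvable by ``standard gadgetry'' in the form you propose: counts from small graphs into a disjoint union of $n$ two-state components interact across components, and pinning down the component structure while leaving exactly one bit free per component is precisely the difficulty. The paper avoids this entirely by choosing a different source problem, an $\NP^{\CEP}$-complete counting variant of $3$-colouring ($\existsEqualsThreeColouring$): two constraints $\hom(K_1)=3$, $\hom(K_2)=6$ force $G$ to be $K_3$, each label $\ell_i$ is forced to occur exactly once, and the only remaining freedom --- which of the three vertices each label of $S$ lands on --- \emph{is} the existential guess; the single constraint $\hom(F')=k$ then performs the counting check. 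All the structural difficulty is pushed into the constraint graph rather than the witness, and the unlabelled case is handled by Kneser-graph gadgets. You would also need to actually establish $\NP^{\CEP}$-completeness of your source problem (the paper does this via a normalised-oracle-machine argument); it is not off-the-shelf.
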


This result illustrates
two intertwined sources of hardness in the reconstructability problem:
first, the \emph{reconstruction hardness} of finding a graph $G$,
which corresponds to the class $\NP$,
and secondly, the \emph{counting hardness} of verifying that $G$
actually satisfies the given constraints,
which corresponds to the $\CEP$-oracle.
The reconstruction hardness is what we are interested in
since the counting hardness simply reflects the
hardness of counting homomorphisms, which is well understood:
the problem \SHom of counting the number of homomorphisms
from a graph $F$ to a graph $G$ is $\SP$-complete
and becomes tractable if and only if one restricts
the graphs $F$ to come from a (recursively enumerable)
class of bounded treewidth \cite{dalmau_complexity_2004}
under the complexity-theoretic assumption that \textsf{FPT} $\neq$ \textsf{\#W[1]}.

To isolate the reconstruction hardness,
we restrict $\CF$ to be a class of bounded treewidth,
which makes verifying that
a graph $G$ satisfies all given constraints possible in polynomial time.
In particular, the bounded problem is then in $\NP$
since one can guess a graph $G$ of the given size and
then verify that it satisfies all constraints in polynomial time.
We show that the intuition conveyed by \Cref{th:NPCEPHardnessIntro} is correct:
$\FHomR$ is still $\NP$-hard if $\CF$ has bounded treewidth
and even remains $\NP$-hard when only a constant number of constraints
is allowed to appear in the input.

\begin{restatable}{theorem}{hardnessboundedtwConstantNumberConstraintsIntro}
    \label{thm:bounded-number-constraints}
	There is a class $\CF$ of graphs of bounded treewidth
	such that $\FHomROf{\CF}$ and $\BoundedFHomROf{\CF}$ are $\NP$-hard.
\end{restatable}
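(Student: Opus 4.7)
My plan is first to place the problems in $\NP$ and then to reduce from a suitable $\NP$-hard problem. For the $\NP$-membership part: once $\CF$ has bounded treewidth, \SHom is tractable, so any candidate graph $G$ can be verified against all constraints in polynomial time, and including a constraint with $F = K_1$ pins down $\lvert V(G)\rvert$ since $\hom(K_1, G) = \lvert V(G)\rvert$. This places $\BoundedFHomROf{\CF}$, and, after bounding $\lvert V(G)\rvert$ from the input, also $\FHomROf{\CF}$, in $\NP$.

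For the hardness direction, I would take $\CF = \{K_1, K_3\} \cup \{S_k : k \geq 1\}$, consisting of a single vertex, the triangle, and all stars, each of treewidth at most $2$. The identity $\hom(S_k, G) = \sum_{v \in V(G)} \deg_G(v)^k$ expresses star counts as the power sums of the degree sequence of $G$, so by Newton's identities a polynomial-size collection of star-constraints uniquely pins down the degree multi-set of any realising $G$ in polynomial time. Adding the single constraint $\hom(K_3, G) = 6 t$ fixes the triangle count to $t$. Consequently, an $\FHomROf{\CF}$ instance of this form is realisable if and only if there exists a graph with the prescribed degree sequence and exactly $t$ triangles, a graph-realisation problem known to be $\NP$-hard.

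The main obstacle I foresee is verifying $\NP$-hardness of the ``degree-sequence-with-prescribed-triangle-count'' realisation problem in the appropriate binary encoding of power sums. As a fallback, I would reduce directly from \textsc{SubsetSum}: given $(a_1, \ldots, a_n, T)$, construct pairwise non-isomorphic connected bounded-treewidth gadgets $H_1, \ldots, H_n$ satisfying $\hom(F^\star, H_i) = a_i$ for a distinguished $F^\star \in \CF$; add a constant number of further hom-count constraints from $\CF$ that force $G$ to be a disjoint union of the $H_i$'s with each $H_i$ appearing at most once (exploiting that bounded-treewidth hom counts distinguish graphs up to Weisfeiler--Leman equivalence, so the component multi-set of $G$ can be pinned down); and finally impose $\hom(F^\star, G) = T$, which by additivity of $\hom(F^\star, \cdot)$ over connected components becomes $\sum_{i \in S} a_i = T$. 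Either approach yields polynomially many constraints, indeed a constant with care, matching the stronger claim emphasised in the surrounding discussion.
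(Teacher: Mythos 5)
Both routes you sketch have genuine gaps, and they are not the gaps the paper's proof has to deal with. The paper proves this theorem by reducing \SetSplitting to the reconstructability problem for (disjoint unions of) coloured stars using only three constraints $\hom(F_1)=1$, $\hom(F_2)=2$, $\hom(F_3)=0$, where $F_1$ is a single disconnected graph encoding the whole set system (the trick being that $\hom(F_1)=1$ forces every connected component of $F_1$ to have exactly one homomorphism to $G$), and then removes the colours with Kneser-graph gadgets of constant size, preserving bounded treewidth.

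Your first route stands or falls with the claim that deciding realisability of a prescribed degree sequence together with a prescribed exact triangle count is $\NP$-hard. You correctly identify this as the main obstacle, but it is not a result you can cite; as far as I know the complexity of that exact-count realisation problem is not established, so the reduction currently has no hard source problem. (The Newton's-identities part is fine in principle, and only hardness is claimed so your remarks about $\NP$-membership are moot --- though note that for $\FHomROf{\CF}$ the witness $G$ may have order exponential in the binary input, so the unbounded problem is not obviously in $\NP$; the paper lists its $\NP$-completeness as open.) Your second route founders on the step ``force $G$ to be a disjoint union of the $H_i$'s, each appearing at most once, via a constant number of constraints.'' This is the entire difficulty, and WL-distinguishability does not deliver it: to pin down the component multiset by equality constraints $\hom(F,G)=c_F$ you would have to fix the values $c_F = \sum_{i \in S} \hom(F,H_i)$ in advance, but these depend on the unknown solution set $S$ --- exactly the circularity the paper's construction is designed to avoid (its three homomorphism numbers are constants independent of the instance, and the selection is encoded by which of two hub vertices each element attaches to). Moreover, for \textsc{SubsetSum} the $a_i$ are given in binary, so gadgets with $\hom(F^\star,H_i)=a_i$ would generally have exponential order, which breaks the unary size bound in $\BoundedFHomROf{\CF}$. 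A reduction from a problem whose hardness lives in the combinatorial structure rather than in large binary numbers (such as \SetSplitting) is what makes the bounded variant go through.
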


The reduction used to prove \Cref{thm:bounded-number-constraints} further demonstrates that
both problems remain $\NP$-hard
when only allowing some fixed number of constraints:
it produces a family of instances with graphs from $\CF$
where the number of constraints,
all homomorphism numbers, and all but one constraint graph  are fixed.
This raises the question what happens if
we fix all input graphs and allow the homomorphism numbers to vary instead,
i.e.\@ does $\FHomR$ become tractable
if $\CF$ is \emph{finite}?
Even though the input to $\FHomR$ for finite $\CF$
essentially consists only of natural numbers
encoded in binary, we are still able to show that $\FHomR$
is also $\NP$-hard in this case.
In contrast, however, $\BoundedFHomR$ is \emph{sparse}
for finite $\CF$, i.e.\
it only has polynomial number of yes-instances, and hence,
unlikely to be $\NP$-hard~\cite{mahaney_sparse_1982}.

\begin{theorem} \label{thm:finite-set-hardness}
	There is a finite set $\CF$ of graphs such that
	\FHomROf{\CF} is $\NP$-hard.
	If \BoundedFHomROf{\CF} is $\NP$-complete
    for a finite set $\CF$ of graphs, then $\Poly = \NP$.
\end{theorem}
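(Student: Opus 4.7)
We handle the two claims separately, beginning with the sparsity consequence, which follows by a well-known structural argument.

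The plan for the second statement is to apply Mahaney's theorem~\cite{mahaney_sparse_1982}, which forbids sparse $\NP$-complete languages unless $\Poly = \NP$, so it suffices to verify that $\BoundedFHomROf{\CF}$ is sparse whenever $\CF$ is finite. An instance of length $s$ carries a unary bound $n \leq s$ on $|V(G)|$. Conflicting constraints on the same $F$ make an instance trivially no, and repeated consistent constraints are redundant, so we may assume each $F \in \CF$ appears at most once, giving at most $|\CF|$ constraints in total. Every target $h_i$ in a yes-instance then satisfies $h_i \leq n^{|V(F_i)|} \leq s^{c}$ with $c := \max_{F \in \CF} |V(F)|$, so the number of semantically distinct yes-instances of length~$\leq s$ is at most $(s+1)(s^{c}+1)^{|\CF|}$, polynomial in $s$. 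Accounting for the polynomially many string encodings per semantic instance yields sparsity, completing this direction.

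For the first claim we plan to exhibit a finite $\CF$ and reduce from a numerical $\NP$-hard problem such as \textsc{SubsetSum}. The enabling observation is that, because the targets $h_i$ are encoded in binary, the order of a realizing~$G$ may be exponential in the input length, which provides exponentially many hidden degrees of freedom even though $\CF$ is fixed. To harness these we would decompose a hypothetical $G$ into its connected components $G = \bigsqcup_j H_j$; for any connected $F \in \CF$ one has $\hom(F,G) = \sum_j \hom(F,H_j)$, so realizability rephrases as writing the target $(h_F)_{F \in \CF}$ as a non-negative integer combination of the \emph{coin vectors} $(\hom(F,H))_{F \in \CF}$ indexed by all connected graphs $H$. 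The plan is then to choose $\CF$ so that this multi-dimensional coin system is rich enough to encode the weights $a_i$ of a \textsc{SubsetSum} instance, together with auxiliary probes that force a $\{0,1\}$-selection among them.

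The hard part will lie in this last step: one needs a single finite $\CF$ whose coin vectors can faithfully express every \textsc{SubsetSum} input through binary-encoded targets, while additional probe graphs in $\CF$ rule out spurious combinations that would produce false yes-instances. Typically one would augment $\CF$ with a few probes that tightly constrain the type and multiplicity of admissible components of $G$, and the principal combinatorial challenge will be to show that such a small, fixed probe family is expressive enough for soundness and restrictive enough for completeness simultaneously.
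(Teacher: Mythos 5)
Your argument for the second claim is correct and matches the paper's: $\BoundedFHomROf{\CF}$ is sparse for finite $\CF$ because the unary size bound caps each homomorphism count at $n^{|V(F)|}$, and Mahaney's theorem then rules out $\NP$-completeness unless $\Poly = \NP$.

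For the first claim, however, there is a genuine gap: you have a plan but not a construction, and the plan as stated runs into a concrete obstacle that you do not address. With a finite $\CF$ the reduction can output at most $|\CF|$ constraints, i.e.\ a \emph{bounded} number of binary integers, whereas a \textsc{SubsetSum} instance contains an \emph{unbounded} number of weights $a_1,\dots,a_n$. You would therefore have to pack all $n$ weights into constantly many target numbers and then arrange a fixed coin system whose non-negative integer combinations decode a $\{0,1\}$-selection among them; you explicitly defer exactly this step (``the principal combinatorial challenge''), which is the entire content of the theorem. Moreover, the set of achievable coin vectors $(\hom(F,H))_{F\in\CF}$ over all connected $H$ is precisely the object the paper struggles to control even in the single-constraint case (it is only characterised up to finitely many exceptions via B\'ezout-type arguments), so ``rich enough to encode \textsc{SubsetSum}, restrictive enough for soundness'' is not something one can wave at.

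The paper sidesteps this mismatch by reducing from a different problem, \textsc{QPoly} (Manders--Adleman): given $a,b,c$ in binary, decide whether $ax^2+by=c$ has a solution in $\mathbb{N}$. Its instances consist of exactly three numbers, which fits a finite constraint set. The key trick you are missing is how the \emph{quadratic} term is realised: a coloured star $F_{\mathsf{poly}}$ with one $A$-leaf and two $X$-leaves has $a\cdot x^2$ homomorphisms into a star component with $a$ leaves of colour $A$ and $x$ leaves of colour $X$, because independent choices for the two $X$-leaves multiply; summing over two components (one for $ax^2$, one for $by$) gives $ax^2+by$. A handful of further fixed constraints pin down that $G$ has exactly two relevant components with exactly $a$ (resp.\ $b$) leaves of the appropriate colour, and Kneser-graph gadgets remove the colours while keeping $\CF$ finite. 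Your component-decomposition observation is the right starting point, but without the multiplicative encoding of the nonlinear term and a target problem with constantly many input numbers, the reduction does not go through.
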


Hence, the complexity of the reconstructability problem
becomes much more nuanced for finite $\CF$, and
in order to design efficient algorithms for it,
we seemingly have to focus on $\BoundedFHomR$
for finite $\CF$.
The fact that there are only polynomially many feasible combinations
of homomorphism numbers in this case
might be somehow exploitable,
e.g.\ by a dynamic-programming algorithm
operating on a table indexed by them.

\paragraph*{Islands of Tractability}

\newcommand{\smallKone}{\tikz[baseline=-3pt]{
		\node[smallvertex] (R) {};}}
\newcommand{\smallKtwo}{\tikz[baseline=-3pt]{
		\node[smallvertex] (R) {};\node[smallvertex, left = 0.2cm of R] (M) {};\draw (M) edge (R);
}}
\newcommand{\smallKthree}{\tikz[baseline=.5pt]{
		\node[smallvertex] (R) {};\node[smallvertex, left = 0.1cm of R] (M) {};\node[smallvertex, fill = white, draw = white, above = 0.1cm of R] (Tghost) {};\node[smallvertex, fill = white, draw = white, above = 0.1cm of M] (Sghost) {};\node[smallvertex] (T) at ($(Tghost)!0.5!(Sghost)$){};

		\draw (M) edge (R);
		\draw (T) edge (R);
		\draw (M) edge (T);
}}
\newcommand{\smallLKthree}{\tikz[baseline=-5.5pt]{
        \node[smallvertex, label={right:\small$\ell_\mathsf{any}$}] (R) {};\node[smallvertex, left = 0.1cm of R, label={left:\small$\ell_\mathsf{any}$}] (M) {};\node[smallvertex, fill = white, draw = white, above = 0.1cm of R] (Tghost) {};\node[smallvertex, fill = white, draw = white, above = 0.1cm of M] (Sghost) {};\node[smallvertex, label={left:\small$\ell_\mathsf{any}$}] (T) at ($(Tghost)!0.5!(Sghost)$){};

        \draw (M) edge (R);
        \draw (T) edge (R);
        \draw (M) edge (T);
}}
\newcommand{\smallPtwo}{\tikz[baseline=.5pt]{
		\node[smallvertex] (R) {};\node[smallvertex, left = 0.1cm of R] (M) {};\node[smallvertex, fill = white, draw = white, above = 0.1cm of R] (Tghost) {};\node[smallvertex, fill = white, draw = white, above = 0.1cm of M] (Sghost) {};\node[smallvertex] (T) at ($(Tghost)!0.5!(Sghost)$){};

		\draw (M) edge (R);
		\draw (M) edge (T);
}}
\newcommand{\smallKfour}{\tikz[baseline=.5pt]{
		\node[smallvertex] (R) {};\node[smallvertex, left = 0.1cm of R] (M) {};\node[smallvertex, above = 0.1cm of R] (T) {};\node[smallvertex, above = 0.1cm of M] (S) {};\draw (M) edge (R);
		\draw (M) edge (T);
		\draw (M) edge (S);
		\draw (R) edge (T);
		\draw (R) edge (S);
		\draw (T) edge (S);
}}
\newcommand{\smallDiamond}{\tikz[baseline=.5pt]{
		\node[smallvertex] (R) {};\node[smallvertex, left = 0.1cm of R] (M) {};\node[smallvertex, above = 0.1cm of R] (T) {};\node[smallvertex, above = 0.1cm of M] (S) {};\draw (M) edge (R);
		\draw (M) edge (S);
		\draw (R) edge (T);
		\draw (R) edge (S);
		\draw (T) edge (S);
}}

The first tractable instance of \FHomR that comes to mind is given by $F_1 = \smallKone$ and $F_2 = \smallKtwo$ and $h_1, h_2 \in \mathbb{N}$.
In this case, we need to decide whether $h_2 \leq h_1(h_1-1)$ and $h_2$ is even.
Although this is fairly trivial, we encounter severe combinatorial difficulties when attempting to generalise this even to $F_1 = \smallKone$ and $F_2 = \smallKthree$.
\Cref{fig:plot-triangles} shows that the set of reconstructible vectors has a non-trivial shape. 
In particular, while highly engineered results from extremal combinatorics \cite{lovasz_large_2012,razborov_minimal_2008,huang_3-local_2014,glebov_densities_2017} provide insights in the (asymptotic) behaviour of the upper and lower boundary of that set,
we are unable to characterise the seemingly erratic gaps and spikes depicted in \cref{fig:plot-triangles}.
On the positive side, we are able map out large regions of reconstructible vectors using number-theoretic insights:

\begin{figure}
	\centering
	\includegraphics[width=0.75\textwidth]{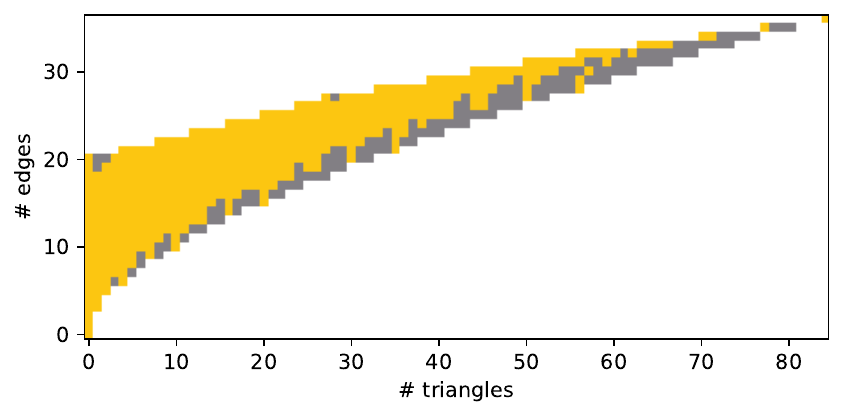}
	\caption{Reconstructable $\smallKthree$ and $\smallKtwo$ subgraph counts (yellow) for graphs on nine vertices.
		The grey area depicts the values which are not ruled out by the Kruskal--Katona bound \cite{kruskal_number_1963,katona_theorem_1968}, cf.\@ \cite[13.31b]{lovasz_combinatorial_1993}, or the Razborov bound \cite{razborov_minimal_2008}, cf.\@ \cite[Theorem~16.14]{lovasz_large_2012}.	
		The $\smallKthree$-counts realisable by graphs on nine vertices correspond to columns with at least one yellow box.
}	\label{fig:plot-triangles}
\end{figure}

\begin{restatable}{theorem}{thmcliques} \label{thm:reconstruct-cliques}
	There exists a function $\gamma \colon \mathbb{N} \to \mathbb{N}$ such that
	for every $k \geq 2$, $n \geq 1$, $h \leq \binom{n}{k}$, there exists a graph $G$ on $n+\gamma(k-1)-1$ vertices such that $\sub(K_k, G) = h$.
\end{restatable}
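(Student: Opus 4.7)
We proceed by induction on $k \geq 2$, constructing the function $\gamma$ as we go. For the base case $k = 2$, set $\gamma(1) = 1$: any graph on $n$ vertices with exactly $h$ edges realises $\sub(K_2, G) = h$, and such a graph exists for every $h \leq \binom{n}{2}$.

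For the inductive step with $k \geq 3$, assume the statement holds for all $k' < k$. Given $h \leq \binom{n}{k}$, the plan is to exploit the $k$-cascade (Macaulay) decomposition
\[
h = \binom{a_k}{k} + \binom{a_{k-1}}{k-1} + \cdots + \binom{a_j}{j},
\]
where $a_k > a_{k-1} > \cdots > a_j \geq j$ are unique integers with $a_k \leq n$. We construct $G$ by starting from a base clique $K_{a_k}$ on $[a_k]$ (which contributes exactly $\binom{a_k}{k}$ copies of $K_k$) and, for each $i \in \{j, \dots, k-1\}$, attaching a gadget $\Gamma_i$ on fresh vertices placed in its own pairwise non-adjacent layer. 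Disjointness of the layers guarantees that no $K_k$ of $G$ straddles two different gadgets, reducing the task to designing each $\Gamma_i$ independently so that it adds exactly $\binom{a_i}{i}$ new copies of $K_k$.

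The natural candidate for $\Gamma_i$---a $(k-i)$-clique $C_i$ on new vertices joined completely to some $a_i$-sub-clique of the base---contributes, by a Vandermonde identity, $\binom{a_i + k - i}{k} - \binom{a_i}{k}$ new $K_k$'s, overshooting the target $\binom{a_i}{i}$ by $\sum_{r=1}^{k-i-1}\binom{k-i}{r}\binom{a_i}{k-r}$. To cancel the overshoot, we plan to use the inductive hypothesis at strictly smaller $k'$: each excess term is itself a binomial that, by induction, is realisable as the $K_{k'}$-count of a small auxiliary subgraph on $O(\gamma(k-2))$ vertices which we attach inside the layer of $\Gamma_i$ to soak up the excess. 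Equivalently, an alternative internal structure for $\Gamma_i$ can be designed through the same inductive mechanism to directly realise $\binom{a_i}{i}$ without overshoot. Summing the gadget sizes over $i$ produces an overall vertex overhead $O(k \cdot \gamma(k-2))$ and a recurrence of the form $\gamma(k-1) \leq c \cdot k \cdot \gamma(k-2)$, from which $\gamma$ is defined.

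The hardest part will be verifying (i) that the auxiliary cancellation structures do not themselves produce uncanceled $K_k$'s spanning multiple layers, and (ii) that the recursion for constructing them terminates with a vertex overhead depending only on $k$ and not on $n$. The pairwise non-adjacency of layers handles (i), and the strict decrease of the ambient $k'$ in each recursive call---bottoming out at the base case $k' = 2$---handles (ii). An alternative worth exploring, especially when $h$ is close to $\binom{n}{k}$ and the additive construction is tightest, is a dual construction starting from $K_N$ for $N = n + \gamma(k-1) - 1$ and removing small cliques guided by the cascade decomposition of the defect $\binom{N}{k} - h$; the same inductive bookkeeping controls the overhead in the dual case.
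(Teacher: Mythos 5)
There is a genuine gap at the heart of the inductive step, and it is the step you yourself flag as "the hardest part". Your gadget $\Gamma_i$ overshoots the target $\binom{a_i}{i}$ by $\sum_{r=1}^{k-i-1}\binom{k-i}{r}\binom{a_i}{k-r}$, and you propose to "cancel" or "soak up" this excess by attaching further auxiliary subgraphs supplied by the inductive hypothesis. But subgraph counts are monotone: adding vertices or edges can only create new copies of $K_k$, never remove any, so no additional structure placed inside the layer can cancel an overshoot. The alternative you gesture at --- "an alternative internal structure for $\Gamma_i$ \ldots to directly realise $\binom{a_i}{i}$ without overshoot" --- is exactly the missing construction, and it is not clear one exists: any gadget on a bounded number of fresh vertices that is joined to $a_i$ base vertices and contains at least one $K_{k-i}$ among the fresh vertices necessarily also contributes terms like $\binom{a_i}{k-1}$ from single fresh vertices, and these grow with $a_i$ (hence with $n$). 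The cascade decomposition is the wrong tool here precisely because its terms have \emph{decreasing} lower indices $i$, and only the top index $i=k-1$ is realisable exactly by the natural one-fresh-vertex gadget.

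The paper's proof sidesteps this entirely by changing both the induction and the decomposition. It inducts on $n$ (for fixed $k$), reducing to the case $\binom{n-1}{k} < h \leq \binom{n}{k}$, so the residual $h' = h - \binom{n-1}{k}$ satisfies $h' \leq \binom{n-1}{k-1}$. It then invokes Kamke's theorem (a Waring-type result): every natural number is a sum of at most $\gamma(k-1)$ binomial coefficients $\binom{a_i}{k-1}$, all with the \emph{same} lower index $k-1$. Each such term is realised exactly by one fresh vertex joined to $a_i$ vertices of a base clique $K_{n-1}$, the fresh vertices being pairwise non-adjacent, and the bound $a_i \leq n-1$ follows from $h' \leq \binom{n-1}{k-1}$. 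If you want to rescue your approach, you should replace the cascade decomposition by a fixed-lower-index decomposition of this kind; without a result like Kamke's, the bounded vertex overhead $\gamma(k-1)-1$ is not attainable.
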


\Cref{thm:reconstruct-cliques} allows for the construction of graphs with almost all possible numbers of clique subgraphs. 
Indeed, the proportion of covered values is $\left.\binom{n}{k}\middle/\binom{n+\gamma(k-1)-1}{k}\right. = 1 - o(1)$ for $n \to \infty$.
In other words, all sensible values can be realised by only slightly deviating from the stipulated size constraint.

The problem arising when dealing with the remaining admissible parameters, i.e.\@ $\binom{n}{k} < h \leq \binom{n+\gamma(k-1)-1}{k}$, 
seems to be that the constraints on the number of vertices and cliques interact in an elusive fashion.
Although understanding such interactions better remains a direction for future investigations,
we are able to identify certain combinatorial conditions under which the constraints are somewhat independent.
This yields fixed-parameter algorithms for variants of \FGHomR contrasting \cref{thm:finite-set-hardness}.
Here, \cref{proviso:fpt} stipulates mild constraints on the graph classes 
$\mathcal{F}$ and $\mathcal{G}$. For example, $\mathcal{G}$ can be taken to be 
the class of all graphs and $\mathcal{F}$ to be the class of all connected 
graphs.

\begin{restatable}{theorem}{thmfpt} \label{thm:fpt}
	For graph classes $\mathcal{F}$, $\mathcal{G}$ as in \cref{proviso:fpt},
	the following problem is in~$\FPT$:
	\pproblem{\SCR}{a graph $F \in \mathcal{F}$, an integer $h \in \mathbb{N}$ given in binary}{$\lvert V(F) \rvert$}{Does there exist a graph $G \in \mathcal{G}$ such that $\hom(F, G) = h$?}

\end{restatable}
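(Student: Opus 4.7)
I focus on the canonical case of \cref{proviso:fpt} where $\CG$ is the class of all graphs and $\CF$ the class of connected graphs. The plan centres on additivity of homomorphism counts from connected sources: since $F$ is connected, every homomorphism $F \to G$ has image inside a single connected component of $G$, so
\[\hom(F, G_1 \sqcup G_2) = \hom(F, G_1) + \hom(F, G_2).\]
Consequently, the set of realisable values $R(F) := \{\hom(F, G) : G \in \CG\}$ is a submonoid of $(\NN, +)$, generated by $S(F) := \{\hom(F, C) : C \in \CG \text{ connected}\}$, and the task reduces to deciding whether $h \in R(F)$.

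The algorithm proceeds in two phases. In the first, in time $f(k)$ with $k = |V(F)|$, I enumerate all connected graphs $C$ with $|V(C)| \leq g(k)$ for a suitable function $g$ and collect $S_0 := \{\hom(F, C) : C \text{ connected}, |V(C)| \leq g(k)\}$ as candidate generators of $R(F)$. In the second, I decide $h \in \langle S_0 \rangle$ using numerical-semigroup techniques: since $|S_0|$ and $\max S_0$ are bounded by $f(k)$, the Frobenius number of $\langle S_0 \rangle$ is bounded by $f(k)$, and for each residue class modulo $d := \gcd S_0$ the smallest representable element can be precomputed (e.g.\ via an Ap\'ery-set computation) in time $f(k)$. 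Then deciding $h \in \langle S_0 \rangle$ requires only a modular reduction and a comparison, taking $\mathrm{poly}(\log h)$ time.

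The main obstacle is the structural claim that $\langle S_0 \rangle = R(F)$ above some threshold $N(k)$, i.e., that small connected graphs already generate all sufficiently large reconstructible values. A natural starting point is the identity
\[\hom(F, C) = \sum_{D \,:\, |V(D)| \leq k} \surj(F, D) \cdot \indsub(D, C),\]
where $\surj(F, D)$ counts surjective homomorphisms $F \to D$; this expresses $\hom(F, C)$ as a non-negative integer combination of induced-subgraph counts at bounded patterns $D$, with coefficients depending only on $F$. I would attempt to argue that any residue modulo $d$ attainable by $\hom(F, C)$ for some connected $C$ is already attained by a connected $C'$ of order $\leq g(k)$: once a residue is matched, the remaining offset is a multiple of $d$ and can be compensated by appending disjoint copies of $F$, each contributing $\hom(F, F) \in S_0$. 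Establishing this bound on $|V(C')|$ --- likely via a combinatorial surgery on $C$ that preserves $\hom(F, C) \bmod d$ while reducing $|V(C)|$ --- is where the heavy lifting lies.

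Finally, for $h \leq N(k)$ a direct search over graphs of bounded order suffices, since any witness $G$ with $\hom(F, G) = h$ can be assumed to consist only of connected components $C$ with $\hom(F, C) \leq h$, each of which, by minimality, has bounded order. The overall running time is $f(k) \cdot \mathrm{poly}(\log h)$, placing \SCR in $\FPT$.
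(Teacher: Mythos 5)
Your overall architecture matches the paper's: compute the finitely many values realised by small graphs, reduce membership above a threshold to a numerical-semigroup/B\'ezout-type test, and brute-force the finitely many small values using \cref{lem:locality}. The threshold phase and the small-$h$ phase are fine. However, there is a genuine gap exactly where you say ``the heavy lifting lies'': you never establish that $R(F)$ contains no new residues modulo $d = \gcd S_0$ beyond those (namely, only $0$) generated by small graphs. Without this, knowing $h \in \langle S_0\rangle$ versus $h \notin \langle S_0\rangle$ does not decide $h \in R(F)$ for large $h$, since a priori some huge connected $C$ could have $\hom(F,C) \not\equiv 0 \pmod d$. Your proposed ``combinatorial surgery'' reducing $\lvert V(C)\rvert$ while preserving $\hom(F,C) \bmod d$ is not carried out and is not obviously feasible; this is precisely the step the paper supplies via \cref{lem:homindsum}, an inclusion--exclusion identity showing that for every $G$ the number $\hom(F,G)$ is a $\mathbb{Z}$-linear combination of the numbers $\hom(F,H)$ over graphs $H$ on at most $k$ vertices, hence a multiple of $d$. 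Combined with \cref{lem:bezout} this gives that for $h \geq N$ one has $h \in R(F)$ iff $d \mid h$, and no surgery on large graphs is needed.

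Ironically, the identity you call a ``starting point'' already closes the gap if you push it one step further: since $\surj(F,D) = \sum_{U \subseteq V(D)} (-1)^{\lvert V(D)\setminus U\rvert} \hom(F, D[U])$ is itself a $\mathbb{Z}$-combination of hom-counts into graphs on at most $k$ vertices, every $\surj(F,D)$ with $\lvert V(D)\rvert \leq k$ is a multiple of $d$ (note $\gcd$ over connected small graphs equals $\gcd$ over all small graphs, by additivity over components), and therefore so is $\hom(F,C) = \sum_{D} \surj(F,D)\,\indsub(D,C)$ for \emph{every} $C$. Had you made this observation, your proof would be complete and arguably slightly more direct than the paper's, which instead derives the signed binomial-coefficient expansion of \cref{lem:homindsum}. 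As written, though, the divisibility claim $R(F) \subseteq d\mathbb{Z}$ is asserted as a hope rather than proved, so the argument is incomplete.
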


Curiously, any fpt-algorithm for \SCR has to be arithmetic: 
In $\FPT$, one can neither construct the graph $G$ nor count homomorphisms from $F$ to $G$ \cite{dalmau_complexity_2004}.
Indeed, our algorithm essentially only operates with integers and exploits number-theoretic properties of the set of reconstructible numbers.
In \cref{thm:fpt2}, we apply similar ideas to derive an fpt-algorithm for a version of \FGHomR with multiple equi-sized subgraph constraints.

\section{Preliminaries and Conventions}

Write $\mathbb{N} = \{0,1,2,\dots\}$ for the set of natural numbers.
$\DecProblemA \le_p \DecProblemB$ denotes that
a decision problem $\DecProblemA$ is polynomial-time
many-one reducible to the decision problem $\DecProblemB$.
A \emph{graph} is a pair $G = (V, E)$ of a set of \emph{vertices} $V$
and a set of \emph{edges} $E \subseteq \binom{V}{2}$.
We usually write $V(G)$ and $E(G)$ for $V$ and $E$, respectively,
and use $n$ to denote the \emph{order} $n  \coloneqq \lvert V(G) \rvert$
of $G$.
For ease of notation, we denote an edge $\{u, v \}$ by $uv$ or $vu$.
A \emph{homomorphism} from a graph $F$ to a graph $G$ is a mapping
$h \colon V(F) \to V(G)$ such that $h(uv) \in E(G)$ for every $uv \in E(F)$.
A \emph{($\CC$-vertex-)coloured graph} is a triple $G = (V, E, c)$ where $(V, E)$ is a graph,
the \emph{underlying graph},
and $c \colon V(G) \to \CC$ a function assigning a \emph{colour} from
a set $\CC$ to every vertex of $G$.
An \emph{($\CL$-)labelled} graph is defined analogously
with a function $\ell \colon \CL \to V(G)$ assigning a vertex of~$G$
to every \emph{label} from a set of labels $\CL$ instead.
Homomorphisms between coloured graphs and between labelled graphs are then
defined as homomorphisms of the underlying graphs that
respect colours and labels, respectively.

A graph $G'$ is a \emph{subgraph} of a graph $G$, written $G' \subseteq G$,
if $V(G') \subseteq V(G)$ and $E(G') \subseteq E(G)$.
The \emph{subgraph induced by a set} $U \subseteq V(G)$,
written $G[U]$, is the
subgraph of $G$ with vertices $U$ and edges $E(G) \cap \binom{U}{2}$.
We write $\hom(F, G)$ for the number of homomorphisms from $F$ to $G$,
$\sub(F, G)$ for the number of subgraphs $G' \subseteq G$ such that $G' \cong F$, and
$\indsub(F, G)$ for the number of subsets $U \subseteq V(G)$ such that $G[U] \cong F$.
This notation generalises to coloured and labelled graphs
in the straightforward way.

The definition of \FGHomR from the introduction directly generalises
to classes $\CF$ and $\CG$ of relational structures over
the same signature, and in particular, classes of labelled and coloured graphs.
We call a pair $(F, h)$ of a structure $F$ and a number $h \in \mathbb{N}$
a \textit{constraint} and also denote the pair by $\hom(F) = h$,
or for example in the context of reconstructability of subgraph counts,
by $\sub(F) = h$.
As stipulated in the introduction, if $\CF$ is a class of graphs,
we abbreviate $\FGHomROf{\CF}{\CG}$
for the class of all graphs $\CG$ to $\FHomROf{\CF}$.
We use the abbreviation $\BoundedFHomR$ in the same way, and
for a class of labelled or coloured graphs $\CF$,
we analogously abbreviate the problem name
if $\CG$ is the class of all labelled or all coloured graphs,
respectively.
We define the problems $\FGSubR$ and $\BoundedFGSubR$ analogously
to $\FGHomR$ and $\BoundedFGHomR$, respectively,
with subgraph counts instead of homomorphism counts
and also follow the conventions agreed upon above.
See \cref{app:prelim} for details.

\section{Decidability}
The problem \BoundedFGHomR is trivially decidable if membership in $\CG$ is decidable
since it is possible to perform a brute-force search for $G$ in time bounded in the size of the instance.
For \FGHomR, decidability is implied by the following lemma:

\begin{lemma}\label{lem:locality}
	Let $\mathcal{F}$ and $\mathcal{G}$
    be classes of structures over the same signature.
    Suppose that $\mathcal{G}$ is closed under taking induced substructures.
	Let $(F_1, h_1), \dots, (F_m,h_m) \in \CF \times \NN$. 
	Let $G \in \mathcal{G}$ be such that $\hom(F_i, G) = h_i$ for all $i \in [m]$.
	Then there exists $H \in \mathcal{G}$ such that $\abs{H} \leq \sum_{i=1}^m h_i \abs{F_i}$ and $\hom(F_i, H) = h_i$ for all $i \in [m]$.
\end{lemma}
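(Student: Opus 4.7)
The plan is to define $H$ as the induced substructure of $G$ on the union of images of all homomorphisms witnessing the counts $h_i$, and then verify the two required properties (size bound and preservation of homomorphism counts).

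Concretely, for each $i \in [m]$, let $\mathrm{Hom}(F_i, G)$ denote the set of homomorphisms $F_i \to G$, so that $\lvert \mathrm{Hom}(F_i, G) \rvert = h_i$. Define
\[
  U \;\coloneqq\; \bigcup_{i=1}^m \bigcup_{\varphi \in \mathrm{Hom}(F_i, G)} \varphi(V(F_i)) \;\subseteq\; V(G),
\]
and let $H$ be the induced substructure $G[U]$, which lies in $\mathcal{G}$ by the closure assumption. Since each $\varphi \in \mathrm{Hom}(F_i, G)$ contributes at most $\lvert F_i \rvert$ elements to $U$, we immediately obtain
\[
  \lvert H \rvert \;=\; \lvert U \rvert \;\leq\; \sum_{i=1}^m h_i \lvert F_i \rvert,
\]
which is the desired size bound.

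It remains to verify that $\hom(F_i, H) = h_i$ for every $i \in [m]$. For the inequality $\hom(F_i, H) \geq h_i$, note that every $\varphi \in \mathrm{Hom}(F_i, G)$ satisfies $\varphi(V(F_i)) \subseteq U$ by construction, and since $H = G[U]$ is the induced substructure, all required relational tuples are retained, so $\varphi$ is also a homomorphism $F_i \to H$. For the reverse inequality, any homomorphism $\psi \colon F_i \to H$ composes with the inclusion $H \hookrightarrow G$ (which is itself a homomorphism, as $H$ is an \emph{induced} substructure of $G$) to yield a homomorphism $F_i \to G$; distinct $\psi$'s yield distinct homomorphisms this way. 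Hence $\hom(F_i, H) = \hom(F_i, G) = h_i$.

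There is no real obstacle here; the only point demanding attention is that $H$ is taken to be the \emph{induced} substructure rather than merely a substructure, which is what makes the correspondence between $\mathrm{Hom}(F_i, H)$ and $\mathrm{Hom}(F_i, G)$ a bijection in both directions and uses the assumed closure of $\mathcal{G}$ under induced substructures.
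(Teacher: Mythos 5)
Your proposal is correct and follows essentially the same argument as the paper: take $U$ to be the union of the images of all homomorphisms $F_i \to G$, set $H \coloneqq G[U]$, bound $\lvert U \rvert$ by $\sum_i h_i \lvert F_i \rvert$, and establish the bijection between homomorphisms into $H$ and into $G$ via the inclusion $H \hookrightarrow G$ in one direction and containment of images in $U$ in the other. No differences worth noting.
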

\begin{proof}
	Let $U$ denote the union over all images of homomorphisms $F_i \to G$, $i \in [m]$. Clearly, $\abs{U} \leq \sum_{i=1}^m \hom(F_i, G)\abs{F_i}$. Let $H \coloneqq G[U] \in \mathcal{G}$. For all $i \in [m]$, $\hom(F_i, H) = \hom(F_i, G)$. Indeed, every homomorphism $F_i \to H$ gives rise to a homomorphism $F_i \to G$ by composition with the embedding $H \hookrightarrow G$. Conversely, observe that every homomorphism $F_i \to G$ is in fact a homomorphism $F_i \to H$ since its image is contained in $U$. It remains to observe that this correspondence establishes a bijection.
\end{proof}

\begin{theorem}
	Let $\mathcal{F}$ and $\CG$ be classes of structures over the same signature.
	Suppose that membership in $\mathcal{G}$ is decidable.
    Then \BoundedFGHomR is decidable.
    If $\CG$ is closed under taking induced substructures, then also
	\FGHomR is decidable.
\end{theorem}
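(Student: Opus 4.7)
The plan is to reduce both statements to finite brute-force search over all candidate structures whose size is bounded by a quantity computable from the input.

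First, I would handle \BoundedFGHomR. Since the input contains a bound $n$ on $\lvert V(G)\rvert$ given in unary, and the signature is fixed, there are up to isomorphism only finitely many structures of size at most $n$, and we can effectively enumerate them. For each candidate $G$ of size at most $n$, we (i) check $G \in \mathcal{G}$, which halts because membership in $\mathcal{G}$ is decidable by hypothesis, and (ii) for each $i \in [m]$ compute $\hom(F_i, G)$ by iterating over the finitely many maps $V(F_i) \to V(G)$ and checking the homomorphism condition, then compare the result with $h_i$. Accept iff some candidate passes all checks; otherwise reject.

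For \FGHomR the only obstacle is that no a priori size bound on the witness $G$ is given in the input. This is exactly what \cref{lem:locality} supplies under the hypothesis that $\mathcal{G}$ is closed under induced substructures: if any $G \in \mathcal{G}$ realises the constraints, then some $H \in \mathcal{G}$ of size at most $N \coloneqq \sum_{i=1}^m h_i \lvert F_i\rvert$ does as well. The quantity $N$ is computable from the input (even if $N$ is exponential in the input length, we only need decidability, not efficiency). We then run the same brute-force procedure as above, enumerating all structures of size at most $N$ and performing the two checks; we accept iff some candidate succeeds. Correctness is immediate: by \cref{lem:locality}, the existence of a bounded-size witness is equivalent to the existence of an arbitrary witness.

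I do not expect any real obstacle; the proof is essentially a bookkeeping argument once \cref{lem:locality} is in hand. The only subtlety worth flagging is why the second statement needs the closure assumption: without it, the induced substructure $G[U]$ constructed in the lemma need not lie in $\mathcal{G}$, so we would lose the size bound that makes the search finite. This also explains why the first part only asserts decidability of \BoundedFGHomR, where a size bound is already provided externally.
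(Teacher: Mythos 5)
Your proof is correct and matches the paper's argument exactly: the paper also handles \BoundedFGHomR by brute-force search over structures of size at most $n$ (using decidability of membership in $\CG$ and the finiteness of maps $V(F_i)\to V(G)$ to verify constraints), and derives decidability of \FGHomR from \cref{lem:locality}, which bounds the size of a witness by $\sum_{i=1}^m h_i\abs{F_i}$ under the closure assumption. Your remark on why the closure hypothesis is needed is also the right explanation.
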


\section{Hardness}
\label{sec:complexity}

In this section,
we prove the hardness results
presented in the introduction.
First, we remark that, for the class $\CG$ of all graphs,
$\FHomROf{\CG}$ is in $\NEXP$
since we can non-deterministically guess a graph~$G$
of exponential size by \Cref{lem:locality} and then count homomorphisms to~$G$
in exponential time by simply going through all mappings
from the given constraints graphs to~$G$.
For the bounded problem,
we are given a size bound on $G$ as part of the input,
which means that $\BoundedFGHomROf{\CG}{\CG}$ is in $\NP^\CEP$ since
we can non-deterministically guess a graph~$G$ of linear size
and then verify that $G$ satisfies all constraints
by using the $\CEP$-oracle.

\begin{theorem}
    \label{th:BoundedIsInNPCEP}
	Let $\CG$ denote the class of all graphs.
	Then,
    $\FHomROf{\CG}$ is in $\NEXP$
    and
    $\BoundedFHomROf{\CG}$ is in $\NP^\CEP$.
\end{theorem}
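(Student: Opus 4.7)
The plan is to verify both containments by a guess-and-check argument, using \Cref{lem:locality} to bound the size of a witness and the oracle $\CEP$ (resp.\ brute force in exponential time) to verify the homomorphism-count constraints.

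For $\FHomROf{\CG}$, I would argue as follows. Let $(F_1, h_1), \dots, (F_m, h_m)$ be an instance of total bit length $N$. Since each $h_i$ is given in binary, we have $h_i \leq 2^N$, and $|F_i| \leq N$, so by \Cref{lem:locality} the instance has a solution if and only if it has one on at most $\sum_{i=1}^m h_i |F_i| \leq m \cdot 2^N \cdot N = 2^{O(N)}$ vertices. A non-deterministic machine can guess such a graph $G$ using exponentially many nondeterministic bits, and then, for each $i \in [m]$, enumerate all $|V(G)|^{|V(F_i)|} \leq 2^{O(N^2)}$ maps $V(F_i) \to V(G)$, check each one for being a homomorphism, and compare the total count to $h_i$. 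The whole verification runs in time $2^{O(N^2)}$, so the problem is in $\NEXP$.

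For $\BoundedFHomROf{\CG}$, the size bound $n$ on $V(G)$ is part of the input in unary, so a witness graph $G$ has polynomial bit length. A non-deterministic polynomial-time machine can guess $G$, and it remains to verify, for each $i \in [m]$, that $\hom(F_i, G) = h_i$. The function $(F,G) \mapsto \hom(F,G)$ is in $\SP$, and the comparison with the polynomial-time computable quantity $h_i$ is precisely the kind of equality test captured by the class $\CEP$; thus each constraint can be verified by a single query to a $\CEP$-oracle, placing the problem in $\NP^{\CEP}$.

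Both steps are essentially mechanical once \Cref{lem:locality} is in hand; the only subtle point is making sure the verification in the unbounded case stays within (simply) exponential time, which is why I would spell out the bound $2^{O(N^2)}$ on the enumeration explicitly rather than a sloppier $2^{\mathrm{poly}(N)}$. No truly delicate step is expected here, since the heavy lifting has been done by \Cref{lem:locality} and by the standard equivalence allowing $\CEP$-oracle equality checks to replace access to $\SP$-values.
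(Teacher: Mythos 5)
Your proposal is correct and matches the paper's argument exactly: both containments are proved by guess-and-check, using \Cref{lem:locality} to bound the witness size for the unbounded problem (then counting homomorphisms by brute-force enumeration in exponential time), and using the unary size bound plus a $\CEP$-oracle query per constraint for the bounded problem. The explicit $2^{O(N^2)}$ bound on the enumeration is a reasonable extra detail but does not change the argument.
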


Let $\CG$ denote the class of all graphs.
The fact that $\BoundedFHomROf{\CG} \in \NP^\CEP$
reflects the intuition on the hardness
on $\BoundedFHomROf{\CG}$
that we gave in the introduction, i.e.\
that there are two intertwined sources of hardness,
the \emph{reconstruction hardness},
manifested as $\NP$,
and the \emph{counting hardness},
manifested as the $\CEP$-oracle.
In \Cref{sec:unboundedTreewidth}, we show that this intuition is in fact correct
and that $\FHomROf{\CG}$ and $\BoundedFHomROf{\CG}$ are $\NP^\CEP$-hard.
In \Cref{sec:threeConstraints}, we further reinforce this intuition
by presenting a reduction from the well-known $\NP$-complete problem $\SetSplitting$
to $\FHomR$ and $\BoundedFHomR$ for a class of bounded treewidth $\CF$
that proves that these problems are $\NP$-hard for a family of inputs
where the number of constraints,
all homomorphism numbers, and all graphs but one are fixed.
This isolates the reconstruction hardness and again underlines that
our intuition of the reconstruction hardness being $\NP$-hardness is correct.

In our reduction from $\SetSplitting$,
the given instance to $\SetSplitting$
is encoded as a constraint graph that grows with the size
of the given instance
while the number of produced constraints and the produced homomorphism numbers
remain fixed.
This raises the question if
we can achieve tractability by
restricting the order of the constraint graphs instead, i.e.\
if $\FHomR$ and $\BoundedFHomR$ become tractable
if $\CF$ is \emph{finite}.
In \Cref{sec:finiteGraphSet},
we show that there is a finite class $\CF$ for which $\FHomR$ is NP-hard
by reducing from the $\NP$-complete problem $\QuadraticPolynomial$
of solving an equation involving a quadratic polynomial.
We further show that this reduction
cannot be adapted to work for $\BoundedFGHomR$
and then prove that $\BoundedFGHomR$ is sparse, i.e.\
it only has polynomial number of yes-instances,
which means that it cannot be $\NP$-hard
under the assumption that $\Poly \neq \NP$.
In \Cref{sec:hardnessSubgraphCounts},
we briefly discuss which of our hardness results also hold for
the subgraph reconstructability problem.

For the sake of presentability, we only
provide the reductions to the reconstructability
problem for labelled
or coloured graphs in the main body of this paper.
In \Cref{sec:uncoloured}, we show that
all these reductions be adapted to
(unlabelled and uncoloured) graphs
via gadget constructions that employ \emph{Kneser graphs},
cf.\ \Cref{sec:kneserGraphs}.

\subsection{$\NP^\CEP$-Hardness}
\label{sec:unboundedTreewidth}

We first show that
both problems $\FHomR$ and $\BoundedFHomR$ are $\NP^\CEP$-hard.
We reduce from the following $\NP^\CEP$-complete variant of $\threecolouring$,
cf.\ \Cref{sec:completeness}.

\dproblem{$\existsEqualsThreeColouring$}
    {A graph $G$, a subset of vertices $S\subseteq V(G)$, and a $k \in \NN$ given in binary.}
    {Is there a homomorphism $c \colon G[S] \to \smallKthree$ such that there are exactly $k$ homomorphisms $\hat{c} \colon G \to \smallKthree$ such that $\hat{c}|_S = c$?}

The idea is that the number of homomorphisms from a graph $F$
to the complete graph on three vertices $\smallKthree$ is precisely
the number of $3$-colourings of $F$.
Then, one can formulate constraints
that can only be satisfied by $\smallKthree$,
and by adding an additional constraint $\hom(F) = h$,
one obtains a yes-instance to the reconstructability problem
if and only if the number of $3$-colourings of $F$ is exactly $h$.
By additionally employing labels on $F$ and $\smallKthree$, we obtain a reduction from 
$\existsEqualsThreeColouring$.

\begin{theorem}
    \label{th:NPCEPReduction}
	Let $\CL\CG$ denote the class of all labelled graphs.
	Then,
    $\existsEqualsThreeColouring \le_p \FHomROf{\CL\CG}$
    and
    $\existsEqualsThreeColouring \le_p \BoundedFHomROf{\CL\CG}$.
\end{theorem}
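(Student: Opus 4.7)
Given an instance $(G,S,k)$ of $\existsEqualsThreeColouring$, I will produce four labelled constraints that together force the reconstructed labelled graph $H$ to be a copy of $\smallKthree$ whose three vertices are named by labels $\ell_r,\ell_g,\ell_b$, and whose additional labels $\ell_v$ for $v \in S$ encode a homomorphism $c\colon G[S]\to\smallKthree$ with exactly $k$ extensions to $G$. I take $\CL = \{\ell_r,\ell_g,\ell_b\}\cup\{\ell_v : v\in S\}$ as the label set and read off $c(v)$ as the colour class to which $\ell_v$ points in $H$.

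First I would impose two \emph{rigidity} constraints pinning $H$ down to $\smallKthree$: $\hom(F_1,H)=3$, where $F_1$ is a single unlabelled vertex, forcing $|V(H)|=3$; and $\hom(F_2,H)=1$, where $F_2$ is $\smallKthree$ with its three vertices carrying the labels $\ell_r,\ell_g,\ell_b$. Since labels are respected by homomorphisms and (simple) graphs carry no loops, the second constraint forces $\ell_r,\ell_g,\ell_b$ to sit on three distinct vertices spanning a triangle, which together with the first constraint makes the underlying graph of $H$ exactly $\smallKthree$.

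Next I would impose two \emph{encoding} constraints that use the labels $\ell_v$ to read off $c$. Let $F_3$ be $G[S]$ with each vertex $v \in S$ labelled $\ell_v$; since every vertex of $F_3$ is labelled, the labels force any homomorphism $F_3\to H$ to be the map induced by $c$, and such a map exists iff $c$ is a valid homomorphism $G[S]\to\smallKthree$, so I impose $\hom(F_3,H)=1$. Let $F_4$ be $G$ with the same labels on $S$ and no labels outside $S$; the labels force any homomorphism $F_4\to H$ to agree with $c$ on $S$, so $\hom(F_4,H)$ counts precisely the extensions of $c$ to homomorphisms $G\to\smallKthree$, and I impose $\hom(F_4,H)=k$. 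Correctness then amounts to a two-way verification: any witness $c$ for $\existsEqualsThreeColouring$ gives rise to the intended $H$, while any $H$ meeting all four constraints yields a valid $c$ with exactly $k$ extensions. For $\BoundedFHomROf{\CL\CG}$, I would additionally output the unary size bound $n=3$, which is consistent with the first constraint; the same $H$ serves as witness.

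The one technical point will be that an $\CL$-labelled graph must assign every label in $\CL$ to some vertex, so each constraint graph $F_i$ must formally carry the full label set. I would handle this by padding each $F_i$ with one separate isolated vertex per missing label, labelled accordingly: such a padding vertex is forced to map to $\ell_H(\lambda)$ under any label-preserving homomorphism and introduces no edge constraints, so the padding multiplies every homomorphism count by one and leaves all arguments above unchanged. Since each $F_i$ has size polynomial in $|G|+|S|$ and $k$ is passed through in binary, the reduction is polynomial-time.
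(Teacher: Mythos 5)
Your reduction is correct and takes essentially the same route as the paper's proof: force the reconstructed labelled graph to be a triangle via small counting constraints, attach one label per vertex of $S$ so that the partial colouring can be read off from where those labels point in the target, and use the constraint $\hom(F_4)=k$ on the labelled copy of $G$ to count extensions. The one substantive difference is your additional constraint $\hom(F_3)=1$ on the fully labelled copy of $G[S]$, which explicitly enforces that the decoded partial map is itself a homomorphism $G[S]\to K_3$; the paper relies on the single constraint $\hom(F')=k$ for this, which only forces validity of the partial colouring when $k\geq 1$, so your version handles the $k=0$ case more carefully.
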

\begin{proof}
	Given an instance $(F,S,k)$ of $\existsEqualsThreeColouring$,
    let $m := |S|$. Fix an arbitrary linear order on $S$,
    i.e. $S = \{s_1,\dots,s_m\}$. We use the labels
    $\ell_1,\dots,\ell_m$ to classify vertices of $F$
    as members of $S$:
    let $F'$ be the labelled graph obtained
    from $F$ and $S$ by assigning label $\ell_i$
    to the vertex $s_i \in S$ for every $i \in [m]$.
    The reduction then produces the following constraints, where for
    $\BoundedFHomROf{\CL\CG}$, we set the additional size constraint to three:
	\begin{multicols}{2}
		\begin{alphaenumerate}
        \item\label{hcfg} $\hom(F') = k$,
        \item\label{hck1} $\hom(\smallKone) = 3$,
        \item\label{hck2} $\hom(\smallKtwo) = 6$, and
\item\label{hcell} $\hom(\tikz[baseline=-3pt]{
            \node[smallvertex, label={right:\small$\ell_i$}] (ell_i) {}
        }\hspace{-3pt}) = 1$ for every $i\in[m]$.
        \end{alphaenumerate}
    \end{multicols}
    Note that $\smallKthree$ is the unique graph that satisfies $\hom(\smallKone) = 3$ and $\hom(\smallKtwo) = 6$.
The remaining constraints enforce 
    that each label from 
    $\{\ell_1,\dots,\ell_m\}$
    appears exactly once in~$G$. For a partition $\CL = \{L_1,L_2,L_3\}$
    of these labels into at most three parts,
    let $G_\CL$ denote the $\smallKthree$ with its three vertices labelled by
    the labels in $L_1$, $L_2$, and $L_3$, respectively.
    Then, $(F, S, k)$ is in $\existsEqualsThreeColouring$
    if and only if
there is a partition $\CL$ as above such that $\hom(F', G_\CL) = k$,
    which again is the case if and only if
    there is a labelled graph $G$
    that satisfies the constraints produced by the reduction.
\end{proof}

By encoding vertex labels by gadgets consisting of Kneser graphs,
we can also obtain a reduction for uncoloured graphs.
Since the number of labels used in the reduction above
depends on the input instance, we view every label
as a binary number and construct a gadget that encodes this binary number
via Kneser graphs. This allows us to only use a constant and finite set of Kneser graphs,
which means that we do not have to worry about the size of the Kneser graphs,
which guarantees that the resulting reduction still runs in polynomial time.
The proof can be found in \Cref{sec:uncolouredGeneralHardness}.

\begin{theorem}
    \label{th:uncolouredGeneralHardness}
	Let $\CG$ denote the class of all graphs.
	Then,
    $\existsEqualsThreeColouring \le_p \FHomROf{\CG}$
    and
    $\existsEqualsThreeColouring \le_p \BoundedFHomROf{\CG}$.
\end{theorem}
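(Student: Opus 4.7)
The plan is to carry the labelled reduction of \Cref{th:NPCEPReduction} over to the uncoloured setting by simulating every label with a gadget built from a fixed finite family of Kneser graphs, whose availability and rigidity is established in \Cref{sec:kneserGraphs}. Concretely, I fix two Kneser graphs $A$ and $B$ such that every endomorphism of $A$ or $B$ is an automorphism and $\hom(A,B)=\hom(B,A)=0$. For $1\le i\le m$, write $i$ in binary as $b_{t-1}(i)\cdots b_0(i)$ with $t=\lceil\log_2(m+1)\rceil$ and define the \emph{label gadget} $\Gamma_i$ by attaching to a path $v_0\,v_1\,\cdots\,v_t$ a pendant copy of $A$ at $v_j$ whenever $b_j(i)=0$ and of $B$ whenever $b_j(i)=1$; the endpoint $v_0$ is the gadget's \emph{anchor}. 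Since positions along the spine are distinguishable, the gadgets $\Gamma_1,\dots,\Gamma_m$ are pairwise non-isomorphic.

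Next, I would form the reduction's constraint graph $F^\star$ from the labelled graph $F'$ of \Cref{th:NPCEPReduction} by identifying, for each $i\in[m]$, the anchor of a fresh copy of $\Gamma_i$ with the vertex $s_i\in S$. Mirroring the labelled reduction, for every partition $\CL=(L_1,L_2,L_3)$ of $[m]$ into at most three parts let $G_\CL^\star$ be obtained from $\smallKthree$ by attaching, at its $j$-th vertex, one copy of $\Gamma_i$ for each $i\in L_j$. The reduction then emits the constraints $\hom(F^\star)=k\cdot N$ (for a normalisation $N$ collecting the anchor-preserving endomorphism counts of the $\Gamma_i$, computable in polynomial time by rigidity of $A,B$), $\hom(\smallKone)=|V(G_\CL^\star)|$, $\hom(\smallKtwo)=2|E(G_\CL^\star)|$, together with $\hom(A)$ and $\hom(B)$ fixed to the total numbers of $A$- and $B$-pendants and auxiliary $\hom(Q)$-constraints for a handful of small Kneser-decorated paths $Q$ witnessing the correct spine lengths and pendant patterns. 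For the bounded variant I would additionally pass the size bound $|V(G_\CL^\star)|$.

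The hard part will be to argue that rigidity of $A,B$ together with the $\smallKthree$-forcing pair and the spine-witness counts leaves no feasible graph $G$ other than some $G_\CL^\star$: no two spines may be glued together, no pendant $A$ or $B$ may escape its position on a spine, and the $\smallKthree$-core must be unique and meet the gadget bodies only at the anchors. Once this is verified, homomorphisms $F^\star\to G_\CL^\star$ biject with the pairs $(c,\hat c)$ of \Cref{th:NPCEPReduction} up to the known factor $N$, so the reduction correctly encodes $\existsEqualsThreeColouring$. Since only $F^\star,\smallKone,\smallKtwo,A,B$ and $O(\log m)$ small auxiliary graphs are used as constraints, each of size polynomial in $|F|+m$, the reduction runs in polynomial time for both $\FHomROf{\CG}$ and $\BoundedFHomROf{\CG}$.
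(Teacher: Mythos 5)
Your overall strategy---binary-encoding the labels by gadgets built from homomorphically incomparable Kneser graphs---is the same as the paper's, but the proposal omits the ingredients that make the backwards direction (and even the forward count) work, and the step you defer as ``the hard part'' is exactly where the argument breaks with the constraints you have chosen. First, you leave the edges of $F'$ untouched. Since $F$ is an arbitrary input graph, a homomorphism $F^\star \to G^\star_\CL$ may map non-$S$ vertices of $F$ into the spines or into the pendant copies of $A$ and $B$, so $\hom(F^\star, G^\star_\CL)$ is not $k\cdot N$ even in the forward direction; the paper avoids this by replacing every edge of $F'$ with a bidirectional gadget (two opposing direction gadgets with long $P_{10\ell}$-paths) and by attaching an indicator gadget to every original vertex, which forces original vertices to map to original vertices. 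Second, your spine is a plain unit-length path, so gadgets can fold onto themselves or onto each other whenever the pendant patterns are compatible under folding, destroying the injectivity of the label encoding; the paper inserts $P_{2\ell}$-paths between consecutive Kneser graphs and marks the start and end of each binary word with two further incomparable Kneser graphs precisely to prevent this. Third, replacing the pair $\hom(\smallKone)=3$, $\hom(\smallKtwo)=6$ (which \emph{uniquely} determines $\smallKthree$) by $\hom(\smallKone)=|V(G^\star_\CL)|$ and $\hom(\smallKtwo)=2|E(G^\star_\CL)|$ loses all rigidity: countless graphs realise those two numbers, and nothing in your constraint list pins down a unique triangle core met only at the anchors.

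The decisive missing idea is the family of \emph{zero constraints}: the paper adds $\hom(F)=0$ for every isomorphism type of a non-injective homomorphic image of each Kneser graph and of the encoded triangle $[K_3]$ (a finite, instance-independent list). These force every homomorphism to be injective on all Kneser copies, which is what makes the counting arguments in the backwards direction go through (three distinct ``colour'' vertices, every encoded edge landing in the encoded $K_3$, every label gadget landing on one of the three colour vertices). Your ``handful of small Kneser-decorated path constraints witnessing spine lengths'' does not substitute for this: positive count constraints cannot rule out folded images, and without ruling them out you cannot conclude that an arbitrary feasible $G$ induces a partition of the labels into three classes. As written, the proposal is a plan whose correctness hinges on an unproved (and, with the stated constraints, false) rigidity claim.
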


Together with the following observation,
this then finishes the proof of \Cref{th:NPCEPHardnessIntro}.

\begin{corollary}\label{todascor}
	Let $\CG$ denote the class of all graphs.
	Then,
    $\FHomROf{\CG}$ and
    $\BoundedFHomROf{\CG}$ are not in $\PH$
	unless $\PH$ collapses.
\end{corollary}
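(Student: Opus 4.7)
The plan is to derive the corollary from the $\NP^{\CEP}$-hardness already established (Theorems~\ref{th:NPCEPReduction} and~\ref{th:uncolouredGeneralHardness}) via a standard Toda-style argument. The key identity to invoke is $\NP^{\CEP} = \NP^{\SP}$, which the introduction has already noted. Thus both $\FHomROf{\CG}$ and $\BoundedFHomROf{\CG}$ are $\NP^{\SP}$-hard under polynomial-time many-one reductions.

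The actual deduction proceeds as follows. Suppose, for contradiction, that $\FHomROf{\CG} \in \PH$ (the argument for $\BoundedFHomROf{\CG}$ is identical). Then $\FHomROf{\CG} \in \Sigma_k^{\Poly}$ for some fixed~$k$. Since $\FHomROf{\CG}$ is $\NP^{\SP}$-hard under $\le_p$ and $\Sigma_k^{\Poly}$ is closed under polynomial-time many-one reductions, we obtain $\NP^{\SP} \subseteq \Sigma_k^{\Poly}$. By Toda's theorem, $\PH \subseteq \Poly^{\SP} \subseteq \NP^{\SP}$. Chaining these inclusions gives $\PH \subseteq \Sigma_k^{\Poly}$, i.e., $\PH$ collapses to its $k$-th level. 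This contradicts the hypothesis that $\PH$ is infinite and proves the corollary.

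The only slightly delicate point is the reduction of $\CEP$-oracle calls to $\SP$-oracle calls, but this is precisely the standard observation (attributed to Toran and used by Curticapean--Marx) already cited in the introduction: a nondeterministic machine with an $\SP$ oracle can guess the expected output and verify equality, so $\NP^{\CEP} \subseteq \NP^{\SP}$, while the reverse inclusion is trivial. I would simply cite this and Toda's theorem, then assemble the chain $\PH \subseteq \Poly^{\SP} \subseteq \NP^{\SP} \subseteq \Sigma_k^{\Poly}$ in two short sentences. No step here is a real obstacle; the content of the corollary is entirely in the hardness results already proved, and what remains is only to package them against Toda's theorem.
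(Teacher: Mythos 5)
Your proposal is correct and follows essentially the same route as the paper: both arguments rest on the $\NP^{\CEP}$-hardness results, Toda's theorem ($\PH \subseteq \Poly^{\SP}$), and the observation that $\SP$-oracle answers can be guessed nondeterministically and verified with a $\CEP$-oracle, yielding the chain $\PH \subseteq \Poly^{\SP} \subseteq \NP^{\CEP} \subseteq \Sigma_k^{\Poly}$ and hence a collapse. The only cosmetic difference is that you route through the identity $\NP^{\CEP} = \NP^{\SP}$ while the paper inlines the same guessing argument directly.
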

\begin{proof}
	This follows from Toda's Theorem \cite{toda_pp_1991}
    since every oracle query in the computation of a ${\Poly}^{\SP}$-machine
	can be simulated by a nondeterministic polynomial-time machine
    guessing the answer and then verifying it with an oracle query to a
    problem in $\CEP$.
\end{proof}

\subsection{$\NP$-Hardness for Constraints of Bounded Treewidth}
\label{sec:threeConstraints}

The reduction used to prove $\NP^\CEP$-hardness
in the previous section uses the
graph given as input for $\existsEqualsThreeColouring$
as a constraint, which
has the side effect that the treewidth of the produced instances
is not bounded.
Moreover, the same reduction cannot be easily adapted to prove
$\NP$-hardness of $\FHomR$ for a class of graphs $\CF$
of bounded treewidth by simply considering input graphs of bounded treewidth:
the $\NP$-complete problem $\threecolouring$~\cite{garey_simplified_1976}
restricted to graphs of bounded treewidth
is polynomial-time solvable~\cite{arnborg_linear_1989}.
Hence, we need a different approach to such a reduction.
We reduce from the following problem $\SetSplitting$,
which is well-known to be $\NP$-complete \cite{lovasz_coverings_1973}.

\dproblem{$\SetSplitting$}
{A collection $\CC$ of subsets of a finite set $S$.}
{Is there a partition of $S$ into two subsets $S_1$ and $S_2$ such that no subset in $\CC$ is entirely contained in either $S_1$ or $S_2$?}

The idea is that we represent every element $i$ of $S$
by a vertex of colour $i$.
A set $T \in \CC$ is encoded by a star that has a leaf for every element of $T$
and a root vertex of some fixed colour that is distinct
from the colours used for elements of $S$.
Intuitively, the graph $G$ then consists of two stars that encode
the two sets $S_1$ and $S_2$. To ensure that no subset in $\CC$
is entirely contained in $S_1$ or $S_2$, we use the constraints to
require that there are no homomorphisms from our constraint graphs to $G$.

The idea sketched above produces a single constraint for every
set $T \in \CC$.
We can use the following trick to combine these constraints into a single one:
a graph $F$ consisting has exactly one homomorphism to $G$ if and only
if all its connected components have exactly one homomorphism to $G$.
Hence, if we choose $G$ to consist of two stars that encode $S_1$ and $S_2$
and also
add a star that has all elements of $S$ as its leaves,
we can instead require to have exactly one homomorphism
from all our constraint graphs to $G$
and, thus, combine all these constraint graphs into a single (disconnected) graph
from which we require to have exactly one homomorphism.

\begin{theorem}
	\label{th:hardnesscolouredConstant}
    Let $\CC\CS$ denote the class of all
    disjoint unions of coloured stars.
	Then, $\SetSplitting \le_p \FHomROf{\CC\CS}$
	and $\SetSplitting \le_p \BoundedFHomROf{\CC\CS}$,
	where the reduction only produces three constraints
	$\hom(F_1) = 1$, $\hom(F_2) = 2$, and $\hom(F_3) = 0$.
\end{theorem}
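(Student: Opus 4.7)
The plan is to reduce \SetSplitting to \FHomROf{\CC\CS} (and, with an additional polynomial vertex bound, to \BoundedFHomROf{\CC\CS}) by producing three coloured-star constraints whose hom counts $(1, 2, 0)$ jointly encode a valid set-splitting, expanding on the sketch preceding the theorem. Given an instance $(S, \CC)$, I introduce a fresh colour $c_s$ for every $s \in S$, one common star-root colour $r$, a partition-marker colour $p$, and a witness-marker colour $q$. The target graph $G$ intended to realise a valid partition $(S_1, S_2)$ is the disjoint union of three $r$-rooted coloured stars: two \emph{partition stars} with leaves coloured $\{p\} \cup \{c_s : s \in S_i\}$ for $i \in \{1, 2\}$, and one \emph{witness star} with leaves coloured $\{q\} \cup \{c_s : s \in S\}$.

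For each $T \in \CC$ let $T\text{-star}$ denote the $r$-rooted star with leaves $\{c_t : t \in T\}$; for each $s \in S$ let $P_s$ and $W_s$ denote the $r$-rooted stars with leaves $\{p, c_s\}$ and $\{q, c_s\}$, respectively. A direct calculation yields $\hom(T\text{-star}, G) = 1 + [T \subseteq S_1] + [T \subseteq S_2]$, with the leading $1$ always arising from mapping the centre into the witness, so that this factor equals $1$ precisely when $T$ is split by $(S_1, S_2)$; similarly $\hom(P_s, G) = 1$ iff $s$ lies in exactly one of $S_1, S_2$, $\hom(W_s, G) = 1$ iff the witness contains a $c_s$-leaf, and $\hom(K_2^{(r,q)}, G) = 1$ iff $G$ contains a unique witness. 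Since $\hom$ distributes as a product over disjoint unions and a product of non-negative integers equals $1$ iff every factor does, the single constraint
\[
F_1 \;:=\; K_2^{(r,q)} \;\sqcup\; \bigsqcup_{T \in \CC} T\text{-star} \;\sqcup\; \bigsqcup_{s \in S} P_s \;\sqcup\; \bigsqcup_{s \in S} W_s, \qquad \hom(F_1, G) = 1,
\]
simultaneously enforces witness uniqueness, full witness coverage of $S$, assignment of each $s \in S$ to exactly one partition, and that every $T \in \CC$ is split. Taking $F_2 := K_2^{(r, p)}$ with $\hom(F_2, G) = 2$ pins down exactly two partition roots via their $p$-markers. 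The third constraint $F_3$ is a coloured-star gadget that vanishes on the intended $G$---for instance, the star with $r$-coloured centre and two leaves coloured $p$ and $q$, which detects hybrid partition-witness $r$-vertices; the precise choice is flexible as long as $\hom(F_3, G) = 0$ on the intended $G$ and it suppresses residual spurious configurations.

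The forward direction is immediate from the hom-count calculations above. The main obstacle is the reverse direction: extracting a valid set-splitting from any $G$ satisfying the three constraints. Factoring $\hom(F_1, G) = 1$ over the components of $F_1$ yields a unique witness $r$-vertex $w$ (from the $K_2^{(r,q)}$ factor) carrying a $c_s$-leaf for every $s \in S$ (from the $W_s$ factors); two partition roots $v_1, v_2$ identified by $F_2 = 2$ together with a partition $(S_1, S_2)$ of $S$ defined by $S_i := \{s \in S : v_i \text{ has a } c_s\text{-leaf}\}$ (the $P_s$ factors ensure this is a partition); and, for each $T \in \CC$, a unique $r$-vertex in $G$ covering $T$ (from the $T\text{-star}$ factors). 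Since $w$ covers every $T \in \CC$, this uniqueness forces no partition root to also cover $T$, equivalently $T \not\subseteq S_1$ and $T \not\subseteq S_2$, so $(S_1, S_2)$ is a valid set-splitting. For \BoundedFHomROf{\CC\CS}, we additionally output the vertex bound $|V(G)| \leq 2|S| + 6$, which is polynomial in the input and is achieved by the intended $G$.
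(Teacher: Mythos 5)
Your proposal is correct and follows essentially the same route as the paper: colours for elements, a star per set, multiplicativity of $\hom$ over disjoint unions to pack everything into a single $\hom(F_1)=1$ constraint, $\hom(F_2)=2$ to pin down the two partition roots, and $\hom(F_3)=0$ to keep the witness/"everything" star separate from them, with the same intended target graph on $2|S|+6$ vertices. The only (harmless) difference is that you enforce the witness star via the edge $K_2^{(r,q)}$ together with the per-element two-leaf stars $W_s$, whereas the paper uses a single large star with all $k$ element-leaves plus an $E$-leaf.
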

\begin{proof}
	Given a collection $\CC$ of subsets of a finite set $S$, we may
	assume that $S = [k]$ by re-labelling the elements of $S$.
	In the construction of the graphs $F_1, F_2, F_3$, we use
	the colours $1, \dots, k$ and also $B$ (\enquote{black}),
	$E$ (\enquote{everything}),
	and $P$ (\enquote{partition}).
	We construct these graphs as shown
    in \Cref{fig:hardnesscolouredConstantConstruction}
	and add the constraints $\hom(F_1) = 1$, $\hom(F_2) = 2$, $\hom(F_3) = 0$.
	Note that the constraint $\hom(F_1) = 1$ is equivalent
	to $\hom(F) = 1$ for every connected component $F$ of $F_1$.
	For $\BoundedFHomROf{\CC\CS}$, we set the size bound
	to $2k + 6$.

	\begin{figure}
		\centering
		\begin{subfigure}[b]{0.55\textwidth}
			\centering
			\begin{tikzpicture}[node distance = 0.5cm]
				\node[vertex, label={right:\small$B$}] (everyB) {};
				\node[vertex, label={left:\small$E$}, left = 0.3cm of everyB] (everyE) {};
				\node[vertex, label={above:\small$1$}, above left = 0.5cm and 0.25cm of everyB] (every1) {};
				\node[vertex, label={above:\small$k$}, above right = 0.5cm and 0.25cm of everyB] (everyk) {};
				\node[] (everydots) at ($(every1)!0.5!(everyk)$) {\scriptsize$\ldots$};
				\draw
				(everyB) edge (everyE)
				edge (every1)
				edge (everyk)
				;
\node[below = 0.3cm of everyB] {};

				\node[vertex, label={right:\small$B$}, right = 2.2cm of everyB] (partB) {};
				\node[vertex, label={left:\small$P$}, left = 0.3cm of partB] (partP) {};
				\node[vertex, label={above:\small$i$}, above = of partB] (parti) {};
				\node[below = 0.17cm of partB, overlay] (partDescription) {\small for $i \in [k]$};
				\draw
				(partB) edge (partP)
				edge (parti)
				;

				\node[vertex, label={right:\small$B$}, right = 2.2cm of partB] (setB) {};
				\node[vertex, above left = 0.5cm and 0.25cm of setB] (set1) {};
				\node[vertex, above right = 0.5cm and 0.25cm of setB] (setk) {};
				\node[] (setdots) at ($(set1)!0.5!(setk)$) {\scriptsize$\ldots$};
				\draw[very thick, decorate, decoration = {calligraphic brace}]
				([xshift = -2pt, yshift = 5pt]set1.west)
				-- node[above, yshift = 3pt] {\small$T$}
				([xshift = 2pt, yshift = 5pt]setk.east)
				;
				\node[below = 0.20cm of setB, overlay] (setDescription) {\small for $T \in \CC$};
				\draw
				(setB) edge (set1)
				edge (setk)
				;
			\end{tikzpicture}
			\caption{$\hom(F_1) = 1$.}
		\end{subfigure}
		\begin{subfigure}[b]{0.20\textwidth}
			\centering
			\begin{tikzpicture}[node distance = 0.5cm]
				\node[vertex, label={right:\small$B$}] (B) {};
				\node[vertex, label={left:\small$P$}, left = 0.3cm of B] (P) {};
\draw
				(B) edge (P)
				;
			\end{tikzpicture}
			\caption{$\hom(F_2) = 2$.}
		\end{subfigure}
		\begin{subfigure}[b]{0.20\textwidth}
			\centering
			\begin{tikzpicture}[node distance = 0.5cm]
				\node[vertex, label={right:\small$B$}] (B) {};
				\node[vertex, label={left:\small$E$}, above left = 0.1cm and 0.4cm of B] (E) {};
				\node[vertex, label={left:\small$P$}, below left = 0.1cm and 0.4cm of B] (P) {};
\draw
				(B) edge (P)
				(B) edge (E)
				;
			\end{tikzpicture}
			\caption{$\hom(F_3) = 0$.}
		\end{subfigure}
		\caption{The three constraints produced by
			the reduction of \Cref{th:hardnesscolouredConstant}.}
		\label{fig:hardnesscolouredConstantConstruction}
	\end{figure}
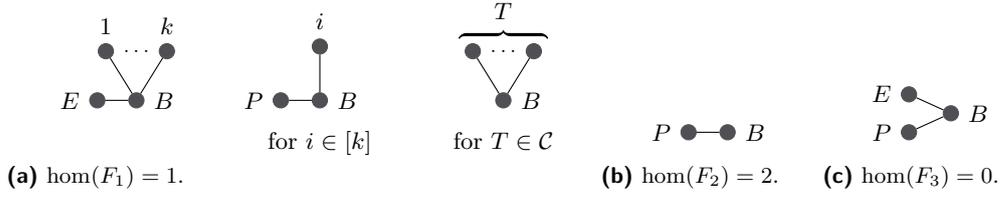

	Given a partition of $S$ into sets $S_1$ and $S_2$
	such that no subset in $\CC$ is entirely contained in either
	$S_1$ or $S_2$, the graph $G_{S_1, S_2}$ in \Cref{fig:hardnesscolouredConstantGraph}
	satisfies all constraints.
	Conversely, let $G$ be a coloured graph that satisfies all constraints.
	By the constraint $\hom(F_2) = 2$, the graph $F_2$ occurs exactly
	twice as a subgraph in $G$;
    call these occurrences $G_1$ and $G_2$.
    Let $S_1 \subseteq [k]$
    be the set of all $i \in [k]$
    such that a vertex of colour $i$ is connected to the $B$-vertex of $G_1$.
	If the $B$-vertex in $G_2 \subseteq G$
	is distinct from the $B$-vertex in $G_1 \subseteq G$,
    then let $S_2 \subseteq [k]$
    be the set of all $i \in [k]$
    such that a vertex of colour $i$ is connected to the $B$-vertex of $G_2$;
	otherwise, let $S_2 \coloneqq \emptyset$.
	The first constraint yields that the $i$-$B$-$P$-graph
	occurs exactly once as a subgraph of $G$ for every $i \in [k]$.
	By the second constraint, every $i$-$B$-$P$ graph
	has to be a supergraph of one of the two occurrences $G_1$ and $G_2$ of $F_2$.
	Hence, the sets $S_1$ and $S_2$ cover $S$.
	Moreover, as every $i$-$B$-$P$ graph occurs exactly once as a subgraph
	of $G$, the sets $S_1$ and $S_2$ have to be a partition of $S$.
	Finally, by the first constraint,
	the $1$-$k$-$B$-$E$-graph is a subgraph of~$G$.
	Observe that, for every $T \in \CC$,
	the set $T$ cannot be a subset of either $S_1$ or $S_2$
	as the $T$-$B$-graph
	occurs exactly once in $G$ by the first constraint
	and it already is a subgraph of the $1$-$k$-$B$-$E$-graph
	whose $B$-vertex has to be distinct from the $B$-vertices
	of the occurrences of $F_2$ by the third constraint.
\end{proof}

\begin{figure}
	\centering
	\begin{tikzpicture}[node distance = 0.5cm]
		\node[vertex, label={right:\small$B$}] (everyB) {};
		\node[vertex, label={left:\small$E$}, left = 0.3cm of everyB] (everyE) {};
		\node[vertex, label={above:\small$1$}, above left = 0.5cm and 0.25cm of everyB] (every1) {};
		\node[vertex, label={above:\small$k$}, above right = 0.5cm and 0.25cm of everyB] (everyk) {};
		\node[] (everydots) at ($(every1)!0.5!(everyk)$) {\scriptsize$\ldots$};
		\draw
		(everyB) edge (everyE)
		edge (every1)
		edge (everyk)
		;

		\node[vertex, label={right:\small$B$}, right = 2.5cm of everyB] (partOneB) {};
		\node[vertex, label={left:\small$P$}, left = 0.3cm of partOneB] (partOneP) {};
		\node[vertex, above left = 0.5cm and 0.25cm of partOneB] (partOne1) {};
		\node[vertex, above right = 0.5cm and 0.25cm of partOneB] (partOnek) {};
		\node[] (partOnedots) at ($(partOne1)!0.5!(partOnek)$) {\scriptsize$\ldots$};
		\draw[very thick, decorate, decoration = {calligraphic brace}]
		([xshift = -2pt, yshift = 5pt]partOne1.west)
		-- node[above, yshift = 3pt] {\small$S_1$}
		([xshift = 2pt, yshift = 5pt]partOnek.east)
		;
		\draw
		(partOneB) edge (partOneP)
		edge (partOne1)
		edge (partOnek)
		;

		\node[vertex, label={right:\small$B$}, right = 2.5cm of partOneB] (partTwoB) {};
		\node[vertex, label={left:\small$P$}, left = 0.3cm of partTwoB] (partTwoP) {};
		\node[vertex, above left = 0.5cm and 0.25cm of partTwoB] (partTwo1) {};
		\node[vertex, above right = 0.5cm and 0.25cm of partTwoB] (partTwok) {};
		\node[] (partTwodots) at ($(partTwo1)!0.5!(partTwok)$) {\scriptsize$\ldots$};
		\draw[very thick, decorate, decoration = {calligraphic brace}]
		([xshift = -2pt, yshift = 5pt]partTwo1.west)
		-- node[above, yshift = 3pt] {\small$S_2$}
		([xshift = 2pt, yshift = 5pt]partTwok.east)
		;
		\draw
		(partTwoB) edge (partTwoP)
		edge (partTwo1)
		edge (partTwok)
		;
	\end{tikzpicture}
	\caption{The graph $G_{S_1, S_2}$ constructed from $S_1$ and $S_2$ in
		the proof of
		\Cref{th:hardnesscolouredConstant}.}
	\label{fig:hardnesscolouredConstantGraph}
\end{figure}
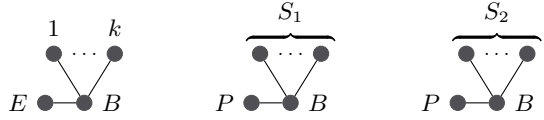

Note the following curiosity of
\Cref{th:hardnesscolouredConstant}:
the numbers in the constraints produced by the reduction are constant, i.e.\
they do not depend on the specific problem instance given as an input.
Hence, the hardness solely lies in the graphs and not the homomorphism numbers;
more specifically,
there is only a single constraint graph that depends on the input instance
and is the cause of hardness.

We again use Kneser graphs to turn this reduction into one
for uncoloured graphs.
We view the colours as numbers and use gadgets of Kneser graphs
that encode these numbers in binary.
This allows us to argue that the reduction is still correct,
where in particular, we have to argue that, if there is a
graph $G$ satisfying all constraints, then we can extract
a solution to the given $\SetSplitting$ instance from it;
this is not straightforward since such a graph $G$
does not have to adhere to our encoding of coloured graphs.
This then yields the following theorem,
which implies \Cref{thm:bounded-number-constraints}.
The proof can be found in \Cref{sec:uncolouredThreeConstraints}.

\begin{theorem}
	\label{th:ConstantEncoding}
    There is a class of graphs $\CF$ of bounded treewidth such that
	$\SetSplitting \le_p \FHomROf{\CF}$ and $\SetSplitting \le_p \BoundedFHomROf{\CF}$,
    where the number of constraints,
    the homomorphism numbers, and
    all constraint graphs but one are constant.
\end{theorem}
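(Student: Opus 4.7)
The plan is to lift the coloured-stars reduction of \cref{th:hardnesscolouredConstant} to uncoloured graphs while preserving bounded treewidth and the condition that all but one constraint graph, all homomorphism counts, and the number of constraints remain constant. The idea, hinted at just before the theorem statement, is to replace every colour by a gadget built from a fixed, constant-size family of Kneser graphs that encodes the colour as a binary integer. Because only a constant number of distinct Kneser graphs is used across the entire reduction, their orders do not depend on the input, so the reduction runs in polynomial time and the gadgets contribute only a bounded amount to the treewidth.

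Concretely, I would fix a small family of Kneser graphs $K^{(B)}, K^{(E)}, K^{(P)}, K^{(0)}, K^{(1)}$ whose parameters are chosen so that they are pairwise \emph{rigid}: no homomorphism maps one of them nontrivially into another. Each coloured vertex in the reduction of \cref{th:hardnesscolouredConstant} is then replaced by an anchor vertex linked to the appropriate gadget. For the three fixed colours $B$, $E$, $P$ the gadget is just the corresponding single Kneser graph; for a variable colour $i \in [k]$ it is a ``binary ladder'' of length $t = \lceil \log_2 (k+1) \rceil$ built on a path backbone whose $j$-th rung is $K^{(0)}$ or $K^{(1)}$ depending on the $j$-th bit of~$i$. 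Since every gadget is a bounded-treewidth object attached to a single anchor of an underlying star, the whole produced class $\CF$ stays of bounded treewidth. The constraint graphs $F_2$ and $F_3$ from the coloured reduction only use the fixed colours $B, E, P$, so after encoding they become fixed uncoloured graphs; only $F_1$, which contains the input-dependent $T$-components for $T \in \CC$, still depends on the input.

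The main obstacle, and the heart of the proof, is the backward direction: given an uncoloured graph $G$ realising the three encoded constraints $\hom(F_1) = 1$, $\hom(F_2) = 2$, $\hom(F_3) = 0$, one must extract a partition of $S$. One direction is essentially by construction, applying the encoding to the graph $G_{S_1, S_2}$ of \cref{fig:hardnesscolouredConstantGraph}. For the other direction, the rigidity of the Kneser family must force any homomorphism from a colour-$i$ gadget into a colour-$j$ gadget inside $G$ to satisfy $i = j$; this is achieved by picking Kneser parameters whose chromatic and clique numbers pairwise separate the gadgets and, in combination with the global homomorphism count being equal to~$1$, exclude all ``cross'' mappings between distinct gadget types. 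Once this rigidity is in place, the three constraints play exactly the roles they played in the coloured case: the two subgraph copies of $F_2$ identify two \emph{partition-anchors}, $F_3$ forces the unique \emph{every-anchor} to be distinct from both, and $\hom(F_1) = 1$ pins down every set-gadget, every partition-element gadget, and the full element-list structure. Reading off the neighbourhoods of the two partition-anchors then yields $S_1$ and $S_2$ exactly as in the proof of \cref{th:hardnesscolouredConstant}.

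Finally, for $\BoundedFHomROf{\CF}$ I would take the coloured size bound $2k + 6$ and multiply it by a polynomial in $k$ that upper-bounds the total number of vertices in a single colour gadget, obtaining a polynomial-in-$k$ size bound that can be supplied in unary. The resulting reduction runs in polynomial time, produces three constraints $\hom(F_1) = 1$, $\hom(F_2) = 2$, $\hom(F_3) = 0$ with constant homomorphism numbers and constant graphs $F_2, F_3$, and uses only the single input-dependent constraint graph $F_1$, yielding \cref{th:ConstantEncoding} and hence \cref{thm:bounded-number-constraints}.
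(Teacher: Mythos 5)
Your overall strategy is the paper's: encode colours by gadgets built from a constant family of pairwise homomorphically incomparable Kneser graphs (binary ladders for the $k$ element-colours), so that only the encoded $F_1$ depends on the input. However, there are two concrete gaps. First, the claim that the encoded constraints remain $\hom(F_1)=1$, $\hom(F_2)=2$, $\hom(F_3)=0$ is false. Every connected non-bipartite Kneser graph $K(r,s)$ (which needs $s\ge 2r+1$) has a nontrivial vertex stabiliser in its automorphism group, so each Kneser copy attached in a constraint graph contributes a multiplicative factor equal to its number of tip-stabilising automorphisms. The right-hand sides must be rescaled by the product of these factors over all gadgets occurring in the constraint graph (as the paper does); "rigidity" in your sense of homomorphic incomparability does not give automorphism-freeness, and no choice of Kneser parameters can.

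Second, and more importantly, your backward direction is not justified. The rigidity/chromatic-number argument only controls homomorphisms between \emph{well-formed gadgets}, but a graph $G$ realising the constraints is an arbitrary uncoloured graph and need not be assembled from gadgets at all: it could contain non-injective homomorphic images of the Kneser graphs, folded ladders, or other structure that still happens to meet the prescribed counts, and then reading off "partition-anchors" and neighbourhoods is meaningless. The paper closes exactly this hole by adding a further constant batch of constraints $\hom(F)=0$ for one representative $F$ of every isomorphism type of non-injective homomorphic image of the employed Kneser graphs (a finite, instance-independent list), which forces every Kneser copy to embed injectively into $G$; combined with the long connecting paths of length $2\ell$ (which you also omit), this pins homomorphisms to be tip-stabilising and gadget-preserving, after which the extraction of $S_1,S_2$ goes through as in \cref{th:hardnesscolouredConstant}. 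These extra constraints are compatible with the theorem statement since they are constantly many, with constant graphs and the constant number $0$, but without them (or an equivalent mechanism) your proof of the backward direction does not go through. Your choice of a single dedicated Kneser graph for each of $B$, $E$, $P$ is a reasonable refinement that keeps the encoded $F_2$ and $F_3$ genuinely constant, but it does not remove either of the two issues above.
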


\subsection{$\NP$-Hardness for a Finite Set of Graphs}
\label{sec:finiteGraphSet}

The previous sections shows that restricting the constraint graphs
to be from a class $\CF$ of bounded treewidth and the number of
constraints to a constant is not enough to achieve tractability:
both $\FHomR$ and $\BoundedFHomR$ remain $\NP$-hard.
What happens if we go even further and consider a finite class $\CF$?
Then, the treewidth of $\CF$ is trivially bounded and so is the number of constraints.
At first glance, hardness
in this case seems unlikely since the reductions presented in the previous sections
rely heavily on encoding the input instance as the constraint graphs
and only used small constants for the homomorphism numbers.
Now, the input essentially consists just of homomorphism
numbers encoded in binary and, for $\BoundedFHomR$,
also the size of the desired graph encoded in unary.
This makes it all the more surprising that we can actually prove the
$\NP$-hardness of $\FHomR$.
We reduce from the following decision problem,
which only takes three natural numbers in binary encoding as input
and is $\NP$-complete \cite[Theorem~1]{manders_npcomplete_1976}:
\dproblem{$\QuadraticPolynomial$}
{Natural numbers $a$, $b$, and $c$ in binary encoding.}
{Are there natural numbers $x$ and $y$ such that $ax^2 + by = c$?}

The idea of the reduction is simple:
we encode the polynomial $ax^2 + by$ as a coloured star~$F$
from which we require exactly $c$ homomorphisms.
This star has a leaf of colour $A$ and two leaves of colour $X$
to encode the monomial $ax^2$.
Furthermore, it has a leaf of colour $B$ and of colour $Y$
to encode the monomial $by$.
Then, the sum $ax^2 + by$ is realised by encoding $x$ and $y$
as two separate components of $G$---additional
constraints are used to enforce that~$G$ has precisely two components,
that the first component has exactly $a$ leaves of colour $A$,
and that the second component has exactly $b$ leaves of colour $B$.

\newcommand{\pictoColouredEdgeOf}[2]{\hspace{-3pt}
    \tikz[baseline=-3pt]{
        \node[smallvertex, label={[label distance = -2pt]right:\footnotesize$#2$}] (R) {};\node[smallvertex, label={[label distance = -2pt]left:\footnotesize$#1$}, left = 0.1cm of R] (A) {};\draw (A) edge (R); }
    \hspace{-2pt}
}

\begin{theorem}
	\label{th:hardnesscolouredFixed}
	There is a finite set $\CF$ of coloured stars such that
    $\QuadraticPolynomial \le_p \FHomROf{\CF}$.
\end{theorem}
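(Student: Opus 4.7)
The plan is to encode the polynomial $ax^2 + by$ as the number of homomorphisms from a fixed coloured star $F$ into a carefully constrained graph $G$. We may assume $a, b \ge 2$: the cases $a \in \{0,1\}$ or $b \in \{0,1\}$ are polynomial-time decidable instances of \QuadraticPolynomial (e.g.\@ by iterating over $x$ with $a x^2 \le c$, or by direct modular reasoning), which the reduction absorbs by outputting a fixed trivially yes- or no-instance. Fix colours $C, A, B, X, Y, M, N$. The intended graph $G = G_1 \sqcup G_2$ has two vertices of colour $C$: $v_1$, the centre of $G_1$, with $a$ leaves of colour $A$, $x$ leaves of colour $X$, and one leaf each of colours $B, Y, M$; and $v_2$, the centre of $G_2$, with one leaf each of colours $A, X, N$, together with $b$ leaves of colour $B$ and $y$ leaves of colour $Y$. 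Taking $F$ to be the star with centre of colour $C$ and leaves coloured $A, X, X, B, Y$, summing the monomial $d_A(v) d_X(v)^2 d_B(v) d_Y(v)$ over the two $C$-vertices gives $\hom(F, G) = a x^2 + b y$, so the constraint $\hom(F) = c$ captures the polynomial equation.

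The bulk of the argument is to force any adversarial $G$ satisfying the constraints to take this shape. First, I would impose $\hom(K_1^C) = 2$, $\hom(K_2^{C,A}) = a+1$, $\hom(S_{C;A,A}) = a^2+1$, $\hom(K_2^{C,B}) = b+1$, $\hom(S_{C;B,B}) = b^2+1$, and $\hom(S_{C;A,B}) = a+b$, where $K_2^{C,U}$ denotes the coloured edge and $S_{C;U_1, \dots, U_k}$ the coloured star with centre of colour $C$ and leaves coloured $U_1, \dots, U_k$. A short Vieta-style computation on the resulting elementary symmetric sums forces $\{d_A(v_1), d_A(v_2)\} = \{a, 1\}$ and $\{d_B(v_1), d_B(v_2)\} = \{b, 1\}$ as multisets; the cross-constraint then rules out the misaligned pairing $(a, b), (1, 1)$ since $ab + 1 \ne a + b$ whenever $a, b \ge 2$. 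Hence, without loss of generality, $(d_A(v_1), d_B(v_1)) = (a, 1)$ and $(d_A(v_2), d_B(v_2)) = (1, b)$.

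The hard part will be to force $d_Y(v_1) = 1$ and $d_X(v_2) = 1$ without pinning down the free quantities $d_X(v_1)$ and $d_Y(v_2)$: otherwise $\hom(F, G) = a d_X(v_1)^2 d_Y(v_1) + b d_X(v_2)^2 d_Y(v_2)$ would take the looser form $a \alpha + b \beta$ for arbitrary $\alpha, \beta \in \NN$, degenerating to a tractable linear Diophantine problem rather than \QuadraticPolynomial. I would resolve this by introducing two marker colours $M$ and $N$. The constraints $\hom(K_1^M) = \hom(K_2^{C,M}) = 1$ produce a unique $M$-vertex joined to a unique $C$-vertex $v^*$ via a single edge; $\hom(S_{C;M,A}) = a$ then forces $d_A(v^*) = a$ and hence $v^* = v_1$ (using $a \ne 1$); finally $\hom(S_{C;M,Y}) = d_M(v_1) d_Y(v_1) + d_M(v_2) d_Y(v_2) = d_Y(v_1) = 1$ pins down $d_Y(v_1) = 1$. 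A symmetric set of four constraints involving $N$ identifies $v_2$ and forces $d_X(v_2) = 1$ in the same way.

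Combined with $\hom(F) = c$, the existence of a graph $G$ satisfying all constraints is then equivalent to the existence of nonnegative integers $x \coloneqq d_X(v_1)$ and $y \coloneqq d_Y(v_2)$ with $a x^2 + b y = c$; conversely, any such pair is realised by the intended $G$. The finite set $\CF$ comprises the constantly many coloured stars used above---$K_1^C$, $K_2^{C,A}$, $K_2^{C,B}$, $K_1^M$, $K_2^{C,M}$, $K_1^N$, $K_2^{C,N}$, $S_{C;A,A}$, $S_{C;B,B}$, $S_{C;A,B}$, $S_{C;M,A}$, $S_{C;M,Y}$, $S_{C;N,B}$, $S_{C;N,X}$, and $F = S_{C;A,X,X,B,Y}$---independently of the input $(a, b, c)$.
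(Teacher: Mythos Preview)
Your approach mirrors the paper's: both encode $ax^2+by$ via the coloured star with centre colour $R$ (your $C$) and leaves $A,X,X,B,Y$, split $G$ into two $R$-centred components, and use two marker colours (the paper's $M_1,M_2$; your $M,N$) to anchor which component carries the $a$-data and which the $b$-data. The paper is a little leaner: it separates the markers with the single constraint $\hom(S_{R;M_1,M_2})=0$ and then forces $d_B(v_1)=d_Y(v_1)=d_A(v_2)=d_X(v_2)=1$ directly via $\hom(S_{R;M_1,B,Y})=1$ and $\hom(S_{R;M_2,A,X})=1$, so the edge constraints give $d_A(v_1)=a$ and $d_B(v_2)=b$ without any Vieta calculus or case assumption on $a,b$.

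There is a genuine gap in your handling of $a=1$. You identify the $M$-marked vertex through $\hom(S_{C;M,A})=a$, which only pins it down when $a\neq 1$, and you dispose of $a=1$ by claiming it is polynomial-time decidable ``by iterating over $x$ with $ax^2\le c$''. For $a=1$ that is $\Theta(\sqrt{c})$ iterations, exponential in the input length; indeed, for $c\ge b^2$ the question $x^2+by=c$ over $\NN$ is equivalent to whether $c$ is a quadratic residue modulo $b$, which is not known to lie in $\Poly$. Worse, without disposing of $a=1$ your backward direction actually breaks: nothing prevents $M$ and $N$ from sitting on the \emph{same} $C$-vertex, and one checks that for $a=1$, $b\ge 2$ all fifteen of your constraints are then satisfiable whenever $c\ge b$. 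Concretely, the no-instance $(a,b,c)=(1,3,5)$ of \QuadraticPolynomial passes your constraints with $d_A=d_B=d_X=1$, $d_Y=2$ at the unmarked vertex and $d_A=d_X=d_Y=d_M=d_N=1$, $d_B=3$ at the marked one. The fix is a one-line addition: impose $\hom(S_{C;M,N})=0$, exactly the paper's separation device; with it your Vieta constraints $\hom(S_{C;A,A})=a^2+1$, $\hom(S_{C;B,B})=b^2+1$, $\hom(S_{C;A,B})=a+b$ and the assumption $a,b\ge 2$ all become unnecessary.
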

\begin{proof}
	We use the colours $R, A, X, B, Y, M_1$ and $M_2$.
	The main observation is that, for the graphs $F_\mathsf{poly}$ and $G_{a,b,x,y}$
	from \Cref{fig:hardnesscolouredFixed},
	we have $\hom(F_\mathsf{poly}, G_{a,b,x,y}) = a x^2 + by$.
	Note that, however, the size of $G_{a,b,x,y}$ is not polynomial in $\log a + \log b + \log c$, i.e.\
	in the size of an instance $(a,b,c)$ of \QuadraticPolynomial.
	Given an instance $(a,b,c)$ of \QuadraticPolynomial, we produce the following constraints
    and denote the set of all coloured graphs used in these constraints by $\CF$;
	note that $\CF$ is independent of the instance $(a,b,c)$:
	\begin{multicols}{3}
		\begin{alphaenumerate}
\item\label{hcf9} $\hom(\pictoColouredEdgeOf{A}{R}) = a + 1$,
\item\label{hcf11} $\hom(\pictoColouredEdgeOf{B}{R}) = b + 1$,
			\item\label{hcf1} $\hom(F_\mathsf{poly}) = c$,
			\item\label{hcf2} $\hom(\tikz[baseline=-3pt]{\node[smallvertex, label={[label distance = -2pt]right:\footnotesize$R$}] (R) {};}\hspace{-2pt}) = 2$,
			\item\label{hcf3} $\hom(\pictoColouredEdgeOf{M_1}{R}) = 1$,
			\item\label{hcf4} $\hom(\pictoColouredEdgeOf{M_2}{R}) = 1$,
			\item\label{hcf5} $\hom\Bigl(\hspace{-3pt}\tikz[baseline=-2.5pt]{
				\node[smallvertex, label={[label distance = -2pt]right:\footnotesize$R$}] (R) {};\node[smallvertex, label={[label distance = -2pt]left:\footnotesize$M_1$}, above left = 0.04cm and 0.12cm of R] (M) {};\node[smallvertex, label={[label distance = -2pt]left:\footnotesize$M_2$}, below left = 0.04cm and 0.12cm of R] (MT) {};\draw (M) edge (R)
				(MT) edge (R);
			}\hspace{-2pt}\Bigr) = 0$,
\item\label{hcf6} $\hom\Bigl(\hspace{-3pt}\tikz[baseline=2pt]{
				\node[smallvertex, label={[label distance = -2pt]right:\footnotesize$R$}] (R) {};\node[smallvertex, label={[label distance = -2pt]left:\footnotesize$M_1$}, left = 0.1cm of R] (M) {};\node[smallvertex, label={[yshift=3pt, label distance = -2pt]left:\footnotesize$B$}, above left = 0.12cm and 0.04cm of R] (B) {};\node[smallvertex, label={[yshift=3pt, label distance = -2pt]right:\footnotesize$Y$}, above right = 0.12cm and 0.04cm of R] (Y) {};\draw (R) edge (M)
				edge (B)
				edge (Y);
			}\hspace{-2pt}\Bigr) = 1$,
			\item\label{hcf7} $\hom\Bigl(\hspace{-3pt}\tikz[baseline=2pt]{
				\node[smallvertex, label={[label distance = -2pt]right:\footnotesize$R$}] (R) {};\node[smallvertex, label={[label distance = -2pt]left:\footnotesize$M_2$}, left = 0.1cm of R] (M) {};\node[smallvertex, label={[yshift=3pt, label distance = -2pt]left:\footnotesize$A$}, above left = 0.12cm and 0.04cm of R] (B) {};\node[smallvertex, label={[yshift=3pt, label distance = -2pt]right:\footnotesize$X$}, above right = 0.12cm and 0.04cm of R] (Y) {};\draw (R) edge (M)
				edge (B)
				edge (Y);
			}\hspace{-2pt}\Bigr) = 1$.
\end{alphaenumerate}
	\end{multicols}

	If $(a,b,c)$ is an instance of $\QuadraticPolynomial$
    with $x,y \in \NN$ such that $ax^2 + by = c$,
	then $G_{a,b,x,y}$
    from \Cref{fig:hardnesscolouredFixedGraph} satisfies all these constraints.
	Conversely, if there is a coloured graph $G$ that satisfies all constraints,
	we know by (\ref{hcf2})--(\ref{hcf5}) that there are two $R$-coloured vertices $v_1$ and $v_2$ such that
	$v_1$ is connected to an $M_1$-coloured vertex but not an $M_2$-coloured vertex
	and $v_2$ is connected to an $M_2$-coloured vertex but not an $M_1$-coloured vertex.
	By (\ref{hcf6}) and (\ref{hcf7}), $v_1$ has exactly one $B$-coloured neighbor and exactly one $Y$-coloured neighbor
    and $v_2$ has exactly one $A$-coloured neighbor and exactly one $X$-coloured neighbor.
    Hence, by (\ref{hcf9}) and (\ref{hcf11}), $v_1$ has exactly $a$ neighbors of colour $A$
    and $v_2$ has exactly $b$ neighbors of colour $B$.
	Let $x$ be the number of $X$-coloured neighbors of $v_1$
	and $y$ be the number of $Y$-coloured neighbors of $v_2$.
	Then, we have $c = \hom(F_{\mathsf{poly}}, G) = ax^2 + by$.
\end{proof}
	\begin{figure}
	\centering
	\begin{subfigure}{0.3\textwidth}
		\centering
		\begin{tikzpicture}[node distance = 0.5cm]
			\node[vertex, label={right:\small$R$}] (R) {};
			\node[vertex, label={below:\small$A$}, below left = 0.5cm and 0.25cm of R] (A) {};
			\node[vertex, label={below:\small$X$}, below right = 0.5cm and -0.15cm of R] (X1) {};
			\node[vertex, label={below:\small$X$}, below right = 0.5cm and 0.25cm of R] (X2) {};
			\node[vertex, label={above:\small$B$}, above left = 0.5cm and 0.25cm of R] (B) {};
			\node[vertex, label={above:\small$Y$}, above right = 0.5cm and 0.25cm of R] (Y) {};
			\draw
			(R) edge (A)
			edge (X1)
			edge (X2)
			edge (B)
			edge (Y)
			;

\draw[white, pen colour = {white}, very thick, decorate, decoration = {calligraphic brace}]
			([xshift = 2pt, yshift = -16pt]X2.east)
			-- node[below, yshift = -3pt] {$b$}
			([xshift = -2pt, yshift = -16pt]A.west)
			;
		\end{tikzpicture}
		\caption{The graph $F_\mathsf{poly}$.}
		\label{fig:hardnesscolouredFixedConstruction}
	\end{subfigure}
	\begin{subfigure}{.6\textwidth}
		\centering
		\begin{tikzpicture}
			\node[vertex, label={right:\small$R$}] (leftR) {};
			\node[vertex, label={below:\small$X$}, below right = 0.5cm and 0.1cm of leftR] (leftX1) {};
			\node[vertex, label={below:\small$X$}, below right = 0.5cm and 0.7cm of leftR] (leftXx) {};

			\node[vertex, label={below:\small$A$}, below left = 0.5cm and 0.7cm of leftR] (leftA1) {};
			\node[vertex, label={below:\small$A$}, below left = 0.5cm and 0.1cm of leftR] (leftAa) {};

			\node[vertex, label={above:\small$Y$}, above right = 0.5cm and 0.25cm of leftR] (leftY) {};
			\node[vertex, label={above:\small$B$}, above left= 0.5cm and 0.25cm of leftR] (leftB) {};
			\node[vertex, label={left:\small$M_1$}, left = 0.3cm of leftR] (leftM) {};

			\node[] (Xdots) at ($(leftX1)!0.5!(leftXx)$) {\scriptsize$\ldots$};
			\draw[very thick, decorate, decoration = {calligraphic brace}]
			([xshift = 2pt, yshift = -16pt]leftXx.east)
			-- node[below, yshift = -3pt] {$x$}
			([xshift = -2pt, yshift = -16pt]leftX1.west)
			;

			\node[] (Adots) at ($(leftA1)!0.5!(leftAa)$) {\scriptsize$\ldots$};
			\draw[very thick, decorate, decoration = {calligraphic brace}]
			([xshift = 2pt, yshift = -16pt]leftAa.east)
			-- node[below, yshift = -3pt] {$a$}
			([xshift = -2pt, yshift = -16pt]leftA1.west)
			;

			\draw
			(leftR) edge (leftY)
			edge (leftB)
			edge (leftM)
			edge (leftX1)
			edge (leftXx)
			edge (leftA1)
			edge (leftAa)
			;

			\node[vertex, label={right:$R$}, right = 4cm of leftR] (rightR) {};
			\node[vertex, label={below:$Y$}, below right = 0.5cm and 0.1cm of rightR] (rightX1) {};
			\node[vertex, label={below:$Y$}, below right = 0.5cm and 0.7cm of rightR] (rightXx) {};

			\node[vertex, label={below:$B$}, below left = 0.5cm and 0.7cm of rightR] (rightA1) {};
			\node[vertex, label={below:$B$}, below left = 0.5cm and 0.1cm of rightR] (rightAa) {};

			\node[vertex, label={above:$X$}, above right = 0.5cm and 0.25cm of rightR] (rightY) {};
			\node[vertex, label={above:$A$}, above left = 0.5cm and 0.25cm of rightR] (rightB) {};
			\node[vertex, label={left:$M_2$}, left = 0.3cm of rightR] (rightM) {};

			\node[] (Xdots) at ($(rightX1)!0.5!(rightXx)$) {\scriptsize$\ldots$};
			\draw[very thick, decorate, decoration = {calligraphic brace}]
			([xshift = 2pt, yshift = -16pt]rightXx.east)
			-- node[below, yshift = -3pt] {$y$}
			([xshift = -2pt, yshift = -16pt]rightX1.west)
			;

			\node[] (rightAdots) at ($(rightA1)!0.5!(rightAa)$) {\scriptsize$\ldots$};
			\draw[very thick, decorate, decoration = {calligraphic brace}]
			([xshift = 2pt, yshift = -16pt]rightAa.east)
			-- node[below, yshift = -3pt] {$b$}
			([xshift = -2pt, yshift = -16pt]rightA1.west)
			;

			\draw
			(rightR) edge (rightY)
			edge (rightB)
			edge (rightM)
			edge (rightX1)
			edge (rightXx)
			edge (rightA1)
			edge (rightAa)
			;

		\end{tikzpicture}
		\caption{The graph $G_{a,b,x,y}$ constructed from $a,b,x,y \in \mathbb{N}$.}
		\label{fig:hardnesscolouredFixedGraph}
	\end{subfigure}
	\caption{The most important graphs used in the reduction of \Cref{th:hardnesscolouredFixed}.}
	\label{fig:hardnesscolouredFixed}
\end{figure}
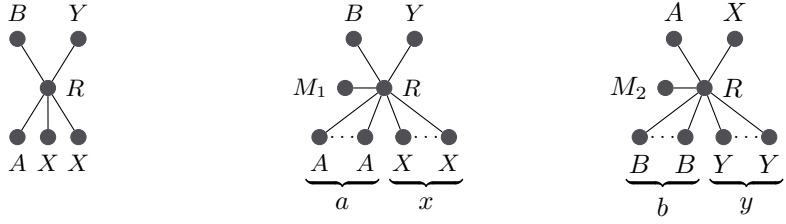

We can again use Kneser graphs to obtain a reduction for uncoloured graphs,
which implies the hardness stated in \Cref{thm:finite-set-hardness}.
The proof can be found in \Cref{sec:uncolouredFiniteSet}.

\begin{theorem}
    \label{th:hardnessFixed}
	There is a finite set $\CF$ of graphs such that
    $\QuadraticPolynomial \le_p \FHomROf{\CF}$.
\end{theorem}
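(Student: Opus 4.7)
The plan is to lift the coloured reduction of \Cref{th:hardnesscolouredFixed} to the uncoloured setting, using the Kneser graph gadgets introduced in \Cref{sec:kneserGraphs} to simulate colours. Crucially, the coloured construction uses only the constant finite palette $\{R, A, X, B, Y, M_1, M_2\}$, so once I fix, once and for all, a pairwise non-homomorphic family of Kneser graphs $K_R, K_A, K_X, K_B, K_Y, K_{M_1}, K_{M_2}$ (one per colour, each with a designated attachment vertex), the resulting uncoloured set $\CF$ will remain finite and independent of the input $(a,b,c)$.

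First I would define the $K$-expansion $F^K$ of a coloured graph $F$ by replacing each vertex $v$ of colour $c$ by a fresh copy of $K_c$ identified with $v$ at the attachment vertex, exactly as in the other uncoloured reductions of this section. The set $\CF$ then consists of the $K$-expansions of the finitely many coloured graphs appearing in \Cref{th:hardnesscolouredFixed}, together with the Kneser graphs $K_c$ themselves and a constant number of auxiliary graphs used to enforce global counts. Given an input $(a,b,c)$, the reduction outputs the $K$-expansions of the constraints \eqref{hcf9}--\eqref{hcf7}, with right-hand sides adjusted by the homomorphism-count identity for $K$-expansions, plus a constant number of housekeeping constraints of the form $\hom(K_c) = n_c$ that pin down the number of canonical $K_c$-subgraphs a realising graph must contain. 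The forward direction is then routine: given witnesses $(x,y)$ with $ax^2 + by = c$, the $K$-expansion of the coloured graph $G_{a,b,x,y}$ from \Cref{fig:hardnesscolouredFixedGraph} satisfies all produced constraints by construction.

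The main obstacle, handled analogously to \Cref{sec:uncolouredThreeConstraints,sec:uncolouredGeneralHardness}, is the backward direction: showing that every uncoloured $G$ satisfying the produced constraints can be decoded into a coloured graph $G'$ satisfying the original coloured constraints of \Cref{th:hardnesscolouredFixed}. Here I rely on the rigidity properties of the chosen Kneser graphs, namely that any homomorphism from $K_c$ into $G$ factors through a canonical embedded copy of $K_c$, and that copies of $K_c$ and $K_{c'}$ for $c \neq c'$ cannot overlap non-trivially since by construction there is no homomorphism between them. The housekeeping constraints force these canonical copies to appear in the exact expected multiplicities, yielding a well-defined partition of the relevant attachment vertices of $G$ into colour classes. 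Defining $G'$ as the coloured graph induced on these attachment vertices, each satisfied uncoloured constraint translates into the corresponding satisfied coloured constraint, at which point the decoding argument of \Cref{th:hardnesscolouredFixed} extracts $x$ and $y$ with $ax^2 + by = c$. The reduction runs in polynomial time because all Kneser gadgets have fixed size independent of the input, so $K$-expansion only increases the instance by a constant factor.
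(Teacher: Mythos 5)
Your overall strategy coincides with the paper's proof in \Cref{sec:uncolouredFiniteSet}: encode each colour of the reduction from \Cref{th:hardnesscolouredFixed} by a fixed Kneser graph from a pairwise homomorphically incomparable family, expand the finitely many coloured constraint graphs, rescale the right-hand sides by the tip-stabilising automorphism counts, and decode a witness graph $G$ back into the coloured setting. However, the mechanism you propose for the backward direction has a genuine gap. Your ``housekeeping constraints'' $\hom(K_c) = n_c$ cannot be produced for the colours $X$ and $Y$, since the number of $X$- and $Y$-vertices in the intended witness graph is $x+1$ resp.\ $y+1$ and depends on the unknown solution $(x,y)$. More importantly, even where such constraints can be written down, a prescribed \emph{positive} homomorphism count from $K_c$ does not yield the rigidity you invoke: the claim that ``any homomorphism from $K_c$ into $G$ factors through a canonical embedded copy of $K_c$'' is not a property of Kneser graphs but something that must be engineered, because an adversarial $G$ may contain proper homomorphic images (quotients) of $K_c$ --- Kneser graphs are cores, but their images under non-injective homomorphisms are smaller graphs that need not contain $K_c$ at all. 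The paper's actual device is different: it adds \emph{zero}-constraints $\hom(F) = 0$ for every isomorphism type $F$ in the finite set $\CI$ of non-injective homomorphic images of the chosen Kneser graphs. Only these force every homomorphism from a gadget into $G$ to be injective, after which the tip-stabilisation and colour-class arguments go through.

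Two further details are needed to make the expansion itself work. First, original edges must be replaced by long paths $P_{2\ell}$ (not merely vertices by Kneser blobs); otherwise adjacent gadgets can overlap and the core/incomparability arguments that pin tips to tips break down. Second, the coloured reduction contains a constraint on an \emph{isolated} $R$-vertex ($\hom(R)=2$); its expansion would be a bare copy of $K^R$ whose tip is not forced anywhere, so the automorphism bookkeeping fails. The paper fixes this by introducing an auxiliary colour $M$ and attaching an $M$-leaf to every $R$-vertex so that every gadget is anchored by a long path. With the zero-constraints on $\CI$ and these two adjustments, your outline becomes the paper's proof.
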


Can such a reduction also be used to prove $\NP$-hardness of $\BoundedFHomR$
for a finite class $\CF$?
This is not possible, since for a fixed graph $F$,
the number of homomorphisms from $F$ to a graph $G$
is polynomial in the order of $G$.
Hence, with a finite set $\CF$ of graphs and a graph $G$ of order up to $n$,
we can only realise polynomially many distinct homomorphism numbers,
while the hardness of solving the equation $ax^2 + by = c$ stems from the fact
that there is an exponential number of solution candidates.
This implies that $\BoundedFHomROf{\CF}$ is sparse for every finite $\CF$,
which means that it cannot be $\NP$-hard unless $\Poly = \NP$~\cite{mahaney_sparse_1982}.

\begin{theorem}
	If \BoundedFHomROf{\CF}\ is $\NP$-hard for a finite set of graphs $\CF$, then $\Poly = \NP$.
\end{theorem}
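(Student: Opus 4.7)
The plan is to apply Mahaney's theorem~\cite{mahaney_sparse_1982}, which asserts that no sparse language is $\NP$-hard under polynomial-time many-one reductions unless $\Poly = \NP$. Since the input to \BoundedFHomROf{\CF} is nominally a sequence in which the same constraint may be permuted or repeated, the language itself need not be sparse in the strict sense; however, after a straightforward canonicalisation it reduces to a provably sparse language, which is then enough to invoke Mahaney's theorem.

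First I would set $c \coloneqq \max_{F \in \CF} \abs{V(F)}$, a constant since $\CF$ is finite. For any graph $G$ with at most $n$ vertices and any $F \in \CF$, one has $\hom(F, G) \leq n^{\abs{V(F)}} \leq n^c$; hence in every yes-instance $((F_1, h_1), \dots, (F_m, h_m), n)$ each number $h_i$ is at most $n^c$. In polynomial time one can detect inconsistencies (two constraints $(F, h), (F, h')$ sharing the same graph but differing in the number) and delete duplicated constraints. Mapping an inconsistent input to a fixed canonical no-instance, and otherwise retaining exactly one representative per occurring $F \in \CF$, yields a polynomial-time many-one reduction from \BoundedFHomROf{\CF} to its restriction $L$ to canonical inputs in which each $F \in \CF$ appears in at most one constraint. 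If \BoundedFHomROf{\CF} is $\NP$-hard, then so is $L$.

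Second I would bound the number of yes-instances of $L$ of input length at most $N$. Because $n$ is encoded in unary, the input length is at least $n$, so $n \leq N$. A canonical yes-instance with fixed $n$ is determined by a subset of $\CF$ together with a value $h_F \in \{0, 1, \dots, n^c\}$ for each chosen $F$, giving at most $2^{\abs{\CF}}(n^c + 1)^{\abs{\CF}}$ possibilities. Summing over $n \leq N$ gives a polynomial bound on $\abs{L_{\leq N}}$, so $L$ is sparse. Since $L$ is simultaneously sparse and $\NP$-hard, Mahaney's theorem yields $\Poly = \NP$.

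The main obstacle is conceptual rather than technical: one must recognise that the sequence-based input format artificially inflates the yes-instance count through permutations and repetitions, so a direct sparsity count of \BoundedFHomROf{\CF} fails. Passing to the canonical quotient is a polynomial-time operation that preserves $\NP$-hardness while exposing the intrinsic sparsity stemming from the fact that, for finite $\CF$ and a unary size bound $n$, only polynomially many homomorphism-count vectors are realisable.
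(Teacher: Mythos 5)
Your proposal is correct and follows essentially the same route as the paper, which likewise derives sparsity from the bound $\hom(F,G)\leq n^{c}$ for finite $\CF$ and a unary size bound, and then invokes Mahaney's theorem. Your canonicalisation step (reducing to a quotient language where each $F\in\CF$ occurs at most once) is a careful refinement of a detail the paper glosses over, namely that the literal set of instance strings is inflated by permutations and repetitions of constraints.
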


\subsection{Subgraph Counts}
\label{sec:hardnessSubgraphCounts}

While subgraphs counts can be expressed as linear combinations
of homomorphism numbers via injective homomorphism numbers,
cf.\ \cite{curticapean_homomorphisms_2017},
adapting our reductions to subgraph counts is not as straightforward.
There is no obvious way of adapting the reduction of \Cref{th:NPCEPReduction}
since non-injectivity is crucial to encode colourability of graphs.
The reduction of \Cref{th:hardnesscolouredConstant}
can partially be salvaged by producing individual constraints
instead of only three constraint.
Then, all constraint graphs are \emph{colourful}, which means that
their subgraph counts are homomorphism counts,
which can be computed efficiently. However, the gadget construction used in
\Cref{th:ConstantEncoding} then produces constraint graphs
of unbounded vertex-cover number.
Hence, this reduction is uninteresting since
the determining factor for tractability of
subgraphs counts is the vertex-cover number~\cite{curticapean_complexity_2014a}.

The results for finite $\CF$ transfer to subgraphs
and are meaningful since subgraph counts can
trivially be computed in polynomial time in this case.
However, for subgraph counts, the reduction of \Cref{thm:finite-set-hardness}
encodes the binomial equation $a \binom{x}{2} + by = c$
instead of $ax^2 + by = c$.
Luckily, the problem $\BPoly$ of solving this equation is still $\NP$-complete,
cf.\ \Cref{sec:binomialEquations}.
Moreover, $\BoundedFSubR$ is still sparse for every finite class of graphs $\CF$.

\begin{theorem}
    \label{th:subgraphHardnessFixed}
	There is a finite set $\CF$ of graphs such that
    $\BPoly\le_p \FSubROf{\CF}$.
	If \BoundedFSubROf{\CF}\ is $\NP$-hard for a finite set of graphs $\CF$, then $\Poly = \NP$.
\end{theorem}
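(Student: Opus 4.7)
The plan is to establish both parts separately: the hardness via adapting the reduction of \Cref{th:hardnessFixed} to subgraph counts, and the sparseness via a straightforward counting argument.

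For the hardness claim, I would start from the same constraint star $F_\mathsf{poly}$ and target graph $G_{a,b,x,y}$ constructed in the proof of \Cref{th:hardnesscolouredFixed}, and trace through what changes when we count subgraph copies rather than homomorphisms. The only leaf of $F_\mathsf{poly}$ appearing with multiplicity greater than one is the $X$-coloured leaf, which has multiplicity two. Under $\sub$, these two $X$-leaves must be mapped to distinct vertices of $G$, and the automorphism swapping them must be factored out. Consequently, the contribution of the left component of $G_{a,b,x,y}$ (which has $a$ neighbours of colour $A$ and $x$ neighbours of colour $X$ below its $R$-vertex) becomes $a \binom{x}{2}$ rather than $a x^2$, while the right component contributes $by$ unchanged since $F_\mathsf{poly}$ has exactly one $B$-leaf and one $Y$-leaf. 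The remaining gadget constraints (\ref{hcf2})--(\ref{hcf7}) and the degree-type constraints (\ref{hcf9})--(\ref{hcf11}) work identically for subgraph counts, since the corresponding stars are colourful and a single isolated vertex or coloured edge is automorphism-rigid in the relevant sense. Hence the reduction encodes the equation $a\binom{x}{2} + by = c$, which is exactly $\BPoly$.

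To obtain uncoloured graphs as in \Cref{th:hardnessFixed}, I would reuse the Kneser-graph colour-encoding gadgets from \Cref{sec:uncolouredFiniteSet}. Since each colour gadget is chosen to have a rigid, predictable automorphism structure, $\sub$ and $\hom$ of the gadget-bearing constraint graphs differ only by a fixed, computable multiplicative factor (coming from automorphisms of the star backbone and of the Kneser gadgets). After dividing out these factors, one recovers the same correspondence between yes-instances of $\BPoly$ and yes-instances of $\FSubROf{\CF}$. The main technical obstacle here is book-keeping for the Kneser gadgets: one must verify that no spurious subgraph copies are created by embedding an $F_i$ into $G$ in a way that ``crosses'' between different colour gadgets, which uses the same intersection-pattern arguments (Erdős--Ko--Rado type) that underpin the original gadget construction in \Cref{sec:uncoloured}.

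For the sparseness claim, let $\CF$ be a finite set of graphs and set $k \coloneqq \max_{F \in \CF} \abs{V(F)}$. For any graph $G$ with $\abs{V(G)} \le n$ and any $F \in \CF$, we have $\sub(F, G) \le \binom{n}{k}$, so the vector $(\sub(F, G))_{F \in \CF} \in \NN^{\CF}$ takes at most $\binom{n}{k}^{\abs{\CF}} = n^{O(1)}$ distinct values. Since every yes-instance of $\BoundedFSubROf{\CF}$ of input length $L$ has its size parameter $n \le L$, the number of yes-instances of length at most $L$ is polynomial in $L$, so $\BoundedFSubROf{\CF}$ is sparse. By Mahaney's theorem~\cite{mahaney_sparse_1982}, if it were $\NP$-hard then $\Poly = \NP$, concluding the proof. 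The first part is the only one with real content; the second part is essentially the same argument as the sparseness of $\BoundedFHomROf{\CF}$ in \Cref{thm:finite-set-hardness}.
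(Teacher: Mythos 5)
Your sparseness argument and your treatment of the left component are fine, but there is a concrete error in the hardness part. You claim that under subgraph counting ``the right component contributes $by$ unchanged since $F_\mathsf{poly}$ has exactly one $B$-leaf and one $Y$-leaf.'' The relevant obstruction is not the $B$- and $Y$-leaves but the two $X$-leaves of $F_\mathsf{poly}$: a subgraph copy of $F_\mathsf{poly}$ requires two \emph{distinct} $X$-coloured neighbours of the root, and the right component of $G_{a,b,x,y}$ (as well as any graph $G$ obeying constraint~(\ref{hcf7}), which forces $v_2$ to have exactly one $X$-coloured neighbour) has only one. Hence the right component contributes $0$, not $by$, and the reduction you describe encodes $a\binom{x}{2} = c$ rather than $a\binom{x}{2} + by = c$. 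The former is decidable in polynomial time, so your reduction as written does not establish $\NP$-hardness. The fix is to modify the construction rather than merely reinterpret it: for instance, give the right component two $X$-coloured leaves and replace constraint~(\ref{hcf7}) by separate constraints pinning down that $v_2$ has exactly one $A$-coloured and exactly two $X$-coloured neighbours (a single combined count of $2$ is ambiguous between $(1,2)$ and $(2,1)$), so that copies of $F_\mathsf{poly}$ rooted at $v_2$ contribute exactly $1\cdot\binom{2}{2}\cdot b\cdot y = by$. With that change, and after composing with the Kneser-gadget encoding, one obtains the claimed reduction from \BPoly (up to replacing $a$ by $2a$ to match the normalisation $ax(x-1)+by=c$).

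A smaller remark: when you pass to uncoloured graphs, the correction factor relating $\sub$ of the encoded graphs to $\sub$ of the coloured originals is governed by automorphisms of the whole encoded constraint graph (including the swap of the two $X$-gadget branches), not just by tip-stabilising automorphisms of individual Kneser graphs as in the homomorphism setting; this bookkeeping should be made explicit, though it does not threaten correctness once the main construction is repaired.
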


\section{Towards Tractability: Reconstructing Clique Counts}

In this section, we show that \BoundedFHomR is tractable when the only constraint graph is a clique and the constraints are in a certain range.
The proofs rely on number-theoretic insights and tailored combinatorial constructions.
Using a number-theoretic result by Kamke~\cite{kamke_verallgemeinerungen_1921}, we show that for almost all sensible $n, h, k \in \mathbb{N}$ there exists a graph $G$ on slightly more than $n$ vertices containing $h$ copies of the $k$-vertex clique~$K_k$ as a subgraph.

\begin{theorem}[{Kamke~\cite{kamke_verallgemeinerungen_1921}, cf.\@ \cite[Theorem~11.10]{nathanson_elementary_2000}}] \label{thm:numbertheory}
	There exists a function $\gamma \colon \mathbb{N} \to \mathbb{N}$ such that for every $k \geq 1$ and $n \geq 1$ there exist $a_1, \dots, a_{\gamma(k)} \in \mathbb{N}$ such that $n = \sum_{i=1}^{\gamma(k)} \binom{a_i}{k}$.
\end{theorem}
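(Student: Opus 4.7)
The plan is to reduce the statement to the classical Hilbert--Waring theorem, which provides, for every $k \geq 1$, a constant $g(k)$ such that every non-negative integer is a sum of at most $g(k)$ non-negative $k$-th powers. The base case $k=1$ is trivial since $\binom{n}{1}=n$, so $\gamma(1) := 1$ works. For $k \geq 2$ I would argue by induction on $k$, assuming the existence of $\gamma(j)$ for all $j < k$.

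The key polynomial identity is
\[
  k!\binom{x+k-1}{k} \;=\; x(x+1)(x+2)\cdots(x+k-1) \;=\; x^k + Q_{k-1}(x),
\]
where $Q_{k-1} \in \mathbb{Z}[x]$ has degree $k-1$ (its coefficients are signed Stirling numbers of the first kind). Given $n \in \mathbb{N}$, apply Hilbert--Waring to $k! \cdot n$ to obtain $b_1,\dots,b_{g(k)} \in \mathbb{N}$ with $k! \cdot n = \sum_i b_i^k$. Substituting the identity yields
\[
  k! \cdot n \;=\; \sum_{i=1}^{g(k)} k!\binom{b_i+k-1}{k} \;-\; \sum_{i=1}^{g(k)} Q_{k-1}(b_i),
\]
so after dividing by $k!$ (and noting that the error is automatically divisible by $k!$) we obtain a representation of $n$ as a bounded-size sum of terms $\binom{b_i+k-1}{k}$ plus an integer correction $C$ whose magnitude is a polynomial of degree $k-1$ in the $b_i$.

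Next I would express $C$ as a bounded sum of $\binom{\cdot}{k}$-terms. Since the integer-valued polynomials of degree at most $k-1$ are spanned over $\mathbb{Z}$ by $\binom{x}{0},\binom{x}{1},\dots,\binom{x}{k-1}$, I can write $Q_{k-1}(x)/k!$ as a $\mathbb{Z}$-linear combination of these binomial coefficients and apply the inductive hypothesis to each of finitely many resulting $\binom{\cdot}{j}$ subsums. To lift a $\binom{a}{j}$-summand with $j < k$ up to $k$, I would iterate the column identity $\binom{a}{j}=\binom{a+1}{j+1}-\binom{a}{j+1}$; each lift only multiplies the number of terms by a constant.

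The main obstacle is the sign control: the correction $C$ can be negative, and the lifting identity introduces subtractions, so naively summing signed $\binom{\cdot}{k}$-terms does not yield a representation of the required form. I would address this in two stages. First, handle small values of $n$ (below a threshold $N_k$ depending only on $k$) separately using at most $N_k - 1$ copies of $\binom{k}{k}=1$, contributing only an additive constant to $\gamma(k)$. Second, for $n \geq N_k$, absorb negative contributions by splitting a single sufficiently large summand $\binom{A}{k}$ via Pascal's rule $\binom{A}{k} = \binom{A-1}{k} + \binom{A-1}{k-1}$, iterating to expose enough flexibility to cancel the negative terms while keeping the total count of summands bounded by a constant depending only on $k$. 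Finally, padding with $\binom{0}{k}=0$ ensures exactly $\gamma(k)$ terms, as required. This is essentially the structure of Kamke's original argument, carried out in detail in Nathanson's textbook, which we would invoke directly.
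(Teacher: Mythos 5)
The paper does not prove this statement at all: it is imported verbatim as a black-box result, attributed to Kamke and to Theorem~11.10 of Nathanson's textbook, and used only through the existence of the function $\gamma$. Your closing move---``invoke Nathanson directly''---is therefore exactly what the paper does, and at that level your treatment matches the paper. Your preceding sketch of how one \emph{would} prove it is also aimed in the right direction: reducing to the Hilbert--Waring theorem via the rising-factorial identity $k!\binom{x+k-1}{k}=x(x+1)\cdots(x+k-1)=x^k+Q_{k-1}(x)$ is indeed the classical route.

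Read as a self-contained argument, however, the sketch has a genuine gap precisely where you flag ``the main obstacle.'' After applying Hilbert--Waring to $k!\,n$ you obtain $n=\sum_i\binom{b_i+k-1}{k}-C$ with $C=\tfrac{1}{k!}\sum_i Q_{k-1}(b_i)\ge 0$, i.e.\ a \emph{surplus} that must be subtracted. The inductive hypothesis only produces representations of $C$ with non-negative summands, so it cannot directly cancel a surplus, and the proposed fix---iterating Pascal's rule on one large summand ``to expose enough flexibility''---is not an argument: each application of $\binom{A}{k}=\binom{A-1}{k}+\binom{A-1}{k-1}$ adds a term rather than removing one, and nothing in the sketch bounds the number of summands needed to absorb a correction $C$ that can be as large as a polynomial of degree $k-1$ in the $b_i$. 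The standard proofs handle this by controlling the sizes of the $b_i$ (so that the correction is $o(n)$) and running a careful secondary induction on $n$, and this absorption step is the actual mathematical content of Kamke's theorem. Since both you and the paper ultimately rest on the citation, this is not a defect relative to the paper, but the sketch should not be mistaken for a proof.
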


Ne\v{c}aev \cite{necaev_representation_1953} showed that $\gamma(k)$ is of order $O(k\log k)$ and gave a similar lower bound in~\cite{necaev_question_1976}. 
Specific values of $\gamma$ include $\gamma(1) = 1$, Gau{\ss}' Eureka Theorem, cf.\ \cite{ono_representation_1995}, stating $\gamma(2) = 3$ and the unproven Tetrahedral Numbers Conjecture of Pollock~\cite{pollock_extension_1851} asserting $\gamma(3) = 5$.

\thmcliques*

\begin{proof}
	Let $\gamma$ denote the function from \cref{thm:numbertheory}.
	For every fixed $k$, the proof is by induction on $n$. For $n \leq k$, the claim is trivial.
	Suppose subsequently that $n > k$.
	Inductively, we may suppose that there exists a graph~$G$ on $n-1 + \gamma(k-1) - 1$ vertices with $h \leq \binom{n-1}{k}$ copies of $K_k$. One may add an isolated vertex to obtain a graph on $n + \gamma(k-1)-1$ vertices with $h$ copies of $K_k$, as desired.
	
	Thus, it remains to construct a graph with $\binom{n-1}{k} < h \leq \binom{n}{k}$ copies of $K_k$ and $n+\gamma(k-1)-1$ vertices. Write $h' \coloneqq h - \binom{n-1}{k} \leq \binom{n-1}{k-1}$. 
	By \cref{thm:numbertheory}, there exist non-negative integers $a_1, \dots, a_{\gamma(k-1)}$ such that $h' = \sum_{i=1}^{\gamma(k-1)} \binom{a_i}{k-1}$. 
	It can be easily seen that $\binom{h}{k} > \binom{n}{k}$ for all integers $h > n \geq k \geq 1$.
	Hence, $\binom{a_i}{k-1} \leq h' \leq \binom{n-1}{k-1}$ implies that $a_i \leq n-1$ for all $1 \leq i \leq \gamma(k-1)$.
	
	Define the graph $G$ by taking the disjoint union of a clique $K_{n-1}$ and fresh vertices $v_1, \dots, v_{\gamma(k-1)}$.
	For $1 \leq i \leq \gamma(k-1)$, the vertex $v_i$ is connected to an arbitrary selection of $a_i$ many vertices of the clique. Note that this adds $\binom{a_i}{k-1}$ copies of $K_k$ to the graph. The resulting graph on $n-1 + \gamma(k-1)$ vertices satisfies $\sub(K_k, G) = \binom{n-1}{k} + \sum_{i=1}^{\gamma(k-1)} \binom{a_i}{k-1} = h$. For an example with $k=3$, see \cref{fig:clique-construct}.
\end{proof}

\begin{figure}
	\centering
\begin{tikzpicture}
	\node[vertex] (A) {};
	\node[vertex, below right = 0.3cm and 0.4cm of A] (B) {};
	\node[vertex, below right = 0.5cm and 0.1cm of B] (C) {};
	\node[vertex, below left = 0.5cm and 0.1cm of C] (D) {};
	\node[vertex, below left = 0.3cm and 0.4cm of D] (E) {};
	\node[vertex, left = 0.4cm of A] (A2) {};
	\node[vertex, left = 0.4cm of E] (E2) {};
	
	\node[label={center:$\ldots$}, left = 1.8cm of C] (Dots) {};

	\node[vertex, right = 1.5cm of C, label={right:$v_2$ with $\deg(v_2)=a_2$}] (Z1) {};
	\node[vertex, below = 0.5cm of Z1, label={right:$v_3$ with $\deg(v_3)=a_3$}] (Z2) {};
	\node[vertex, label={right:$v_1$ with $\deg(v_1)=a_1$}, above = 0.5cm of Z1] (Z3) {};
	
	\draw[very thick, decorate, decoration = {calligraphic brace}]
	([xshift = 5pt, yshift = -40pt]C.east)
	-- node[below, yshift = -3pt] {$K_{n-1}$}
	([xshift = -5pt, yshift = -40pt]Dots.west)
	;
	\draw
	(E2) edge (A2) 
	edge (A)
	edge (B)
	edge (C)
	edge (D)
	edge (E);
	\draw
	(A2) edge (A)
	edge (B)
	edge (C)
	edge (D)
	edge (E);
	\draw
	(A) edge (B)
	edge (C)
	edge (D)
	edge (E);
	\draw
	(B) edge (C)
	edge (D)
	edge (E);
	\draw
	(C)	edge (D)
	edge (E);
	\draw
	(D) edge (E);
	
	\draw
	(Z3) edge (A)
	edge (B)
	edge (C);

\end{tikzpicture}
\caption{
	Example for $k=3$, building a graph $G$ with $n-1 + \gamma(k-1)$ vertices and $\sub(\smallKthree,G) = h > \binom{n-1}{3}$. Since $\gamma(2)=3$, there are $a_1, a_2, a_3$ such that $h - \binom{n-1}{3} = \binom{a_1}{2}+\binom{a_2}{2}+\binom{a_3}{2}$. Connecting a fresh vertex $v_i$ with $a_i$ vertices from the $K_{n-1}$, adds $\binom{a_i}{2}$ subgraphs $\smallKthree$ to $G$.
}	
\label{fig:clique-construct}
\end{figure}
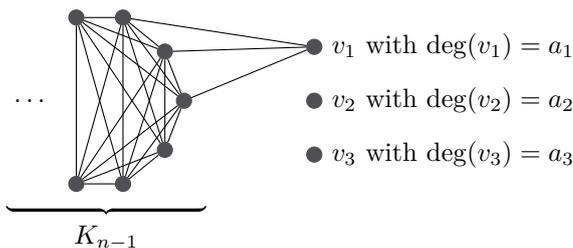

For the special case of $k = 3$, i.e.\@ $\smallKthree$, we can do slightly better than in \cref{thm:reconstruct-cliques}.
While \cref{thm:reconstruct-cliques} requires $\gamma(2)-1 = 2$ extra vertices to realise any sensible~$h$, we show in \cref{thm:triangles} that for large $n$ one additional vertex suffices. While \cref{thm:reconstruct-cliques} builds on an $(n-1)$-vertex clique to which new vertices and edges are added, for \cref{thm:triangles} we start with an $(n+1)$-vertex clique and remove edges to realise the precise subgraph count. The proof of \cref{thm:triangles} can be found in \cref{sec:triangles}.

\begin{restatable}{theorem}{thmtriangles} \label{thm:triangles}
	For every $n \geq 130$ and $h \leq \binom{n}{3}$, there is a graph $G$ on $n+1$ vertices such that $\sub(\smallKthree, G) = h$.
\end{restatable}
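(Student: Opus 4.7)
The plan is to split into two regimes depending on $h$.

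\emph{Easy regime: $h \leq \binom{n-1}{3}$.} Apply \cref{thm:reconstruct-cliques} with $k = 3$ and $n - 1$ in place of $n$. Since Gauss's Eureka Theorem gives $\gamma(2) = 3$, the resulting graph lives on $(n - 1) + \gamma(2) - 1 = n + 1$ vertices and satisfies $\sub(\smallKthree, G) = h$.

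\emph{Hard regime: $\binom{n-1}{3} < h \leq \binom{n}{3}$.} Write $r \coloneqq h - \binom{n-1}{3} \in (0, \binom{n-1}{2}]$. Following the hint preceding the statement, I seek $G$ on $n+1$ vertices obtained from $K_{n+1}$ by removing edges. The natural template has two distinguished vertices $v^*$ and $u$ together with a clique $K$ on the remaining $n - 1$ vertices; in $G$, the vertex $u$ is adjacent to some $A \subseteq V(K)$ of size $a$, the vertex $v^*$ is adjacent to some $B \subseteq V(K)$ of size $b$ with $|A \cap B| = c$, and we include the edge $u v^*$ iff $\delta = 1$. A direct triangle count yields
\[
    \sub(\smallKthree, G) = \binom{n-1}{3} + \binom{a}{2} + \binom{b}{2} + \delta c,
\]
subject to $a, b \in \{0, \ldots, n-1\}$ and $\max(0, a + b - (n-1)) \leq c \leq \min(a, b)$. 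The task reduces to writing each $r$ in the above range as $\binom{a}{2} + \binom{b}{2} + \delta c$ for some feasible tuple.

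\emph{Main obstacle.} A greedy decomposition---pick $a$ maximal with $\binom{a}{2} \leq r$, then $b$ maximal with $\binom{b}{2} \leq r - \binom{a}{2}$, and set $c = r - \binom{a}{2} - \binom{b}{2}$---gives $c < b \leq a$, compatible with the upper bound on $c$, but it meets the lower-bound constraint $c \geq a + b - (n-1)$ only when this constraint is slack. Slackness holds away from the top of the range; near $r = \binom{n-1}{2}$ the constraint can bind. In that narrow boundary regime I resort to a secondary template closer in spirit to \cref{thm:reconstruct-cliques}: take $G$ to consist of $K_{n-2}$ together with three extra vertices each attached to an appropriate number of clique vertices, and decompose the residual $h - \binom{n-2}{3}$ as a sum of three binomials of the form $\binom{\cdot}{2}$ via $\gamma(2) = 3$. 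The hypothesis $n \geq 130$ is used precisely to ensure that the numerical ranges realised by these two templates together exhaust $(0, \binom{n-1}{2}]$; carrying out this verification amounts to an elementary but delicate numerical case analysis, and constitutes the principal technical content of the proof.
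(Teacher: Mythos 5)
Your split into the two regimes and the easy regime itself are fine (and match the paper), and your triangle count for the two-apex template, $\binom{n-1}{3}+\binom{a}{2}+\binom{b}{2}+\delta c$ with the feasibility window $\max(0,a+b-(n-1))\le c\le\min(a,b)$, is correct. But the proof has a genuine gap: the entire content of the hard regime is the claim that every $r\in(0,\binom{n-1}{2}]$ is realisable by one of your two templates, and you do not establish it. Your greedy decomposition fails whenever $a+b>n-1$, and since the greedy $a$ is roughly $\sqrt{2r}$ while $b$ is roughly $\sqrt{2a}$, this happens for all $r\gtrsim\binom{n-1}{2}-c\,n^{3/2}$ --- a window of width $\Theta(n^{3/2})$, not a ``narrow boundary regime''. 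Your fallback for that window is the three-apex construction $K_{n-2}$ plus three pendant-attached vertices, requiring $h-\binom{n-2}{3}=\sum_{i=1}^{3}\binom{a_i}{2}$ with $a_i\le n-2$. Gauss's Eureka theorem gives \emph{some} decomposition into three triangular numbers but gives no control on the sizes $a_i$; the bounded-parts trick used in the proof of \cref{thm:reconstruct-cliques} (each $\binom{a_i}{k-1}\le h'\le\binom{n-1}{k-1}$ forces $a_i\le n-1$) is unavailable here because your residual exceeds $\binom{n-2}{2}$ precisely in the regime where you invoke the fallback (it ranges up to $\binom{n-1}{2}+\binom{n-2}{2}$). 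So the fallback, as stated, only covers $h\le\binom{n-2}{3}+\binom{n-2}{2}=\binom{n-1}{3}$, i.e.\ the easy regime you already handled, and the covering claim remains unproven. The role of $n\ge130$ is likewise asserted rather than derived.

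For contrast, the paper's hard regime is handled by a subtractive construction: $G$ is the complement of a forest on $n+1$ vertices. A counting lemma expresses $\sub(\smallKthree,\overline{F})$ in terms of the edge count, degree sequence, and triangle count of $F$; complements of paths with $p+1$ edges give values spaced exactly $n-2$ apart as $p$ varies, and a ``degree-splitting'' move on the tree's degree sequence changes the complement's triangle count by exactly $1$. Showing that at least $n-2$ consecutive splits are always possible (this is where $n\ge130$ enters, via a lower bound of order $n^{3/2}$ on the length of splitting chains) lets one walk in unit steps across each gap. If you want to salvage your additive route, you would need either a genuine interval-covering argument for $\binom{a}{2}+\binom{b}{2}+\delta c$ over the feasible triples (not just the greedy choice), or a size-controlled version of the three-triangular-numbers decomposition; neither is routine enough to leave as ``delicate numerical case analysis''.
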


\section{Parametrised Complexity}
\label{sec:parameterized}

For graph classes $\CF$ and $\CG$,
we consider the parametrised version \pFGHomR of the homomorphism reconstructability problem.
For an instance $(F_1, h_1), \dots, (F_m, h_m)$, the parameter is $k \coloneqq \sum_{i=1}^m \lvert V(F_i)\rvert$.
We aim for an fpt-algorithm, i.e.\@ an algorithm that runs in $f(k) \operatorname{polylog}(h_1, \dots, h_m)$ for some computable function $f$.
By \cref{th:hardnessFixed}, \pFGHomR is para-$\NP$-hard and thus we cannot expect to obtain an fpt-algorithm  unless $\Poly = \NP$, cf.\@ \cite[Corollary~2.13]{flum_parameterized_2006}, which means that we have to restrict the problem in some way.
Surprisingly, it turns out that a certain restriction of \pFGHomR and of the analogous \pFGSubR are in $\FPT$.
Curiously, in $\FPT$, one cannot even count homomorphisms or subgraphs from arbitrary graphs \cite{dalmau_complexity_2004,curticapean_complexity_2014a}.
Our algorithm has to make do with the integers from the input and cannot construct the graph $G$ explicitly.

Given constraint graphs $\mathcal{I} \subseteq \mathcal{F}$,
the overall strategy is to compute in time only depending on $\mathcal{I}$ a data structure representing the (infinite) set of all reconstructable vectors of homomorphism counts
\begin{equation} \label{eq:defR}
	R(\mathcal{I})  \coloneqq \left\{\hom(\mathcal{I}, G) \in \mathbb{N}^{\mathcal{I}} \mid G \in \mathcal{G} \right\}
\end{equation}
or analogously of subgraph counts.
This data structure is required to admit a polynomial-time procedure for testing whether a given vector $h \in \mathbb{N}^{\mathcal{I}}$ is in this set. 
We identify various combinatorial conditions sufficient for guaranteeing the feasibility of this approach.
To start with, we impose the following conditions on the graph classes $\mathcal{F}$ and $\mathcal{G}$:
\begin{proviso} \label{proviso:fpt}
\begin{romanenumerate}
	\item membership in $\mathcal{G}$ is decidable,\label{fpt1}
	\item $\mathcal{G}$ is closed under taking induced subgraphs,\label{fpt3}
	\item $\mathcal{G}$ is closed under disjoint unions,\label{fpt2}
	\item all $F \in \mathcal{F}$ are connected.\label{fpt4}
\end{romanenumerate}
\end{proviso}
Items (\ref{fpt2}) and (\ref{fpt4}) imply that the 
set $R(\mathcal{I})$ of all reconstructable vectors 
is closed under addition. Indeed,
for all connected graphs $F$ and graphs $G$ and $H$ it holds that $\hom(F, G + H) = \hom(F, G) + \hom(F, H)$ and $\sub(F, G + H) = \sub(F, G) + \sub(F, H)$.
Thus, we can use the vectors realised by small graphs, i.e.\@ those which can be inspected in $\FPT$ time, to construct vectors realised by bigger graphs. 
More formally, writing
\begin{equation} \label{eq:defS}
	S(\mathcal{I}) \coloneqq \left\{\hom(\mathcal{I}, G) \in \mathbb{N}^{\mathcal{I}}\ \middle|\ G \in \mathcal{G} \text{ with } \lvert V(G) \rvert \leq \max_{I \in \mathcal{I}} \lvert V(I) \rvert \right\},
\end{equation}
it holds that the set of all finite linear combinations of elements in $S(\mathcal{I})$ with coefficients from~$\mathbb{N}$ is contained in $R(\mathcal{I})$, i.e.\@ $\mathbb{N}S(\mathcal{I}) \subseteq R(\mathcal{I})$.
The challenge is to ensure that all reconstructable vectors can be constructed in this way. 

We require item (\ref{fpt3}) to relate vectors 
realised by large graphs to those realised by small 
graphs, cf.\@ \cref{lem:subsmall,lem:homindsum}. 
However, this assumption is not sufficient to yield 
an fpt-algorithm.
To that end, we make further assumptions which ensure that the set of realised vectors is in a sense linear and thus admits the aforementioned desired data structure.
Combinatorially, these assumptions mean that the various constraints may not interact non-trivially.

\subsection{Equi-Size Subgraph Constraints}

The restriction we impose on \pFGSubR to put it in $\FPT$ is the following:
\begin{theorem}\label{thm:fpt2}
	For graph classes $\mathcal{F}$, $\mathcal{G}$ as in \cref{proviso:fpt},
	the following problem is in~$\FPT$:
	\pproblem{\subsize}{Pairs $(F_1, h_1), \dots, (F_m,h_m) \in \CF \times \NN$ with $\lvert V(F_1) \rvert = \dots = \lvert V(F_m) \rvert \eqqcolon k$}{$km$}
	{Is there a $G \in \CG$ such that $\sub(F_i, G) = h_i$ for every $i \in [m]$?}
\end{theorem}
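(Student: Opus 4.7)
The key structural observation is that, since every $F_i$ is connected and has exactly $k$ vertices, any subgraph copy of $F_i$ in a graph $G$ is supported on a unique $U \in \binom{V(G)}{k}$ and lies entirely inside $G[U]$. Grouping the $k$-subsets $U$ by the isomorphism type of $G[U]$ yields
\[\sub(F_i, G) = \sum_{U \in \binom{V(G)}{k}} \sub(F_i, G[U]) = \sum_{H \in \mathcal{K}_k} \indsub(H, G) \cdot \sub(F_i, H),\]
where $\mathcal{K}_k$ denotes the finite set of isomorphism classes of graphs on $k$ vertices. Set $\mathcal{I} \coloneqq \{F_1, \dots, F_m\}$, $\mathcal{K}_k^{\mathcal{G}} \coloneqq \mathcal{K}_k \cap \mathcal{G}$, and $v_H \coloneqq (\sub(F_i, H))_{i=1}^m \in \mathbb{N}^m$. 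Items~(\ref{fpt3}), (\ref{fpt2}), and (\ref{fpt4}) of \cref{proviso:fpt} jointly yield
\[R(\mathcal{I}) = \Bigl\{\, {\textstyle\sum_{H \in \mathcal{K}_k^{\mathcal{G}}}} n_H\, v_H \;\Big|\; n_H \in \mathbb{N} \,\Bigr\}.\]
For the forward inclusion, observe that $G[U] \in \mathcal{G}$ by closure under induced subgraphs, so the coefficients $n_H \coloneqq \indsub(H, G)$ witness membership. For the backward inclusion, the disjoint union $\bigsqcup_H n_H \cdot H$ lies in $\mathcal{G}$ by closure under disjoint unions, and its subgraph counts for the connected $F_i$ are additive over components.

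Using this characterisation, the algorithm proceeds in two phases. First, enumerate all at most $2^{\binom{k}{2}}$ isomorphism classes of $k$-vertex graphs, filter for membership in $\mathcal{G}$ via the decidability assumption~(\ref{fpt1}), and brute-force compute the vectors $v_H$; this takes time $g(k) \cdot \operatorname{poly}(m)$ for some computable~$g$, since each entry $\sub(F_i, H) \leq k!$. Second, decide feasibility of the integer linear program
\[\sum_{H \in \mathcal{K}_k^{\mathcal{G}}} n_H \cdot v_H = h, \qquad n_H \in \mathbb{N}.\]
The number of variables is at most $2^{\binom{k}{2}}$, which depends only on the parameter, so Lenstra's algorithm for integer programming in fixed dimension decides feasibility in time $f(k) \cdot \operatorname{poly}(m, \log h_1, \dots, \log h_m)$. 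This yields the desired fpt-bound in the parameter $km$.

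The main obstacle circumvented by the equi-size and connectedness assumptions is the non-additive interaction between constraints: for $F_i$ of varying sizes, copies of different $F_i, F_j$ can overlap in ways not captured by a component-wise decomposition, and $R(\mathcal{I})$ would no longer be a finitely generated sub-monoid of $\mathbb{N}^m$ with generators computable in fpt-time. Under \cref{proviso:fpt}, the set $R(\mathcal{I})$ is exactly the monoid generated by the finitely many vectors $v_H$ for $H \in \mathcal{K}_k^{\mathcal{G}}$, reducing reconstructability to a standard fixed-dimension ILP feasibility test.
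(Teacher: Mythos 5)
Your proof is correct and follows essentially the same route as the paper: the decomposition $\sub(F_i,G)=\sum_H \indsub(H,G)\sub(F_i,H)$ over $k$-vertex isomorphism types (the paper's Lemma~\ref{lem:subsmall}), the identification of $R(\mathcal{I})$ with the $\mathbb{N}$-monoid generated by the vectors $v_H$, and a reduction to non-negative integer feasibility of a linear system whose dimension depends only on the parameter. The only difference is the black box for the final step --- you invoke Lenstra's fixed-dimension integer programming, whereas the paper uses Damaschke's fpt result for sparse linear systems --- and both are valid; your explicit restriction to $H \in \mathcal{G}$ is in fact slightly more careful than the paper's.
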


Given an instance $\mathcal{I} \subseteq \mathcal{F}$, define $R(\mathcal{I})$ and $S(\mathcal{I})$ as in \cref{eq:defR,eq:defS} but with $\sub$ instead of $\hom$.
As argued in the previous section, we have $\mathbb{N}S(\mathcal{I}) \subseteq R(\mathcal{I})$.
In the context of \subsize, the following \cref{lem:subsmall} yields that  $\mathbb{N}S(\mathcal{I}) = R(\mathcal{I})$.

\begin{lemma} \label{lem:subsmall}
	Let $F$ and $G$ be graphs. Then
	\[
	\sub(F, G) = \sum_{H \text{ s.t.\@ } \lvert V(H) \rvert = \lvert V(F) \rvert} \sub(F, H) \indsub(H, G)
	\]
	where the sum ranges over all isomorphism types of graphs $H$ on $\lvert V(F) \rvert$~vertices.
\end{lemma}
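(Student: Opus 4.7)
The plan is to prove the identity by double-counting subgraphs of $G$ isomorphic to $F$, partitioned according to their vertex set and then grouped by the isomorphism type of the induced subgraph on that vertex set.

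First I would observe that every subgraph $F' \subseteq G$ with $F' \cong F$ is uniquely determined by the pair consisting of its vertex set $U \coloneqq V(F')$ and the edge set $E(F') \subseteq E(G[U])$. Since $|U| = |V(F)| = |V(F')|$, sorting the count of such subgraphs by $U$ gives
\begin{equation*}
\sub(F,G) \;=\; \sum_{\substack{U \subseteq V(G) \\ |U| = |V(F)|}} \bigl|\{F' \subseteq G[U] : V(F') = U,\ F' \cong F\}\bigr|.
\end{equation*}
The key point is that the condition $V(F') = U$ is automatic: any subgraph of $G[U]$ isomorphic to $F$ has exactly $|V(F)| = |U|$ vertices, so it must use all of $U$. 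Hence the inner count equals $\sub(F, G[U])$.

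Next I would group the summation over $U$ by the isomorphism type of $G[U]$. Since $\sub(F, G[U])$ depends only on the isomorphism type of $G[U]$, we have
\begin{equation*}
\sub(F,G) \;=\; \sum_{H : |V(H)| = |V(F)|} \sub(F, H) \cdot \bigl|\{U \subseteq V(G) : |U| = |V(F)|,\ G[U] \cong H\}\bigr|,
\end{equation*}
and by definition the cardinality in the right-hand factor is exactly $\indsub(H, G)$. This yields the claimed identity.

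There is essentially no obstacle here; the only mild subtlety is making explicit that a subgraph of $G[U]$ isomorphic to $F$ cannot miss any vertex of $U$, which is what allows the partition of $\sub(F,G)$ by induced-subgraph type to be clean. Standard care must be taken that the sum ranges over isomorphism types (so that $\indsub(H, G)$ is well-defined), but no further ingredients are needed.
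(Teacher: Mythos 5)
Your proof is correct: the partition of the subgraphs counted by $\sub(F,G)$ according to their vertex set $U$, the observation that such a subgraph must span all of $U$ (so the inner count is exactly $\sub(F,G[U])$), and the regrouping by the isomorphism type of $G[U]$ together give precisely the claimed identity. The paper states \cref{lem:subsmall} without proof, treating it as standard, and your double-counting argument is exactly the intended one.
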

Indeed, by \cref{lem:subsmall}, every vector $\sub(\mathcal{I}, G) \in R(\mathcal{I})$ is an $\mathbb{N}$-linear combination of the vectors $\sub(\mathcal{I}, H)$ where $H$ has exactly $k$ vertices.
It is crucial that all graphs in $\mathcal{I}$ have the same number of vertices since otherwise any statement akin to \cref{lem:subsmall} would involve negative coefficients stemming from Inclusion--Exclusion.
In virtue of \cref{lem:subsmall}, testing membership in $R(\mathcal{I})$ reduces to solving a system of linear equations over $\mathbb{N}$.

\begin{example} \label{ex:subrec}
	Let $p, c \geq 0$ be integers. There exists a graph $G$ with $\sub(\smallPtwo, G) = p$ and $\sub(\smallKthree, G) = c$ if and only if $p \geq 3c$.
\end{example}
\begin{proof}
	By \cref{lem:subsmall}, there exists a graph $G$ with $\sub(\smallPtwo, G) = p$ and $\sub(\smallKthree, G) = c$ if and only if the system $\left( \begin{smallmatrix}
		p \\c
	\end{smallmatrix} \right) = \left( \begin{smallmatrix}
		1 & 3 \\ 0 & 1
	\end{smallmatrix} \right) \left( \begin{smallmatrix}
		x \\y
	\end{smallmatrix} \right)$ has solutions $x, y \in \mathbb{N}$, i.e.\@
	$\sub(\smallPtwo, y\smallKthree + x\smallPtwo) = x + 3y$ and $\sub(\smallKthree, y\smallKthree + x\smallPtwo) = y$. 
	The columns of this matrix correspond to the two graphs on three vertices which have a subgraph $\smallPtwo$ or $\smallKthree$, namely $\smallPtwo$ and $\smallKthree$. 
	Solving this system yields that $y = c$ and $x = p - 3c$, as desired.
\end{proof}

The matrix constructed in \cref{ex:subrec} via \cref{lem:subsmall} can clearly be computed in $\FPT$. It remains to solve its system of linear equations, which is also possible in $\FPT$ \cite{damschke_sparse_2013}.

\begin{proof}[Proof of \cref{thm:fpt2}]
	Let $\mathcal{I} \subseteq \mathcal{F}$ denote the instance.
	As observed above, $\mathbb{N}S(\mathcal{I}) = R(\mathcal{I})$.	
	Write $\mathcal{H}$ for the set of all isomorphism types of graphs on $k$ vertices. 
	Write $A \in \mathbb{N}^{\mathcal{I} \times \mathcal{H}}$ for the matrix with entries $\sub(F, H)$ for $(F, H) \in \mathcal{I} \times \mathcal{H}$. This matrix can be computed in $\FPT$.
	Then $b = (h_1, \dots, h_m) \in R(\mathcal{I})$ if and only if $Ax = b$ has a solution over the non-negative integers. Testing the latter condition can be done in $\FPT$ by \cite{damschke_sparse_2013}, cf.\@ \cref{thm:damschke}.
\end{proof}

\subsection{A Single Homomorphism Constraint}

For \pFGHomR, the restriction to ensure fixed parameter-tractability is more restrictive.

\thmfpt*

Define $R(F)$ and $S(F)$ as in \cref{eq:defR,eq:defS} replacing $\mathcal{I}$ by the singleton set $\{F\}$.
As before, $\mathbb{N}S(F) \subseteq R(F)$.
Dealing with \SCR is more complicated than tackling \subsize in the sense that we will not be able to prove that $\mathbb{N}S(F) = R(F)$.
In fact, these sets only coincide for large enough numbers.
The key combinatorial identity is the following, whose proof is deferred to \cref{app:parameterised}:
\begin{lemma}\label{lem:homindsum}
	Let $F$ be a graph on $k$ vertices.
	Then for all graphs $G$ on more than $k$ vertices,
\[
	\hom(F, G) = \sum_{H \text{ s.t.\@ } \abs{V(H)} \leq k} \hom(F, H) \indsub(H, G) (-1)^{k-\abs{V(H)}}\binom{\abs{V(G)}-\abs{V(H)}-1}{k-\abs{V(H)}},
	\]
	where the sum ranges over all isomorphism types of graphs $H$ on at most $k$~vertices.
\end{lemma}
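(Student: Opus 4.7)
The plan is to start from the classical surjective-induced decomposition
\[
\hom(F, G) = \sum_{H} \surj(F, H)\, \indsub(H, G),
\]
which arises by partitioning the homomorphisms $\phi\colon F \to G$ by the vertex set of their image: for each $U \subseteq V(G)$, the number of homomorphisms with image exactly $U$ is $\surj(F, G[U])$. The sum ranges over isomorphism types of graphs $H$ and is supported on $\abs{V(H)} \leq k$ since $\surj(F, H) = 0$ otherwise. I would then apply the very same identity inside each $H$ to write $\hom(F, H) = \sum_{H'} \surj(F, H')\, \indsub(H', H)$ and substitute this into the right-hand side of the claimed identity. After swapping the order of summation, everything reduces to showing that, for each fixed $H'$ with $h \coloneqq \abs{V(H')}$, the induced coefficient equals $\indsub(H', G)$.

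Concretely, grouping the inner sum by $j \coloneqq \abs{V(H)}$ and observing that $\sum_{H:\, \abs{V(H)} = j} \indsub(H', H)\, \indsub(H, G)$ counts pairs $(U, W)$ with $U \subseteq W \subseteq V(G)$, $G[U] \cong H'$, and $\abs{W} = j$, I obtain
\[
\sum_{H:\, \abs{V(H)} = j} \indsub(H', H)\, \indsub(H, G) = \indsub(H', G) \binom{\abs{V(G)} - h}{j - h}
\]
by fixing $U$ and counting extensions. Substituting $i = j - h$, $m = \abs{V(G)} - h$, and $\ell = k - h$, the coefficient of $\surj(F, H')\,\indsub(H', G)$ becomes
\[
C(h) \;=\; \sum_{i=0}^{\ell} (-1)^{\ell - i} \binom{m - i - 1}{\ell - i} \binom{m}{i}.
\]

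The only genuinely nontrivial step is proving $C(h) = 1$. I would apply the upper-negation identity $\binom{n}{k} = (-1)^{k}\binom{k - n - 1}{k}$ to rewrite $(-1)^{\ell - i}\binom{m - i - 1}{\ell - i} = \binom{\ell - m}{\ell - i}$, so that Vandermonde's convolution gives
\[
C(h) \;=\; \sum_{i=0}^{\ell} \binom{\ell - m}{\ell - i} \binom{m}{i} \;=\; \binom{\ell}{\ell} \;=\; 1.
\]
The hypothesis $\abs{V(G)} > k$ ensures $m \geq \ell + 1$, so every binomial appearing is combinatorially meaningful. Combining the computations yields $\sum_{H'} \surj(F, H')\, \indsub(H', G) = \hom(F, G)$, which closes the argument. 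The main obstacle is pinpointing and verifying the alternating-sum identity $C(h) = 1$; once the Möbius-style substitution is made and the inner double-counting is done, the hard part is reduced to this clean Vandermonde computation.
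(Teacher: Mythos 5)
Your proof is correct, but it runs the computation in the opposite direction from the paper's. The paper \emph{derives} the formula: it starts from $\hom(F,G) = \sum_{U \subseteq V(G),\, \abs{U} \leq k} \surj(F, G[U])$, expands each term by the signed identity $\surj(F, G[U]) = \sum_{U' \subseteq U} (-1)^{\abs{U \setminus U'}} \hom(F, G[U'])$, exchanges sums, and evaluates the resulting coefficient $\sum_{j=0}^{k-\abs{V(H)}} (-1)^j \binom{\abs{V(G)}-\abs{V(H)}}{j}$ via the partial alternating-sum identity $\sum_{j=0}^{n}(-1)^j\binom{m}{j} = (-1)^n\binom{m-1}{n}$, which is where $\abs{V(G)} > k$ enters. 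You instead \emph{verify} the formula: you substitute the sign-free decomposition $\hom(F,H) = \sum_{H'} \surj(F,H')\indsub(H',H)$ into the right-hand side, collapse $\sum_{H} \indsub(H',H)\indsub(H,G)$ by the double count of pairs $U \subseteq W$, and show the coefficient of $\surj(F,H')\indsub(H',G)$ equals $1$ via upper negation plus Vandermonde, so the right-hand side reduces to $\sum_{H'}\surj(F,H')\indsub(H',G) = \hom(F,G)$. All of your steps check out: the double-counting identity $\sum_{H:\abs{V(H)}=j}\indsub(H',H)\indsub(H,G) = \indsub(H',G)\binom{\abs{V(G)}-h}{j-h}$ is right, the sum may be restricted to $\abs{V(H')}\leq k$ since $\surj(F,H')=0$ otherwise, and $\abs{V(G)}>k$ gives $m \geq \ell+1$ so that $\binom{m-i-1}{\ell-i}$ is an ordinary binomial coefficient (though Vandermonde with the negative upper index $\ell-m$ would go through regardless). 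The two routes lean on closely related binomial identities and are of comparable length; the paper's has the advantage of producing the coefficient rather than presupposing it, while yours isolates the combinatorial content more cleanly and makes transparent why the formula is exactly the inverse of the $\hom$-to-$\surj$ change of basis.
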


\Cref{lem:homindsum} yields that $R(F) \subseteq \mathbb{Z}S(F)$, i.e.\@ every realised number is a linear coefficient of numbers in $S(F)$ with (not necessarily non-negative) integer coefficients.
What allows us to obtain \cref{thm:fpt} is the purely number-theoretic observation that $\mathbb{N}S(F)$ and $\mathbb{Z}S(F)$ coincide on sufficiently large numbers.
\Cref{lem:bezout} is based on B\'ezout's identity and proven in  \cref{app:parameterised}.

\begin{lemma}\label{lem:bezout}
	Given $y_1, \dots, y_n \in \mathbb{N}$, one can compute integers $y$ and $N$ such that
	\[
	X \cap \{N, N+1, \dots\} = y\mathbb{N} \cap \{N, N+1, \dots\}
	\]
	where $X \coloneqq \mathbb{N}\{y_1, \dots, y_n\}$.
\end{lemma}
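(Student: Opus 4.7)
The plan is to take $y \coloneqq \gcd(y_1, \dots, y_n)$, computable via the Euclidean algorithm, and to show that $X$ coincides with $y\mathbb{N}$ above an explicit threshold $N$. One inclusion is immediate: every element of $X$ is a non-negative integer combination of the $y_i$, each divisible by $y$, so $X \subseteq y\mathbb{N}$ and hence $X \cap \{N, N+1, \dots\} \subseteq y\mathbb{N} \cap \{N, N+1, \dots\}$ for any $N$.

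For the reverse inclusion on large enough elements, I would invoke B\'ezout's identity. Using the extended Euclidean algorithm, compute integers $a_1, \dots, a_n \in \mathbb{Z}$ with $\sum_{i=1}^n a_i y_i = y$, and split this sum into its positive and negative parts
\[
P \coloneqq \sum_{a_i > 0} a_i y_i \in X, \qquad Q \coloneqq -\sum_{a_i < 0} a_i y_i \in X,
\]
so that $P - Q = y$. Because $X \subseteq y\mathbb{N}$, the numbers $P' \coloneqq P/y$ and $Q' \coloneqq Q/y$ are non-negative integers satisfying $P' - Q' = 1$. Now given any $m = ky \in y\mathbb{N}$, divide $k$ by $Q'$ to obtain $k = qQ' + r$ with $0 \leq r < Q'$. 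Using $1 = P' - Q'$, rewrite
\[
k \;=\; qQ' + r(P' - Q') \;=\; (q - r)Q' + r P'.
\]
Whenever $q \geq r$, this gives $m = (q-r)Q + rP \in X$. Since $r \leq Q' - 1$, the condition $q \geq Q' - 1$ is sufficient, and $k \geq Q'(Q' - 1)$ guarantees it. Setting $N \coloneqq y\,Q'(Q' - 1)$ therefore completes the argument; the degenerate cases (all $y_i = 0$, giving $X = \{0\}$, or $Q = 0$, meaning $y$ itself lies in $X$) trivially admit $N = 0$.

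The main obstacle is converting the B\'ezout representation, which a priori involves negative coefficients, into a purely non-negative combination of the generators. The trick is to absorb the $-1$ coefficient attached to $Q'$ via division-with-remainder and to pay for the resulting deficit with additional copies of $Q'$; the bound $N = yQ'(Q'-1)$ is essentially the Frobenius/Chicken-McNugget bound for the two-generator sub-semigroup $\mathbb{N}\{P, Q\} \subseteq X$, which by construction hits every multiple of $y$ above $N$.
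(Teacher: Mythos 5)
Your proof is correct, and it rests on the same key tool as the paper's — B\'ezout's identity plus a threshold beyond which the negative coefficients can be absorbed — but the bookkeeping is genuinely different. The paper keeps all $n$ B\'ezout coefficients $\gamma_i$ in play: it sets $N = c y_1 \sum y_i$ with $c = -\min\gamma_i$, writes any large $z \in y\mathbb{N}$ as $N + ay_1 + by$ with $0 \le b < y_1/y$, and verifies coefficient-by-coefficient that the resulting combination is non-negative. You instead collapse the B\'ezout relation into two elements $P, Q \in X$ with $P - Q = y$, observe that $\mathbb{N}\{P,Q\}$ is a two-generator subsemigroup of $X$ whose normalised generators $P', Q'$ differ by $1$, and apply the classical division-with-remainder (Frobenius/Sylvester) argument to show it covers every multiple of $y$ from $yQ'(Q'-1)$ onward. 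Your version is arguably cleaner to verify — the only inequality to check is $q \ge r$ — and you correctly isolate the degenerate cases (all $y_i = 0$, or all B\'ezout coefficients non-negative so $Q = 0$), which the paper also handles separately via its ``if all $\gamma_i$ are non-negative'' branch. Both arguments yield an explicitly computable $N$, which is all the parameterised algorithm in the paper needs.
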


\begin{example} \label{ex:bezout}
	Let $y_1 = 6$ and $y_2 = 16$. Their greatest common divisor is $2$. The set of their $\mathbb{N}$-linear combinations is
	$
	X = \mathbb{N}\{6,16\}= 2\mathbb{N} \setminus \{2,4,8,10,14,20\},
	$
	i.e.\@ the set of all even numbers except $2,4,8,10,14,20$.
\end{example}

This concludes the preparations for the proof of \cref{thm:fpt}:

\begin{proof}[Proof of \cref{thm:fpt}]
	The algorithm operates as follows:
	Compute $S(F)$ in time only depending on $|V(F)|$.
	Let $y$ denote the greatest common divisor of the numbers in $S(F)$.
	By \cref{lem:homindsum,lem:bezout},  there exists a number $N$ only depending on $\abs{V(F)}$ such that for every $h \geq N$ there exists a graph $G$ with $\hom(F, G) = h$ if and only if $h$ is a multiple of $y$. 
	This settles the question for all $h \geq N$. 
	It remains to consider the case $h < N$.
	By \cref{lem:locality}, it suffices to consider graphs $G$ of size bounded in $\abs{V(F)}$ to conclude.
\end{proof}

To illustrate our algorithm, we include an example:
\begin{example} 
 Consider the constraint graph $F = \smallDiamond$.
 Enumerating all graphs on at most $4$ vertices yields that $S(\smallDiamond) = \{0, 6, 16, 48\}$. 
 For example, $\hom(\smallDiamond, \smallKthree) = 6$, 
 $\hom(\smallDiamond, \smallDiamond) = 16$, 
 and $\hom(\smallDiamond, \smallKfour) = 48$.
 Hence, $R(\smallDiamond)$ is a subset of the set in \cref{ex:bezout}.
 It remains to check the finitely many exceptions $2,4,8,10,14,20$. By \cref{lem:locality}, this can be done by inspecting graphs on at most $20 \cdot 4 = 80$ vertices.
\end{example}

\section{Conclusion}

This paper provides the first systematic study of the homomorphism reconstructability problem.
Our results show that this deceivingly simple-to-state problem generally is hard---not
only in terms of its computational complexity but also
in terms of finding efficient algorithms for the simplest of cases,
being subject to intricate phenomena from combinatorics and number theory.
The following questions remain open and warrant further investigation:
\begin{itemize}
	\item Is $\BoundedFHomROf{\CF}$ $\NP^{\SP}$-complete for every class 
	of unbounded treewidth $\CF$, analogous to the $\SP$-completeness of 
	$\SHom(\CF)$ \cite{dalmau_complexity_2004}?
	Is $\FHomROf{\CG}$ $\NEXP$-complete for the class of all graphs $\CG$?
    Is there a graph class $\CF$ for which $\FHomROf{\CF}$ is not only 
    $\NP$-hard but actually $\NP$-complete?
	\item While \SCR is fpt, \cref{thm:finite-set-hardness} implies that there exists a constant $C$ such that \pFGHomR restricted to instances with $\leq C$ constraints is para-$\NP$-hard.
	What is the minimal such $C$? Is \pFGHomR with two connected constraints fpt?
    Is there a sharp threshold from fixed-parameter tractability to para-$\NP$-hardness?
	\item
    The proof of \Cref{th:hardnesscolouredConstant} suggests that in some cases the number of constraints can be decreased by increasing the number of connected components of the constraint graphs.
    Notably, a crucial ingredient in \cref{proviso:fpt} is that the constraint graphs are connected.
    How do the parameters number of constraints and number of connected components affect the complexity of \FGHomR?
    What does the complexity hierarchy under these two parameters look like?
	\item In \cite{freedman_reflection_2007,lovasz_semidefinite_2009}, the functions $f \colon \mathcal{G} \to \mathbb{N}$ which are of the form $f = \hom(-, G)$ for some graph~$G$ were characterised.
	Here, $\mathcal{G}$ denotes the class of all graphs. 
	For which finite graph classes $\mathcal{I}$ does a 
	characterisation of functions $f \colon \mathcal{I} \to 
	\mathbb{N}$ of this form exist?
	Our \cref{thm:finite-set-hardness} implies that in some cases deciding 
	whether a given $f$ is of this form is $\NP$-hard.
    \item  Are there non-trivial examples of combinations of constraint graphs for which reconstructability is tractable?
    Is there an effective description of the yellow area in~\cref{fig:plot-triangles}?
    \item Is $\FGHomR$ self-reducible \cite{schnorr_optimal_1976}? That is, can we efficiently
    construct a graph $G$ that realises the given constraints
    if we have access to an oracle for $\FGHomR$?
    \item What is the computational complexity of deciding whether homomorphism constraints
    are \emph{approximately} reconstructable?
    \item How can one sample graphs satisfying homomorphism constraints uniformly at random?
\end{itemize}

\newpage

\newpage
\appendix

\clearpage{}\section{Extended Preliminaries}
\label{app:prelim}

Let us briefly state the full formal definitions of the decision problems
that we only briefly described in the main body of the paper.

\dproblem{\FGSubR}
{Pairs $(F_1, h_1), \dots, (F_m,h_m) \in \CF \times \NN$ where $h_1, \dots, h_m$ are given in binary.}
{Is there a graph $G \in \CG$ such that $\sub(F_i, G) = h_i$ for every $i \in [m]$?}

\dproblem{\BoundedFGHomR}
{Pairs $(F_1, h_1), \dots, (F_m,h_m) \in \CF \times \NN$ where $h_1, \dots, h_m$ are given in binary, an integer $n \in \mathbb{N}$ given in unary.}
{Is there a $G \in \CG$ such that $\lvert U(G) \rvert \le n$ and $\hom(F_i, G) = h_i$ for every $i \in [m]$?}

\dproblem{\BoundedFGSubR}
{Pairs $(F_1, h_1), \dots, (F_m,h_m) \in \CF \times \NN$ where $h_1, \dots, h_m$ are given in binary, an integer $n \in \mathbb{N}$ given in unary.}
{Is there a $G \in \CG$ such that $\lvert U(G) \rvert \le n$ and $\sub(F_i, G) = h_i$ for every $i \in [m]$?}
\clearpage{}
\clearpage{}\section{Counting Complexity Classes beyond $\SP$}
\label{sec:ct}

In \cite{DBLP:journals/acta/Wagner86}, Wagner studied the complexity of 
combinatorial problems with inputs described by languages that can succinctly 
express sets of 
exponential size in length $n$. A particular description language of interest 
were integer-expressions such as \[ H = (((0 \cup 1)+(0 \cup 2))+(0\cup4)),\]
introduced in \cite{DBLP:conf/stoc/StockmeyerM73}, defining up 
to exponentially large subsets of $\NN$, as well as giving rise to natural 
complete problems in a hierarchy of counting complexity classes contained in 
$\mathsf{PSPACE}$.

Since integer-expressions $H$ can be represented by acyclic 
directed edge-labelled graphs $G(H)$, any natural number contained in the 
described language $L(H)$ can be witnessed by a path in $G(H)$ in polynomial 
time. As to be expected, the membership problem ``$n \in L(H)?$'' is 
$\NP$-complete. Wagner also defines an ${\NP}^{\SP}$-complete problem: given an 
integer-expression $H$ and a natural number $m$, deciding 
whether there exists an element $a \in L(H)$ described by at least $m$ 
different paths in $G(H)$.

While Wagner's result can serve as a self-contained introduction to the 
notion of counting complexity classes, we will instead use the more recent 
$\mathsf{GapP}$-machinery \cite{DBLP:journals/jcss/FennerFK94} to define our 
own version of a
complete problem for ${\NP}^{\CEP}$, provide some context to the complexity 
class $\CEP$, and avoid the precise definition of succinct description 
languages, such as the combinatorial circuit, boolean-, and integer-expression 
languages used by Wagner. For our purposes, we will work with counting 
accepting computation paths of nondeterministic polynomial-time Turing machines 
and apply the well-known Cook--Levin reduction. 

\subsection{$\NP^\CEP$-Completeness of 
\existsEqualsThreeColouring}
\label{sec:completeness}

Let us recall the counting class terminology used by Fenner, Fortnow, and Kurtz 
\cite{DBLP:journals/jcss/FennerFK94}. We let $\Sigma = \{0,1\}$ and 
interpret $\Sigma^*$ as the set of natural numbers encoded in binary. For any 
nondeterministic Turning machine $M$ running in polynomial time, we define the 
function $\#M \colon \Sigma^* \to \Sigma^*$ such that $\#M(x)$ is the 
number of accepting computation paths of $M$ on $x$ for $x \in \Sigma^*$.
The class $\FP$ denotes all polynomial-time computable functions $k \colon \Sigma^* 
\to \Sigma^*$ and $\SP$ all functions $f \colon \Sigma^* \to \Sigma^*$ definable as 
$f(x) = \#M(x)$, for all $x \in \Sigma^*$. The class $\GapP$ can now be 
described as all functions $g \colon \Sigma^* \to \Sigma^*$ that can be defined as 
$g(x) = k(x) - f(x)$, for some $k \in \FP$ and $f \in \SP$.
Now, we are able to give an alternative definition of $\CEP$.

\begin{lemma}[\cite{DBLP:journals/jcss/FennerFK94}]
	$\CEP$ is gap-definable, i.e. it holds that
	$L \in \CEP$ if and only if there exists $ g \in 
	\GapP$ such that $x \in L \Leftrightarrow g(x) = 
	0$.
\end{lemma}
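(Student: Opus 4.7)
The plan is to verify the two directions of the equivalence by direct translation between the two definitions, exploiting the well-known fact that $\GapP = \SP - \FP$ (and equivalently $\GapP = \FP - \SP$, since $\GapP$ is closed under negation).

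For the forward direction, suppose $L \in \CEP$ with witnessing $f \in \SP$ and polynomial-time computable $g \colon \Sigma^* \to \NN$ (which may be regarded as an element of $\FP$). I would set $h(x) \coloneqq f(x) - g(x)$. By the definition of $\GapP$ recalled above, $h \in \GapP$, and by the defining property of $L \in \CEP$, we have $x \in L$ iff $f(x) = g(x)$ iff $h(x) = 0$, as required.

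For the backward direction, suppose $g \in \GapP$ with $x \in L \iff g(x) = 0$. Unwind the definition of $\GapP$ to write $g(x) = k(x) - f(x)$ for some $k \in \FP$ and $f \in \SP$. Then $g(x) = 0$ iff $f(x) = k(x)$, which almost gives the $\CEP$-witness for $L$. The one subtlety is that the $\CEP$-definition requires the $\FP$-function and the $\SP$-function to take values that can meaningfully be compared as natural numbers; since $f(x) \in \NN$ counts accepting computation paths, I would replace $k$ by $k'(x) \coloneqq \max(k(x), 0)$, which is still in $\FP$. Then $f(x) = k(x)$ iff $f(x) = k'(x)$ (both sides being nonneg forces $k(x) \geq 0$ when equality holds), and hence $x \in L$ iff $f(x) = k'(x)$, witnessing $L \in \CEP$.

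The argument is essentially a bookkeeping exercise, and I expect no serious obstacle. The only mildly delicate point, as noted, is the sign/range issue in the backward direction, which is resolved by clipping $k$ to the nonnegative part without leaving $\FP$. Apart from that, everything reduces to the closure properties of $\GapP$ under subtraction, which are immediate from its definition as $\FP - \SP$.
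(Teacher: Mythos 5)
The paper gives no proof of this lemma---it is quoted directly from Fenner--Fortnow--Kurtz---so your argument stands on its own. Your overall strategy (unfold both definitions and use $\GapP = \FP - \SP$) is exactly the standard one, and the forward direction is fine; the only cosmetic remark there is that to match the paper's literal definition of $\GapP$ as $\{k - f : k \in \FP,\, f \in \SP\}$ you could simply take $h(x) \coloneqq g(x) - f(x)$ instead of $f(x) - g(x)$ and avoid appealing to closure of $\GapP$ under negation (which is true, but not immediate from the one-line definition the paper uses).

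The backward direction has a genuine, though small and fixable, hole: the clipping $k'(x) \coloneqq \max(k(x),0)$ does not preserve the biconditional. Your parenthetical only establishes $f(x) = k(x) \Rightarrow f(x) = k'(x)$; the converse fails precisely when $k(x) < 0$ and $f(x) = 0$, for then $g(x) = k(x) - f(x) = k(x) < 0$, so $x \notin L$, yet $f(x) = 0 = k'(x)$ and your witness wrongly accepts $x$. Two repairs are available. First, under the conventions of this paper the issue never arises, since $\FP$-functions are declared to map into $\Sigma^*$ read as $\NN$, so $k(x) \geq 0$ always and no clipping is needed. Second, in the general integer-valued setting, keep $k'(x) = \max(k(x),0)$ but also replace $f$ by $f'(x) \coloneqq f(x) + [k(x) < 0]$; since the indicator is a $0/1$-valued polynomial-time computable function, $f'$ remains in $\SP$ by closure under addition, and now $k(x) < 0$ forces $f'(x) \geq 1 > 0 = k'(x)$, so $x \in L \iff g(x) = 0 \iff f'(x) = k'(x)$ holds in all cases.
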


\begin{corollary}\label{coNPinCEP}
	$\mathsf{coNP}\subseteq \CEP$.
\end{corollary}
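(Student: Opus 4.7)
My plan is to derive this directly from the definition of $\CEP$ (either the original one from the introduction or the gap-definable characterisation stated just above the corollary). The key observation is that $\mathsf{coNP}$ is precisely the class of languages whose complement can be characterised by counting accepting paths of a nondeterministic polynomial-time Turing machine being zero, which matches the $\CEP$ template perfectly.

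Concretely, I would start from an arbitrary language $L \in \mathsf{coNP}$. Then $\overline{L} \in \NP$, so by definition there is a nondeterministic polynomial-time Turing machine $M$ such that for every $x \in \Sigma^*$,
\[
    x \in \overline{L} \iff \#M(x) \geq 1 \iff \#M(x) \neq 0,
\]
equivalently $x \in L \iff \#M(x) = 0$. Set $f(x) \coloneqq \#M(x)$ and $g(x) \coloneqq 0$. By definition, $f \in \SP$, and $g$ is trivially polynomial-time computable. Hence
\[
    x \in L \iff f(x) = g(x),
\]
which witnesses $L \in \CEP$ by the definition given in the introduction.

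Alternatively, using the gap-definable characterisation stated in the preceding lemma, the function $\#M$ lies in $\SP \subseteq \GapP$ (since $\GapP$ is the closure of $\SP$ under subtraction by $\FP$-functions, and in particular contains $\SP$ itself), and $x \in L \iff \#M(x) = 0$ exhibits $L$ as gap-definable. Either route takes a single line once the $\NP$-machine for $\overline{L}$ is in hand, so I do not anticipate any real obstacle; the content of the corollary is really just unpacking that $\CEP$ already subsumes the ``zero-accepting-paths'' definition of $\mathsf{coNP}$.
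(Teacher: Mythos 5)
Your proof is correct and is essentially the paper's argument in dual form: the paper takes the "all computation paths accept" machine $M$ for $L\in\mathsf{coNP}$ and uses the $\GapP$ function $2^{p(|x|)}-\#M(x)$, which is exactly the $\SP$ function $\#M'(x)$ for your $\NP$ machine $M'$ deciding $\overline{L}$, so the two witnesses coincide. Your first route (taking $f=\#M'$ and $g=0$ in the introduction's definition of $\CEP$) is if anything slightly more direct, and the auxiliary fact $\SP\subseteq\GapP$ you invoke in the second route is standard and also used implicitly elsewhere in the paper.
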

\begin{proof}
	Any language $L \in \mathsf{coNP}$ can be recognized by a
	nondeterministic Turing machine $M$ running in polynomial time such that,
	for some polynomial $p$, for all $x \in \Sigma^*$, it holds that 
	$\#M(x) = 2^{p(|x|)}$ for all $x \in L$ and
	$\#M(x) < 2^{p(|x|)}$ otherwise. 
	Hence, $g_M(x) := 2^{p(|x|)} - \#M(x) \in \GapP$, and with $x \in L$ if and 
	only if $g_M(x) = 0$ follows $L \in \CEP$.
\end{proof}

For any language $O \subseteq \Sigma^*$, an \emph{oracle machine} $M^O$ is a 
(nondeterministic) Turing machine with access to an additional tape for queries 
of the form ``$q \in O$'', to which answers $[q \in O] \in \{0,1\}$ 
can be provided in a single computation step.
The class $\NP^\CEP$ is defined as all decision problems $L$ for which there 
exists a language $O \in \CEP$ and a nondeterministic polynomial-time oracle 
machine $M^O$ deciding $L$. We will simply refer to $M^O$ as $\NP^\CEP$ machine.
While \Cref{coNPinCEP} implies $\NP^\NP = \NP^\mathsf{coNP} \subseteq 
\NP^{\CEP}$, the stronger statement $\NP^{\NP^{\cdots^\NP}} = \PH \subseteq 
\NP^\CEP$ follows from Toda's Theorem \cite{toda_pp_1991}, already referred to in \Cref{todascor}. 
Vaguely speaking, the additional nondeterministic computation power of 
$\NP^\CEP$ seems to make up for the ``universal-flavoured'' shortcomings of 
$\CEP$ as class of decision problems, when compared to $\mathsf{PP}$, for 
example.

We say an $\NP^\CEP$ machine $M^O$ is \emph{normalised} if its computation 
can be split into two stages: first, a nondeterministic stage without oracle 
queries, and second, a deterministic stage consisting of a single step before 
the computation halts, where a single oracle query to $O$ can be made.

\begin{lemma}
	Every language $L$ in $\NP^\CEP$ can be decided by a normalised $\NP^\CEP$ 
	machine.
\end{lemma}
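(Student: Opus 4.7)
The plan is to apply a guess-and-verify strategy: the normalised machine $N$ nondeterministically guesses an entire computation of the given $\NP^\CEP$ machine $M^O$ on input $x$, including all oracle answers along the way, and then verifies the correctness of all the guessed answers with a single final query to a (possibly different) $\CEP$ oracle $O'$. Let $g \in \GapP$ witness $O \in \CEP$ via $q \in O \iff g(q) = 0$, and let $p$ be a polynomial bounding both the number of oracle queries of $M^O$ on inputs of length $|x|$ and the bit-length of $g(q)$ for queries $q$ appearing in such computations.

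The subtle point is that $\CEP$ is not known to be closed under complement, so negative guesses $a_i = 0$ (claiming $q_i \notin O$) cannot be reduced to a positive $\CEP$-check as immediately as the positive guesses. To work around this, I would extend the nondeterministic stage to additionally guess, for every negative query $q_i$, a nonzero integer $k_i$ with $|k_i| \le 2^{p(|q_i|)}$ intended to equal $g(q_i)$. Setting $b_i := 0$ if $a_i = 1$ and $b_i := k_i$ if $a_i = 0$, the single oracle query sent in the second stage is the tuple $\tau := (q_1, b_1, \ldots, q_m, b_m)$.

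The oracle is
\[
O' := \{(q_1, b_1, \ldots, q_m, b_m) \mid g(q_i) = b_i \text{ for all } i\}.
\]
Membership $O' \in \CEP$ follows from standard closure properties of $\GapP$ under subtraction of $\FP$-functions, multiplication, and polynomial-length summation: the function
\[
h(\tau) := \sum_{i=1}^m \bigl(g(q_i) - b_i\bigr)^2
\]
lies in $\GapP$ and vanishes precisely on $O'$. Correctness of the simulation is then routine. If $x \in L$, an accepting computation of $M^O$ provides true oracle answers $a_1, \ldots, a_m$; taking $k_i := g(q_i) \neq 0$ for each $i$ with $a_i = 0$ yields a branch of $N$ whose final query lies in $O'$. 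Conversely, if $N$ reaches its query stage with some guessed $a_i$ wrong, then either $a_i = 1$ and $g(q_i) \neq 0 = b_i$, or $a_i = 0$ and $g(q_i) = 0 \neq k_i = b_i$, so the query is rejected.

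The main obstacle is precisely this asymmetric treatment of positive and negative answers: while the sum-of-squares identity cleanly collapses arbitrarily many positive $\CEP$-checks into a single one, encoding a negative answer requires extra nondeterminism producing a concrete witness value, which is only possible because $\GapP$-functions are polynomially bounded. Once this is resolved, the decomposition of $N$ into a purely nondeterministic stage followed by one deterministic oracle query is immediate from the construction.
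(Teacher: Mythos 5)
Your proof is correct, and it resolves the central difficulty --- certifying \emph{negative} oracle answers inside a single $\CEP$ query --- by a genuinely different mechanism than the paper. The paper works with the decomposition $q \in O \iff k(q) - f(q) = 0$ for $k \in \FP$ and $f \in \SP$, and has the machine guess not the Boolean answers but values $a_i$ intended to equal $f(q_i)$; the answer to the $i$-th query is then \emph{derived} deterministically as $[k(q_i) - a_i = 0]$, so positive and negative answers are handled symmetrically, and the single final query only has to verify the equalities $a_i = f(q_i)$, which is done by tupling $f$ into an $\SP$ function $f^+$ and placing the comparison in $\CEP$. You instead guess the Boolean answers directly and restore the missing symmetry by additionally guessing, for each claimed negative answer, a nonzero witness $k_i$ for the value $g(q_i)$ of the $\GapP$ function --- legitimate precisely because $\GapP$ values have polynomially bounded bit-length --- and you aggregate all equality checks via $\sum_{i}\bigl(g(q_i)-b_i\bigr)^2 \in \GapP$, using closure of $\GapP$ under subtraction, products, and polynomial-length sums. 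Both routes are sound: the paper's buys uniformity of the two answer types at the cost of guessing a full $\SP$-value for every query, while yours stays closer to the gap-definability characterisation of $\CEP$ and makes explicit why the apparent $\mathsf{coNP}$-flavoured obstacle is harmless. Your choice to guess the entire computation path of $M^O$ up front is, if anything, slightly cleaner with respect to the formal two-stage definition of a normalised machine than the paper's phrasing, which defers the (nondeterministic) simulation of $M^O$ until after the guessing stage has nominally concluded.
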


\begin{proof}
	Assume there is an $\NP^\CEP$ machine $M^O$ running in polynomial time 
	$p(n)$ that decides $L$, for some $O \in \CEP$. Thus, there are some 
	functions $k,f$ in $\FP$ and $\SP$, respectively, such that $x \in O$ holds 
	if and only if $k(x) - f(x) = 0$.
	
	The normalised $\NP^\CEP$ machine $N^{O'}$ then proceeds as follows: on 
	input $x \in \Sigma^*$, for $m \coloneqq  p(\abs{x})$, it begins by 
	nondeterministically guessing query strings 
	$q_1,\dots,q_m \in \Sigma^{<m}$ as well as values $a_1,\dots,a_m \in 
	\Sigma^{<m}$. This concludes the nondeterministic stage.
	Next, $N^{O'}$ begins to simulate $M^O$ on input $x$. Whenever $M^O$ makes 
	the $i$-th oracle query ``$q'_i \in O$'', $N^{O'}$ rejects, if $q'_i \neq 
	q_i$. 
	Otherwise, $N^{O'}$ computes $k(q_i)$, and returns $[k(q_i) - 
	a_i = 0]$ as oracle answer to $M^O$, then continues the 
	simulation of $M^O$.
	Observe that for any $f \in \SP$, the function
	\[ f^+(x) = \begin{cases} 
		f(x_1)1^m0^m\dots1^m0^mf(x_m), &
		\text{if }x = x_11^m0^m\dots1^m0^mx_m,\text{ and } \forall i 
		\in [m]: x_i \in \Sigma^{<m}, \\
		\epsilon, & \text{else}. \\
	\end{cases}
	\]
	is also contained in $\SP$, by closure of $\SP$ under addition and 
	multiplication with natural numbers. Similarly, $\SP$ contains functions on 
	(encoded) tuples that are projections of $\SP$ functions onto 
	their components. Hence, when fixing a usual encoding of tuples as $\langle 
	* \rangle$,
	\[ O' \coloneqq \{\langle x,y\rangle  \in \Sigma^* \mid y - f^+(x) = 0\} \in 
	\CEP.\]
	Finally, let $(\overline{q},\overline{a}) \coloneqq \langle q_11^m0^m\dots 
	1^m0^mq_m, 
	a_11^m0^m\dots 1^m0^ma_m\rangle$. Once the simulation of $M^O$ is complete, 
	$N^{O'}$ queries its own oracle with ``$(\overline{q},\overline{a}) \in 
	O'$'', and 
	accepts $x$ if $M^O$ accepted and the answer is 
	$(\overline{q},\overline{a}) \in O'$, 
	or 
	rejects, otherwise.
	
	Note that $N^{O'}$ still runs in polynomial time, and is indeed a 
	normalised $\NP^\CEP$ machine. The correctness of its simulation is easily 
	verified.
\end{proof}

Using the classical Cook--Levin reduction from problems $L \in \NP^\CEP$ given 
as normalised $\NP^\CEP$ machine $M^O$, for some $O \in \CEP$, it is readily 
seen that the problem

\dproblem{$\existsEqualsThreeSAT$}
{A $3$-CNF formula 
$\phi(\overline{X},\overline{Y})$ and an $n \in 
\NN$ given in binary.}
{Is there an assignment $\alpha(\overline{X})$ such that there are exactly $n$ 
	assignments $\beta({\overline{Y}})$ 
	such that $\phi(\alpha(\overline{X}),\beta(\overline{Y}))$ is satisfied?}

has to be $\NP^\CEP$-complete under polynomial-time 
reductions. We will transform
a nondeterministic polynomial-time Turing machine 
computation into a polynomial-sized boolean formula, 
and employ the variable set
$\overline{Y}$ to encode the oracle query as a
comparison between the number of satisfying 
assignments $\beta(\overline{Y})$ and some $n \in 
\NN$.
Let us remark that 
$\existsEqualsThreeSAT$ is already 
hard for $\NP^\CEP$, even though its 
input consists of a binary number $n \in 
\NN$, instead of an arbitrary 
function $k \in \FP$ to compare the 
number of assignments 
with, as the definition of $\CEP$ 
might suggest. 

\begin{lemma}
	For all $L \in \NP^\CEP$, it holds that $L \le_p \existsEqualsThreeSAT$. 
	Thus, $\existsEqualsThreeSAT$ is $\NP^\CEP$-complete.
\end{lemma}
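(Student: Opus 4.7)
The plan is to adapt the classical Cook--Levin reduction to normalised $\NP^\CEP$ machines. Fix $L \in \NP^\CEP$ and a normalised oracle machine $M^O$ deciding $L$ with $O \in \CEP$, and let $k \in \FP$, $f \in \SP$ witness $O \in \CEP$ via $q \in O \iff k(q) = f(q)$. Let $p$ be a polynomial dominating both the running time of $M^O$ on inputs of length $\abs{x}$ and the number of bits required to write $k(q)$ for any query $q$ arising from such a run; set $K \coloneqq 2^{p(\abs{x})}$, so $K \geq k(q)$ always. The output of the reduction on input $x$ will be a $3$-CNF $\phi(\overline{X}, \overline{Y})$ together with $n \coloneqq K$ encoded in binary.

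First, I encode the nondeterministic (oracle-free) first stage of $M^O$ into $\overline{X}$ by the standard Cook--Levin tableau, obtaining a $3$-CNF $\phi_1(\overline{X})$ satisfied precisely by those $\alpha$ that encode a valid accepting path of stage one; designated bits of $\overline{X}$ then spell out the produced oracle query $q(\alpha)$. Second, I split $\overline{Y}$ as $(b, \overline{Y}_1, \overline{Y}_2)$ where $b$ is a selector bit. Applying the Cook--Levin construction for $\SP$, I build a $3$-CNF $\psi_f(\overline{X}, \overline{Y}_1)$ whose number of satisfying $\overline{Y}_1$-assignments equals $f(q(\alpha))$ whenever $\phi_1(\alpha)$ holds; since $K - k(\cdot) \in \FP \subseteq \SP$, the same machinery yields a $3$-CNF $\psi_{K-k}(\overline{X}, \overline{Y}_2)$ whose satisfying $\overline{Y}_2$-assignments are counted by $K - k(q(\alpha))$. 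I then assemble
\[
\phi(\overline{X}, \overline{Y}) \coloneqq \phi_1(\overline{X}) \wedge \bigl((b \wedge \psi_f(\overline{X}, \overline{Y}_1) \wedge \overline{Y}_2 = \overline{0}) \vee (\neg b \wedge \psi_{K-k}(\overline{X}, \overline{Y}_2) \wedge \overline{Y}_1 = \overline{0})\bigr)
\]
and flatten into $3$-CNF by Tseitin-style auxiliary variables, each forced to a unique value by unit clauses so as not to inflate the satisfying-assignment count.

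For correctness, the two disjuncts are mutually exclusive in $b$, so for every $\alpha$ satisfying $\phi_1$ the total number of $\overline{Y}$-assignments satisfying $\phi(\alpha, \overline{Y})$ equals exactly $f(q(\alpha)) + (K - k(q(\alpha)))$, which evaluates to $K = n$ precisely when $q(\alpha) \in O$; for $\alpha$ failing $\phi_1$ the count is zero. Hence some $\alpha$ witnesses exactly $n$ satisfying $\overline{Y}$ iff $M^O$ accepts $x$, i.e.\ iff $x \in L$. The construction is polynomial-time because each gadget has polynomial size in $\abs{x}$ and in $\log K$. Combined with the easy observation that $\existsEqualsThreeSAT \in \NP^\CEP$ (nondeterministically guess $\alpha$, then ask the $\CEP$-oracle whether the count of satisfying $\overline{Y}$ equals $n$), this yields $\NP^\CEP$-completeness. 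The main obstacle is the final bookkeeping step: one must verify that the Tseitin auxiliaries introduced when flattening the disjunction and the equality constraints $\overline{Y}_i = \overline{0}$ are each determined uniquely by the original variables, so that the multiplicative contribution to the count is exactly $1$ and the targeted total $K$ is hit on the nose.
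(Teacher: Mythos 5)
Your proof is correct and follows essentially the same route as the paper: Cook--Levin applied to the nondeterministic stage of a normalised $\NP^\CEP$ machine, followed by a parsimonious encoding of a sum of two $\SP$ counts that hits the target $2^{p(\cdot)}$ exactly when the oracle query is answered positively. The only immaterial differences are that you form the sum as $f + (K - k)$ with an explicit selector bit (the paper uses $k + (n_y - f)$ with "a new variable" for the disjoint disjunction) and that you tie the threshold to $\abs{x}$ rather than to the query length, which if anything cleans up a minor ambiguity in the paper's choice of $n_y$.
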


\begin{proof}
	We show that every language $L \in \NP^\CEP$ given by a normalised 
	$\NP^\CEP$ machine $M^O$, for some $O \in 
	\CEP$ with $k \in\FP$ and 
	$f \in\SP$, can be reduced to $\existsEqualsThreeSAT$. Since $f \in\SP$, we 
	can assume that there exists a polynomial $p$ such that for all $y \in 
	\Sigma^*$, $f(y) \leq 2^{p(|y|)}$.
	
	\begin{claim}
		$L \le_p \existsEqualsThreeSAT$.
	\end{claim}
	\begin{claimproof}
	For now, we will only consider the first stage of $M^O$, a nondeterministic 
	computation of polynomial length. Given $x \in \Sigma^*$, it is well known 
	that in time polynomial in the running time of $M^O$ we can compute a 
	boolean formula $\phi_{x}(\overline{X})$ which is satisfied if and only if 
	its 
	variables encode a computation path of $M^O$ on $x$ ending in the unique 
	state of the second stage, with $M^O$ ready to accept, should the 
	oracle query return positive. Further, this correspondence between 
	satisfying assignments and computation paths is one-to-one. Let $y \in 
	\Sigma^*$ denote the input to the subsequent oracle query, which is at this 
	moment already encoded in the assignment of variables $\overline{X}_y 
	\subseteq \overline{X}$. Let $n_y \coloneqq 2^{p(|y|)}$.
	
	We continue by considering the second stage of $M^O$: for some $y \in 
	\Sigma^*$, a single oracle query ``$y \in O$'', followed by accepting 
	$x$ if and only if $[y \in O] = 1$. 
	By definition, it holds that $y \in O \iff k(y) - f(y) = 0$, and thus
	$y \in O \iff k(y) + (n_y - f(y)) 
	= n_y.$ 
	Since every \emph{nonnegative} function from $\FP-\SP = \GapP$ is contained 
	in $\SP$, and $\SP$ is closed under addition, it holds that $q(y) \coloneqq k(y) + 
	(n_y - f(y)) \in \SP$. Thus, by 
	the $\SP$-completeness of 
	$\SSat$, 
	we can also compute in polynomial time a boolean formula 
	$\psi_{O}(\overline{X}_y,\overline{Y})$ with exactly $q(y)$ satisfying 
	assignments of 
	$\overline{Y}$, by encoding the addition as disjunction, where assignments 
	of each summand are made disjoint by adding a new variable, 
	and $n_y - f(y)$ by excluding assignments satisfying the formula for 
	$f(y)$, using negation.
	Furthermore, we may assume that the computed formula 
	$\phi_{x}(\overline{X})\wedge\psi_{O}(\overline{X}_y,\overline{Y})$ is in 
	$3$-CNF, by employing the usual transformation.
	
	It remains to verify that $x \in L$ holds if and only if 
	$(\phi_{x}(\overline{X})\wedge\psi_{O}(\overline{X}_y,\overline{Y}),n_y)
	 \in \existsEqualsThreeSAT$. Assuming $x \in L$, it follows that $M^O(x) = 
	 1$, 
	and thus there exists an 
	assignment $\alpha(\overline{X})$ such that 
	$\phi_{x}(\alpha(\overline{X}))$ is 
	satisfied, while $\alpha(\overline{X}_y)$ encodes an input $y$ to the 
	oracle query. Since the second stage of $M^O$ depends only on the outcome 
	of query ``$y \in O$'', it follows that $y 
	\in O$, and thus, $q(y) = k(y) + 
	(n_y - f(y)) = n_y$. Hence, the formula 
	$\phi_{x}(\alpha(\overline{X}))\wedge\psi_{O}(\alpha(\overline{X}_y),\overline{Y})
	 \equiv 
	\psi_{O}(\alpha(\overline{X}_y),\overline{Y})$ has 
	exactly $n_y$ satisfying 
	assignments $\beta(\overline{Y})$ of 
	$\overline{Y}$.
	Conversely, from the existence of an assignment $\alpha(\overline{X})$ 
	such that 
	exactly $n_y$ satisfying 
	assignments $\beta(\overline{Y})$ for 
	$\phi_{x}(\alpha(\overline{X}))\wedge\psi_{O}(\alpha(\overline{X}_y),\overline{Y})$
	 exist, 
	follows that $M^O(x) = 1$.
	\end{claimproof}

	Finally, we will quickly demonstrate that $\existsEqualsThreeSAT \in 
	\NP^\CEP$. On input $x = 
	\langle\phi(\overline{X},\overline{Y}),n\rangle$,
	 an $\NP^\CEP$ machine 
	can guess $\alpha(\overline{X})$, and since $\SSat \in \SP$, with $g(x) 
	\coloneqq n - 
	\SSat(\phi(\alpha(\overline{X}),\overline{Y}))$
	 it 
	follows the query 
	``$\SSat(\phi(\alpha(\overline{X}),\overline{Y}))
	 = n$'' is 
	in $\CEP$.
\end{proof}

The $\NP^\CEP$-completeness of 
$\existsEqualsThreeColouring$ is now 
straightforward, as the well-known reduction between $\threeSAT$ and 
$\threecolouring$ is parsimonious. 
We observe that for every formula $\varphi(\overline{X},\overline{Y})$, this 
property is preserved under partial assignments $\alpha(\overline{X})$.

\clearpage{}
\clearpage{}\section{Hardness Proofs for Uncoloured Graphs}
\label{sec:uncoloured}

In this part of the appendix, we provide adapted versions of the reductions
from \Cref{sec:complexity}
that reduce the corresponding problems to homomorphism reconstructability problems
for uncoloured graphs.
We essentially encode the labels and colours used in the reductions of \Cref{sec:complexity}
by using \emph{Kneser graphs}.
We describe Kneser graphs, their properties that we utilise,
and the general ideas of our gadget constructions in \Cref{sec:kneserGraphs}.
We then build upon that in
Appendices~\ref{sec:uncolouredGeneralHardness}, \ref{sec:uncolouredThreeConstraints}, and \ref{sec:uncolouredFiniteSet}
to adapt the reductions of \Cref{th:NPCEPReduction,th:hardnesscolouredConstant,th:hardnesscolouredFixed},
and, hence, prove \Cref{th:uncolouredGeneralHardness,th:ConstantEncoding,th:hardnessFixed},
respectively;
the construction needed for \Cref{th:uncolouredGeneralHardness}
is arguably the most involved of the three, which is why it comes last.

\subsection{Kneser Graphs}
\label{sec:kneserGraphs}

Let us first recall some basic facts on Kneser graphs from
\cite{hahn_homomorphisms_1997}.
For natural numbers $r,s \in \NN$ with $1 \le r < s/2$,
the \textit{Kneser graph} $K(r,s)$
is the graph whose vertices are the $r$-subsets of $[s]$
where two vertices are connected by an edge
if and only if they are disjoint.
The condition $r < s/2$ guarantees that
every such Kneser graph is connected.
Every Kneser graph is a core \cite[Proposition~3{.}13]{hahn_homomorphisms_1997}.
For natural numbers $k,n \in \NN$ with $n \ge 3$,
the Kneser graph $K(k(n-2), (2k+1)(n-2))$ has chromatic number $n$
and odd girth $2k + 1$ \cite[Proposition~3{.}13]{hahn_homomorphisms_1997}.
This fact is often used to obtain a (possibly infinite) family of pairwise
homomorphically incomparable connected graphs since,
if $h \colon V(F) \to V(G)$ is a homomorphism from a graph $F$
to a graph $G$, then the chromatic number of $G$
has to be greater than or equal to the chromatic number of $F$
while the odd girth of $G$ has to be less than or equal to
the odd girth of $F$.

Böker, Chen, Grohe, and Rattan~\cite{boker_complexity_2019}
proved that there is a class of directed graphs $\CF$ for which
homomorphism indistinguishability is undecidable.
They then provided a construction that allows to encode directed graphs
as undirected ones while preserving homomorphism numbers in some sense,
which was made possible by encoding edge directions using
gadgets consisting of Kneser graphs.
This allowed them to turn $\CF$ into a class of undirected graphs
for which homomorphism indistinguishability is still undecidable.
Similarly, Roth and Wellnitz~\cite{roth_counting_2020} recognized the usefulness of Kneser graphs
in the context of homomorphism counts and used them to encode
every problem $P \in \#W[1]$ as two sets of graphs $\CF$ and $\CG$
such that $P$ is equivalent to the problem of counting homomorphisms
from graphs in $\CF$ to graphs in $\CG$.

Before showing how we can employ Kneser graphs
to adapt the reductions from \Cref{sec:complexity} to
homomorphism reconstructability problems for unlabelled and uncoloured graphs,
let us first describe the construction from \cite{boker_complexity_2019}:
fix four pairwise homomorphically incomparable
connected Kneser graphs $K^1, \dots, K^4$.
Let $\ell \coloneqq \max_{i \in [4]} \lvert V(K^i) \rvert$.
For every $i \in [4]$, we fix a vertex of $K^i$,
which we denote by $t_{i}$ and call the \textit{tip} of $K^i$.
For a directed graph $F$, we construct the undirected graph $F^U$
by
replacing every directed edge $(u,v)$ by the \textit{direction gadget}
from \Cref{fig:directionGadget} and,
for every vertex $u$ of $F$,
add a copy of the \textit{indicator gadget} from \Cref{fig:indicatorGadget}
and identify $u$ of $F$ with the vertex $v$ of the indicator gadget.
Note that $P_i$ denotes a path of length $i \ge 0$
in \cref{fig:gadgets}.
These long paths used in the gadgets
prevent the Kneser graphs from being mapped to other parts
of an encoded graph and, in particular,
from being mapped to the same
Kneser graph in a different copy of the gadget.

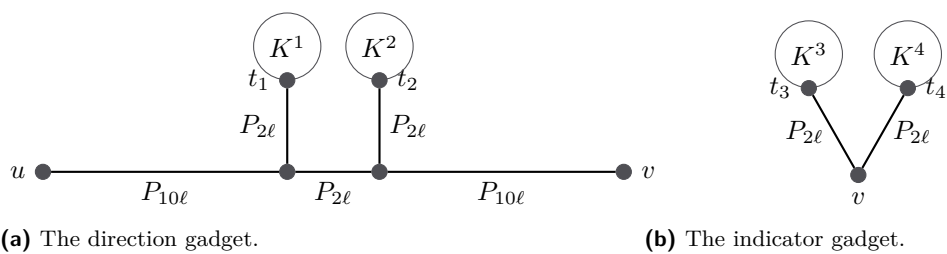
\begin{figure}
	\begin{subfigure}{0.6\textwidth}
		\centering
		\begin{tikzpicture}
			\node[vertex, label={left:$u$}] (u) {};
			\node[vertex, right = 3cm of u] (u2) {};
			\node[vertex, right = 1cm of u2] (v2) {};
			\node[vertex, label={right:$v$}, right = 3cm of v2] (v) {};
			\node[vertex, label={left:$t_1$}, above = 1cm of u2] (u3) {};
			\node[kneser, anchor=south] (uKneser) at (u3) {$K^1$};
			\node[vertex, label={right:$t_2$}, above = 1cm of v2] (v3) {};
			\node[kneser, anchor=south] (vKneser) at (v3) {$K^2$};
			
			\path[thick] (u) edge node[below] {$P_{10\ell}$} (u2)
			(u2) edge node[left] {$P_{2\ell}$} (u3)
			(u2) edge node[below] {$P_{2\ell}$} (v2)
			(v2) edge node[right] {$P_{2\ell}$} (v3)
			(v2) edge node[below] {$P_{10\ell}$} (v);
		\end{tikzpicture}
		\caption{The direction gadget.}
		\label{fig:directionGadget}
	\end{subfigure}
	\begin{subfigure}{0.4\textwidth}
		\centering
		\begin{tikzpicture}
			\node[vertex, label={below:$v$}] (u) {};
			\node[vertex, label={left:$t_{3}$},
			above left = 1.0cm and 0.5cm of u] (u2) {};
			\node[vertex, label={right:$t_{4}$},
			above right = 1.0cm and 0.5cm of u] (u3) {};
			\node[kneser, anchor=south] (uKneser1) at (u2) {$K^3$};
			\node[kneser, anchor=south] (uKneser2) at (u3) {$K^4$};
			
			\path[thick] (u) edge node[left] {$P_{2\ell}$} (u2)
			(u) edge node[right] {$P_{2\ell}$} (u3);
		\end{tikzpicture}
		\caption{The indicator gadget.}
		\label{fig:indicatorGadget}
	\end{subfigure}
	\caption{The gadgets for encoding directed graphs.}
	\label{fig:gadgets}
\end{figure}

Böker, Chen, Grohe, and Rattan have shown that this construction behaves as desired:
if $F$ and $G$ are directed graphs
and $h$ is a homomorphism from $F^U$ to $G^U$,
then the restriction of $h$ to an arbitrary indicator
or direction gadget of $F^U$
is an isomorphism to an indicator or direction gadget of $G^U$, respectively.
In particular,
there is a one-to-one correspondence between homomorphisms
from a directed graph $F$ to a directed graph $G$
and homomorphisms from $F^U$ to $G^U$
modulo automorphisms of the Kneser graphs.

\begin{lemma}[{\cite[Lemma 17]{boker_complexity_2019}}]
    \label{le:encodingPreservesHom}
    Let $a_i \coloneqq \lvert \{\pi \in \Aut(K^i) \mid \pi(t_i) = t_i\}\rvert$
    be the number of automorphisms of $K^i$ that
    stabilise the tip of $K^i$
    for every $i \in [4]$.
    Then,
    for all directed graphs $F$ and $G$,
    \begin{equation*}
        \hom(F^U, G^U) =
        (a_1 \cdot a_2)^{\lvert E(F) \rvert} \cdot
        (a_3 \cdot a_4)^{\lvert V(F) \rvert} \cdot
        \hom(F, G).
    \end{equation*}
\end{lemma}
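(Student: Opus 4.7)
The plan is to set up a bijection between homomorphisms $F^U \to G^U$ on the one side and, on the other side, pairs $(\phi, \tau)$ consisting of a homomorphism $\phi \colon F \to G$ of directed graphs together with data $\tau$ that records, for each copy of a Kneser graph $K^i$ appearing inside $F^U$, an automorphism of $K^i$ that fixes the tip $t_i$. Because $F^U$ contains $\lvert E(F) \rvert$ copies each of $K^1, K^2$ (one per direction gadget) and $\lvert V(F) \rvert$ copies each of $K^3, K^4$ (one per indicator gadget), the number of such pairs equals $(a_1 a_2)^{\lvert E(F) \rvert} (a_3 a_4)^{\lvert V(F) \rvert} \cdot \hom(F, G)$, which matches the claimed identity once the bijection is established.

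The core of the argument is a rigidity statement: for every homomorphism $h \colon F^U \to G^U$ and every Kneser subgraph $K$ of $F^U$ isomorphic to some $K^i$, the restriction of $h$ to $K$ is an isomorphism onto a copy of the same $K^i$ sitting inside a gadget of the correct type in $G^U$. This follows from three features of the chosen Kneser graphs: they are pairwise homomorphically incomparable (via distinct chromatic numbers and distinct odd girths), each $K^i$ is a core, and, since $\ell \geq \lvert V(K^i) \rvert$ for every $i \in [4]$, the paths of length $2\ell$ and $10\ell$ separating different Kneser copies in $F^U$ are longer than any path internal to a single copy of $K^i$ in $G^U$. A standard path-rigidity argument then shows that each attached path must leave $h(K)$ at the unique tip $h(t_i)$ and extend linearly along the corresponding gadget path in $G^U$, rather than folding back or being redirected into a Kneser subgraph. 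In particular, $h$ maps tips to tips and carries each indicator gadget of $F^U$ to an indicator gadget of $G^U$ and each direction gadget to a direction gadget.

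Once rigidity is established, define $\phi \colon V(F) \to V(G)$ by letting $\phi(v)$ be the image under $h$ of the central vertex of the indicator gadget at $v$. For every directed edge $(u,v) \in E(F)$, rigidity forces the image of its direction gadget to be a direction gadget joining $\phi(u)$ and $\phi(v)$; since $K^1$ and $K^2$ are non-isomorphic, the asymmetry of the gadget pins down the direction, so $(\phi(u), \phi(v)) \in E(G)$ and $\phi$ is a directed homomorphism. Conversely, given $\phi \in \hom(F, G)$, any choice of tip-fixing automorphisms on the Kneser copies of $F^U$ extends uniquely to a homomorphism $F^U \to G^U$ lifting $\phi$, because all path segments are then pinned down by their endpoints. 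The main obstacle is the path-rigidity step: one must verify carefully that no shortcut, folded, or wrapped-around configuration can arise in $G^U$, and this is precisely where the length parameters $2\ell$ and $10\ell$ — chosen to exceed twice the diameter of any Kneser copy — do the work of preventing a long path from being absorbed into a Kneser graph or from merging two distinct gadgets.
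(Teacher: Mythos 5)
Your proposal is correct and follows essentially the same route as the paper, which cites this statement as Lemma~17 of B\"oker--Chen--Grohe--Rattan and justifies it by exactly the rigidity argument you describe: connectivity, the core property, and pairwise homomorphic incomparability force each Kneser copy to map isomorphically, tip to tip, onto a copy of the same type, after which the path lengths pin down the images of the gadgets and yield the bijection up to tip-stabilising automorphisms. The one point worth sharpening is that a Kneser copy cannot be absorbed into the connecting paths not because of their length but because paths are bipartite while each $K^i$ is a non-bipartite core (odd girth), an ingredient you list but do not explicitly invoke for this step.
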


The reason why this construction works is that
the employed Kneser graphs are connected, cores, and pairwise homomorphically incomparable:
let $K$ be a copy of a Kneser graph occurring in $F^U$.
Since the Kneser graphs are connected, $K$
can only be mapped to a connected subgraph of $G^U$.
Since the Kneser graphs are cores, $K$ cannot be mapped to
the long paths because, otherwise, there would be a homomorphism from $K$ to a bipartite
graphs and hence a single edge, which would contradict $K$ being a core.
Since the Kneser graphs are pairwise homomorphically incomparable, $K$
cannot be mapped to a copy of one of the other Kneser graphs in $G^U$.
Moreover, it is not possible that only a part of $K$ is mapped to a
copy of $K$ or a copy of a different Kneser graph in $G^U$:
the remainder of $K$ would have to be mapped to the long paths,
which would mean that this part could be folded in into a single edge
and we would obtain a homomorphism to a proper subgraph of $K$ in the first case,
contradicting $K$ being a core,
and a homomorphism to one of the different Kneser graphs in the second case,
contradicting the Kneser graphs being pairwise homomorphically incomparable.
Finally, this is also the reason why
a homomorphism from $F^U$ to $G^U$ also has to stabilise the tip
of every copy of $K$: if the tip was not mapped to the tip of a copy of $K$
in $G^U$, then this would mean that $K$ \enquote{pulls} at a long path
and causes that a different copy of a Kneser graph in $F^U$ is partially mapped
to long paths, yielding a contradiction.

We proceed in a similar fashion and provide gadgets built from Kneser graphs that
can be used for encoding labels and colours,
which then allows us to construct reductions to homomorphism
reconstructability problems for (unlabelled and uncoloured) graphs from the reductions
presented in \Cref{sec:complexity}.
Proving the correctness of such a modified reduction, however,
involves more work as only one direction of the correctness proof is
a simple application of a lemma in the vein of \Cref{le:encodingPreservesHom}.
For the \emph{forwards direction},
we have to provide a graph $G$ that satisfies the encoded constraints.
We essentially just choose $G$ as the encoding of the graph used in
the reduction to the labelled/coloured problem and
apply a variant of \Cref{le:encodingPreservesHom} for our specific construction.
For the \emph{backwards direction}, however,
we are given a graph $G$ that satisfies all encoded constraints but
does not have to follow our chosen encoding and use of Kneser graphs.
Hence, we are left to argue that a solution to the problem we are reducing
from can still be salvaged from $G$.
Here, the most important trick we employ is that our reduction can provide additional constraints
that forbid the existence of all non-homomorphic images of the Kneser graphs we use in $G$.
This is usually enough to guarantee that the structure of $G$ is at least somewhat reasonable.

\subsection{Reduction from $\SetSplitting$}
\label{sec:uncolouredThreeConstraints}

\begin{proof}[Proof of \Cref{th:ConstantEncoding}]
	We adapt the reduction of \Cref{th:hardnesscolouredConstant}
	by encoding colours in binary via Kneser graphs.
	Fix four pairwise homomorphically
    incomparable connected Kneser graphs $K^A$, $K^0$, $K^1$, and $K^Z$,
	and fix a vertex, the \textit{tip}, for all of these graphs.
	Let $\ell \coloneqq \max \{\lvert V(K^A) \rvert, \lvert V(K^0) \rvert, 
	\lvert V(K^1) \rvert, \lvert V(K^Z) \rvert\}$ and
	let $k_i$ be the number of automorphisms of $K^i$
	that stabilise the tip for $i \in \{A, 0, 1, Z\}$.

	\begin{figure}
		\centering
		\begin{tikzpicture}
			\node[vertex, label={left:$t$}] (t) {};

			\node[vertex, label={below:$a$}, right = 1.5cm of t] (s) {};
			\node[kneser, anchor=south] (sKneser) at (s) {$K^A$};

			\node[vertex, label={below:$b_1$}, right = 1.5cm of s] (b1) {};
			\node[kneser, anchor=south] (b1Kneser) at (b1) {$K^{0/1}$};

			\node[vertex, label={below:$b_2$}, right = 1.5cm of b1] (b2) {};
			\node[kneser, anchor=south] (b2Kneser) at (b2) {$K^{0/1}$};

			\node[right = 1cm of b2] (dots) {\scriptsize$\ldots$};
			\node[vertex, label={below:$b_m$}, right = 1cm of dots] (bm) {};
			\node[kneser, anchor=south] (bmKneser) at (bm) {$K^{0/1}$};

			\node[vertex, label={below:$z$}, right = 1.5cm of bm] (e) {};
			\node[kneser, anchor=south] (eKneser) at (e) {$K^Z$};

			\path[thick]
			(t) edge node[above] {$P_{2\ell}$} (s)
			(s) edge node[above] {$P_{2\ell}$} (b1)
			(b1) edge node[above] {$P_{2\ell}$} (b2)
			(b2) edge node[above, xshift=5pt] {$P_{2\ell}$} (dots)
			(dots) edge node[above, xshift=-5pt] {$P_{2\ell}$} (bm)
			(bm) edge node[above] {$P_{2\ell}$} (e);
		\end{tikzpicture}
		\caption{The gadgets for the binary encoding of colours.}
		\label{fig:binaryGadget}
	\end{figure}
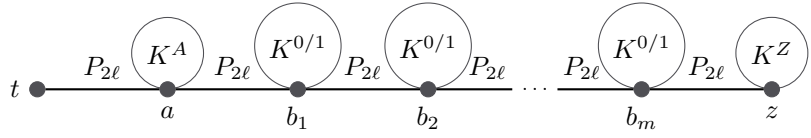

	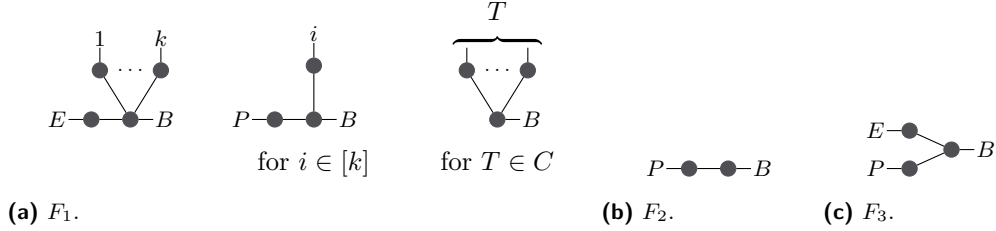
\begin{figure}
		\centering
		\begin{subfigure}[b]{0.55\textwidth}
			\centering
			\begin{tikzpicture}[node distance = 0.5cm]
				\node[vertex] (everyB) {};
				\node[vertex, left = 0.3cm of everyB] (everyE) {};
				\node[vertex, above left = 0.5cm and 0.25cm of everyB] (every1) {};
				\node[vertex, above right = 0.5cm and 0.25cm of everyB] (everyk) {};
				\node[] (everydots) at ($(every1)!0.5!(everyk)$) {\scriptsize$\ldots$};
				\draw
				(everyB) edge (everyE)
				edge (every1)
				edge (everyk)
				;
\node[below = 0.3cm of everyB] {};

				\path
				(every1) edge node[above] {\footnotesize$1$} ($(every1) + (0,0.3)$)
				(everyk) edge node[above] {\footnotesize$k$} ($(everyk) + (0,0.3)$)
				(everyB) edge node[right] {\footnotesize$B$} ($(everyB) + (0.3,0)$)
				(everyE) edge node[left] {\footnotesize$E$} ($(everyE) + (-0.3,0)$)
				;

				\node[vertex, right = 2.2cm of everyB] (partB) {};
				\node[vertex, left = 0.3cm of partB] (partP) {};
				\node[vertex, above = of partB] (parti) {};
				\node[below = 0.17cm of partB] (partDescription) {for $i \in [k]$};
				\draw
				(partB) edge (partP)
				edge (parti)
				;
				\path
				(parti) edge node[above] {\footnotesize$i$} ($(parti) + (0,0.3)$)
				(partB) edge node[right] {\footnotesize$B$} ($(partB) + (0.3,0)$)
				(partP) edge node[left] {\footnotesize$P$} ($(partP) + (-0.3,0)$)
				;

				\node[vertex, right = 2.2cm of partB] (setB) {};
				\node[vertex, above left = 0.5cm and 0.25cm of setB] (set1) {};
				\node[vertex, above right = 0.5cm and 0.25cm of setB] (setk) {};
				\node[] (setdots) at ($(set1)!0.5!(setk)$) {\scriptsize$\ldots$};
				\draw[very thick, decorate, decoration = {calligraphic brace}]
				([xshift = -2pt, yshift = 10pt]set1.west)
				-- node[above, yshift = 5pt] {$T$}
				([xshift = 2pt, yshift = 10pt]setk.east)
				;
				\node[below = 0.2cm of setB] (setDescription) {for $T \in C$};
				\draw
				(setB) edge (set1)
				edge (setk)
				;
				\path
				(setB) edge node[right] {\footnotesize$B$} ($(setB) + (0.3,0)$)
				(set1) edge node[right] {} ($(set1) + (0,0.3)$)
				(setk) edge node[right] {} ($(setk) + (0,0.3)$)
				;
			\end{tikzpicture}
			\caption{$F_1$.}
		\end{subfigure}
		\begin{subfigure}[b]{0.20\textwidth}
			\centering
			\begin{tikzpicture}[node distance = 0.5cm]
				\node[vertex, ] (B) {};
				\node[vertex, left = 0.3cm of B] (P) {};
\draw
				(B) edge (P)
				;
				\path
				(B) edge node[right] {\footnotesize$B$} ($(B) + (0.3,0)$)
				(P) edge node[left] {\footnotesize$P$} ($(P) + (-0.3,0)$)
				;
			\end{tikzpicture}
			\caption{$F_2$.}
		\end{subfigure}
		\begin{subfigure}[b]{0.20\textwidth}
			\centering
			\begin{tikzpicture}[node distance = 0.5cm]
				\node[vertex, ] (B) {};
				\node[vertex, above left = 0.1cm and 0.4cm of B] (E) {};
				\node[vertex, below left = 0.1cm and 0.4cm of B] (P) {};
\draw
				(B) edge (P)
				(B) edge (E)
				;
				\path
				(E) edge node[left] {\footnotesize$E$} ($(E) - (0.3,0)$)
				(B) edge node[right] {\footnotesize$B$} ($(B) + (0.3,0)$)
				(P) edge node[left] {\footnotesize$P$} ($(P) + (-0.3,0)$)
				;
			\end{tikzpicture}
			\caption{$F_3$.}
		\end{subfigure}
		\caption{Some of the constraint graphs used in
			the reduction of \Cref{th:ConstantEncoding}.
            A half-present edge
            indicates the use of a gadget from \Cref{fig:binaryGadget}.}
		\label{fig:hardnessEncodedConstantConstruction}
	\end{figure}

	Given a collection $\CC$ of subsets of a finite set $S$, we again
	assume that $S = [k]$ by re-labelling the elements of $S$.
    Furthermore, we assume that $\CC$ does not contain the empty set;
    otherwise, the instance would be trivial.
	In the original reduction, we constructed graphs $F_1$, $F_2$, and $F_3$
    that used colours $1, \dots, k$ and also $B$ (\enquote{black}),
	$E$ (\enquote{everything}),
	and $P$ (\enquote{partition}).
	Now, we encode each of these $k + 3$ colours by a gadget
	that corresponds to a binary number of length $m \coloneqq \lfloor \log 
	(k + 3) \rfloor + 1$,
    which is shown in \Cref{fig:binaryGadget}.
	Then, for a vertex $v$ of colour $C$,
	we take the gadget for colour $C$ and identify $v$
	with the vertex $t$ of the gadget.
	For a colour $C$, let $a_C$ be the product of the tip-stabilising automorphism
	numbers $k_i$ of all Kneser graphs used in the gadget for $C$.
	The modified graphs $F_1$, $F_2$, and $F_3$ that we use for this
    reduction are depicted in \Cref{fig:hardnessEncodedConstantConstruction}.
	The reduction then produces the following constraints:
	\begin{alphaenumerate}
        \item $\hom(F_1) = (a_1 \cdot \ldots \cdot a_k) \cdot a_B \cdot a_E
		\cdot \prod_{i \in [k]} (a_i \cdot a_B \cdot a_P)
		\cdot \prod_{T \in \CC} (a_B \cdot \prod_{i \in T} a_i)$,
        \item $\hom(F_2) = 2 \cdot a_B \cdot a_P$,
        \item $\hom(F_3) = 0$, and
        \item $\hom(F) = 0$ for a graph $F$ of every isomorphism type in the set
                $\CI \coloneqq \{h(K^i) \mid
                \text{$h$ is a non-injective homomorphism from $K^A$, $K^0$, $K^1$, or $K^Z$ to another graph}\}.$ \label{th:ConstantEncoding:constraintNonInj}
	\end{alphaenumerate}

	The constraints from (\ref{th:ConstantEncoding:constraintNonInj})
    are used in the backward direction of the correctness and
    enforce that a homomorphism
	from one of our Kneser graphs to the given graph $G$ is always injective.
	Note that the number of these graphs
	only depends on the four Kneser graphs;
    it does not depend on the given instance $\CC$ for $\SetSplitting$.
    Additionally,
	for the bounded reconstructability problem,
    we can set the size bound
	to an appropriate number linear in $k$;
    this will become obvious from the following correctness proof.

	Given a partition of $S$ into sets $S_1$ and $S_2$
	such that no subset in $\CC$ is entirely contained in either
	$S_1$ or $S_2$, the graph $G_{S_1, S_2}$ in \Cref{fig:hardnessEncodedConstantGraph}
    constructed from $S_1$ and $S_2$
	satisfies all constraints.
    The reasoning for the first three constraints
    is analogous to \Cref{le:encodingPreservesHom}:
	Every homomorphism from $F_1$, $F_2$, or $F_3$ to $G_{S_1, S_2}$ yields
	more homomorphisms to $G_{S_1, S_2}$ by utilising tip-stabilising automorphisms
    of the Kneser graphs, where the precise number is given by
    the product of the automorphism products $a_C$ of the gadgets used
    in the left-hand-side graph.
	Moreover, these are already all homomorphisms:
	Since the Kneser graphs
    are pairwise homomorphically incomparable
	and their own cores, the long paths in our gadgets
    guarantee that every Kneser graph in a gadget
	has to be injectively mapped to a Kneser graph of its own isomorphism type.
	Moreover, these homomorphisms have to stabilise the tips
	of the Kneser graphs since
    a non-tip-stabilising homomorphism would contradict
	the fact that Kneser graphs are cores by the long paths in the gadgets.
    Then, the use of $K^A$ and $K^Z$ to mark the beginning and the end
    of a binary number and the long paths in the gadgets
    then further imply that a gadget has to be injectively
    mapped to a gadget encoding the same number.
    In particular, $K^A$ and $K^Z$ from the same gadget
    are mapped to copies of $K^A$ and $K^Z$ in the same gadget since,
    otherwise, either $K^0$ or $K^1$ would have to be mapped to the long paths.
    Moreover, the same argument yields that
    the tip of the gadget on the left-hand side
    has to be mapped to the tip of the gadget on the right-hand side
    since we have not identified the tip
    of a gadget with an isolated vertex.
	This also means that original vertices (identified with
	the tip of a gadget) have to be mapped to original
	vertices.
    Finally, the constraints from (\ref{th:ConstantEncoding:constraintNonInj})
    are satisfied since, otherwise, our use of long paths
    means that we would get a homomorphism
    from one of our Kneser graphs to a proper subgraph of itself
    or to one of the other Kneser graphs;
    this is a contradiction in both cases.

	\begin{figure}
		\centering
		\begin{tikzpicture}[node distance = 0.5cm]
			\node[vertex] (everyB) {};
			\node[vertex, left = 0.3cm of everyB] (everyE) {};
			\node[vertex, above left = 0.5cm and 0.25cm of everyB] (every1) {};
			\node[vertex, above right = 0.5cm and 0.25cm of everyB] (everyk) {};
			\node[] (everydots) at ($(every1)!0.5!(everyk)$) {\scriptsize$\ldots$};
			\draw
			(everyB) edge (everyE)
			edge (every1)
			edge (everyk)
			;
			\path
			(every1) edge node[above] {\footnotesize$1$} ($(every1) + (0,0.3)$)
			(everyk) edge node[above] {\footnotesize$k$} ($(everyk) + (0,0.3)$)
			(everyB) edge node[right] {\footnotesize$B$} ($(everyB) + (0.3,0)$)
			(everyE) edge node[left] {\footnotesize$E$} ($(everyE) + (-0.3,0)$)
			;

			\node[vertex, right = 2.5cm of everyB] (partOneB) {};
			\node[vertex, left = 0.3cm of partOneB] (partOneP) {};
			\node[vertex, above left = 0.5cm and 0.25cm of partOneB] (partOne1) {};
			\node[vertex, above right = 0.5cm and 0.25cm of partOneB] (partOnek) {};
			\node[] (partOnedots) at ($(partOne1)!0.5!(partOnek)$) {\scriptsize$\ldots$};
			\draw[very thick, decorate, decoration = {calligraphic brace}]
			([xshift = -2pt, yshift = 10pt]partOne1.west)
			-- node[above, yshift = 3pt] {$S_1$}
			([xshift = 2pt, yshift = 10pt]partOnek.east)
			;
			\draw
			(partOneB) edge (partOneP)
			edge (partOne1)
			edge (partOnek)
			;
			\path
			(partOneB) edge node[right] {\footnotesize$B$} ($(partOneB) + (0.3,0)$)
			(partOneP) edge node[left] {\footnotesize$P$} ($(partOneP) + (-0.3,0)$)
			(partOne1) edge node[right] {} ($(partOne1) + (0,0.3)$)
			(partOnek) edge node[right] {} ($(partOnek) + (0,0.3)$)
			;

			\node[vertex, right = 2.5cm of partOneB] (partTwoB) {};
			\node[vertex, left = 0.3cm of partTwoB] (partTwoP) {};
			\node[vertex, above left = 0.5cm and 0.25cm of partTwoB] (partTwo1) {};
			\node[vertex, above right = 0.5cm and 0.25cm of partTwoB] (partTwok) {};
			\node[] (partTwodots) at ($(partTwo1)!0.5!(partTwok)$) {\scriptsize$\ldots$};
			\draw[very thick, decorate, decoration = {calligraphic brace}]
			([xshift = -2pt, yshift = 10pt]partTwo1.west)
			-- node[above, yshift = 3pt] {$S_2$}
			([xshift = 2pt, yshift = 10pt]partTwok.east)
			;
			\draw
			(partTwoB) edge (partTwoP)
			edge (partTwo1)
			edge (partTwok)
			;
			\path
			(partTwoB) edge node[right] {\footnotesize$B$} ($(partTwoB) + (0.3,0)$)
			(partTwoP) edge node[left] {\footnotesize$P$} ($(partTwoP) + (-0.3,0)$)
			(partTwo1) edge node[right] {} ($(partTwo1) + (0,0.3)$)
			(partTwok) edge node[right] {} ($(partTwok) + (0,0.3)$)
			;
		\end{tikzpicture}
		\caption{The graph constructed from $S_1$ and $S_2$ in
			the correctness of \Cref{th:ConstantEncoding}.
            A half-present edge
            indicates the use of a gadget from \Cref{fig:binaryGadget}.}
		\label{fig:hardnessEncodedConstantGraph}
	\end{figure}
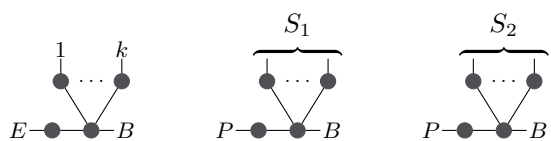

	Conversely, let $G$ be a graph that satisfies all constraints.
	By the constraints of (\ref{th:ConstantEncoding:constraintNonInj}),
    a homomorphism from the other constraint graphs to $G$
	has to be injective on the copies of 
	the Kneser graphs $K^A$, $K^0$,
	$K^1$, and $K^Z$ which means that
	we obtain a distinct homomorphism
	for every non-trivial automorphism of these Kneser graphs.
	For $F_1$ this means, there is exactly one homomorphism from $F_1$
	to $G$ up to tip-stabilising automorphisms of the Kneser graphs.
	In particular, for every component,
	there is exactly one vertex in $G$ to which
	the $B$-gadget tip can be mapped to.
	By the constraint for $F_2$ and a similar argument,
	there are at most $2$ vertices $v_1$ and $v_2$ in $G$ that
	a $B$-gadget tip from $F_2$, i.e.\ the $B$-$P$-graph, can be mapped to.
    Hence, the $B$-gadget tip in an $i$-$B$-$P$ component of $F_1$
    is mapped to $v_1$ or $v_2$ but, by the constraint for $F_1$, not to both.
	Let $S_1, S_2 \subseteq [k]$ be the sets of colours $i$
	for which the $B$-gadget tip of the $i$-$B$-$P$-component
	are mapped to $v_1$ and $v_2$, respectively,
	where we let $S_2 = \varnothing$
	if we only have the one vertex $v_1$;
	then, $S_1$ and $S_2$ form a partition of $S$.
	By the constraint for $F_3$, the $B$-gadget tip
	of the $1$-$k$-$B$-$E$-component of $F_1$
	has to be mapped to a vertex $v_0$ that is distinct from $v_1$ and $v_2$, and moreover,
	every $B$-gadget tip of a $T$-$B$-component is mapped to $v_0$
	and cannot be mapped to $v_1$ or $v_2$
	as this would yield a larger number of homomorphisms
    than the constraint for $F_1$ allows.
	This means that no set $T \in \CC$ is a subset of $S_1$ or $S_2$:
    otherwise, we would obtain a homomorphism
    from the $T$-$B$-component for $T$
    to $G$ that maps the $B$-gadget tip to $v_1$ or $v_2$.
\end{proof}

\subsection{Reduction from $\QuadraticPolynomial$}
\label{sec:uncolouredFiniteSet}

\newcommand{\colourfont}[1]{{#1}}

\newcommand{\pictoColouredPathOf}[3]{\hspace{-3pt}
    \tikz[baseline=-3pt]{
        \node[smallvertex, label={[label distance = -2pt]above:\footnotesize$#2$}] (R) {};\node[smallvertex, label={[label distance = -2pt]left:\footnotesize$#1$}, left = 0.1cm of R] (A) {};\node[smallvertex, label={[label distance = -2pt]right:\footnotesize$#3$}, right = 0.1cm of R] (M) {};\draw (A) edge (R);
        \draw (A) edge (M);}
    \hspace{-2pt}
}

\begin{proof}[Proof of \Cref{th:hardnessFixed}]
	We adapt the reduction of \Cref{th:hardnesscolouredFixed}
	and encode the colours used there by Kneser graphs.
    Since we only used a fixed number of colours in the original reduction,
    we simply take a Kneser graph for every colour:
	fix a set of pairwise homomorphically incomparable connected Kneser graphs
	$K^C$ for $C \in \{\colourfont{R}, \colourfont{M}, \colourfont{M_1}, \colourfont{M_2}, \colourfont{A}, \colourfont{X}, \colourfont{B}, \colourfont{Y}\} \eqqcolon \CC$.
    Here, we include the additional colour $M$ and add a leaf of colour $M$
    to every $R$-coloured vertex of the original reduction
    to eliminate isolated vertices,
    which simplifies the construction.
	For $C \in \CC$, fix an arbitrary vertex of $K^C$, the \emph{tip},
	and let $a_C$ be the number of automorphisms of $K^C$ that stabilise its tip.
	Let $\ell \coloneqq \max_{C \in \CC} \lvert V(K^C) \rvert$.

	\begin{figure}
		\centering
		\begin{subfigure}[b]{0.45\textwidth}
			\centering
			\begin{tikzpicture}[node distance = 0.5cm]
				\node[vertex, label={above:\small$\colourfont{R}$}] (R) {};
				\node[vertex, label={below:\small$\colourfont{A}$}, below left = 0.5cm and 0.25cm of R] (A) {};
				\node[vertex, label={below:\small$\colourfont{X}$}, below right = 0.5cm and -0.15cm of R] (X1) {};
				\node[vertex, label={below:\small$\colourfont{X}$}, below right = 0.5cm and 0.25cm of R] (X2) {};
				\node[vertex, label={above:\small$\colourfont{B}$}, above left = 0.5cm and 0.25cm of R] (B) {};
				\node[vertex, label={above:\small$\colourfont{Y}$}, above right = 0.5cm and 0.25cm of R] (Y) {};
				\node[vertex, label={right:\small$\colourfont{M}$}, right = 0.3cm of R] (C) {};
				\draw
				(R) edge (A)
				edge (X1)
				edge (X2)
				edge (B)
				edge (Y)
				edge (C)
				;
			\end{tikzpicture}
			\caption{$F_\colourfont{poly}$.}
		\end{subfigure}
		\begin{subfigure}[b]{0.45\textwidth}
			\centering
			\begin{tikzpicture}
				\node[vertex] (R) {};
				\node[vertex, below left = 1.0cm and 0.9cm of R] (A) {};
				\node[vertex, below right = 1.0cm and -0.15cm of R] (X1) {};
				\node[vertex, below right = 1.0cm and 0.9cm of R] (X2) {};
				\node[vertex, above left = 1.0cm and 0.9cm of R] (B) {};
				\node[vertex, above right = 1.0cm and 0.9cm of R] (Y) {};
				\node[vertex, right = 0.7cm of R] (C) {};

				\node[kneser, anchor=south] (RKneser) at (R)   {\small$K^{\colourfont{R}}$};
				\node[kneser, anchor=south] (BKneser) at (B)   {\small$K^{\colourfont{B}}$};
				\node[kneser, anchor=south] (YKneser) at (Y)   {\small$K^{\colourfont{Y}}$};
				\node[kneser, anchor=north] (AKneser) at (A)   {\small$K^{\colourfont{A}}$};
				\node[kneser, anchor=north] (X1Kneser) at (X1) {\small$K^{\colourfont{X}}$};
				\node[kneser, anchor=north] (X2Kneser) at (X2) {\small$K^{\colourfont{X}}$};
				\node[kneser, anchor=west] (CKneser) at (C)    {\small$K^{\colourfont{M}}$};
				\draw
				(R) edge node[left] {\small$P_{2\ell}$} (A)
				edge node[left, pos = 0.75, xshift = 3pt] {\small$P_{2\ell}$} (X1)
				edge node[right] {\small$P_{2\ell}$} (X2)
				edge node[left]  {\small$P_{2\ell}$} (B)
				edge node[right] {\small$P_{2\ell}$} (Y)
				edge node[below] {\small$P_{2\ell}$} (C)
				;
			\end{tikzpicture}
			\caption{$F'_\colourfont{poly}$.}
		\end{subfigure}
		\caption{Example of an encoded graph in
			the reduction of \Cref{th:hardnessFixed}.}
		\label{fig:hardnessFixedConstruction}
	\end{figure}
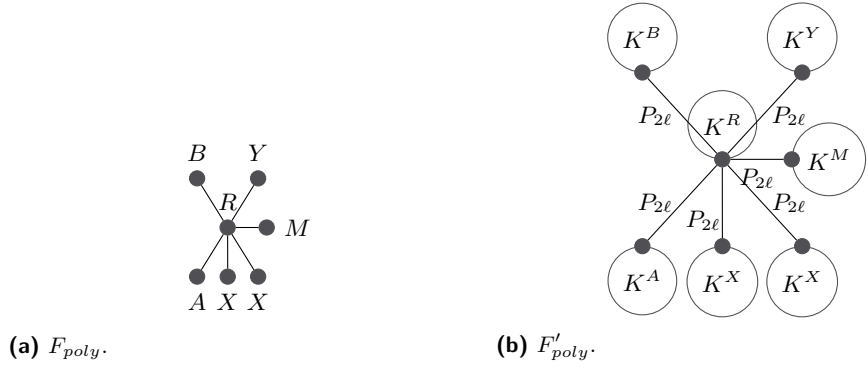

    Let $F$ be a coloured graph
    with colouring function $c \colon V(F) \to \CC$.
    We encode a constraint $\hom(F) = n$ by
    $[\hom(F) = n] \coloneqq \hom(F') = n'$, where
    $F'$ is obtained from $F$ by first introducing a copy of $K^{c(v)}$
    for every vertex $v \in V(F)$ and identifying the tip of $K^{c(v)}$
    with $v$
    and then replacing every original edge of $F$
    by a path $P_{2\ell}$ and
    $n' \coloneqq  n \cdot \prod_{v \in V(F)} a_{c(v)}$.
    See
    \Cref{fig:hardnessFixedConstruction} for an example.

	Given an instance $(a,b,c)$ of \QuadraticPolynomial,
    the reduction then produces the following constraints,
	where (\ref{enum:constraintARC})--(\ref{enum:constraintM2ARXC})
    essentially encode the constraints from the reduction of
	\Cref{th:hardnesscolouredFixed}
    and (\ref{enum:constraintNonInjective}) uses the set
    $\CI \coloneqq \{h(K^C) \mid
        C \in \CC,\, \text{$h$ is a non-injective homomorphism from $K^C$ to another graph}\}$.
    \begin{multicols}{2}
        \begin{alphaenumerate}
            \item $[\hom(\pictoColouredPathOf{A}{R}{M}) = a + 1]$,\label{enum:constraintARC}
            \item $[\hom(\pictoColouredPathOf{B}{R}{M}) = b + 1]$,
            \item $[\hom(F_\colourfont{poly}) = c]$.\label{enum:constraintPoly}
            \item $[\hom(\pictoColouredEdgeOf{R}{M}) = 2]$,\label{enum:constraintRC}
            \item $[\hom(\pictoColouredPathOf{M_1}{R}{M}) = 1]$,\label{enum:constraintM1RC}
            \item $[\hom(\pictoColouredPathOf{M_2}{R}{M}) = 1]$,\label{enum:constraintM2RC}
            \item $[\hom(\tikz[baseline=-3pt]{
                \node[smallvertex, label={[label distance = -2pt]above:\footnotesize$\colourfont{R}$}] (R) {};\node[smallvertex, label={[label distance = -2pt]left:\footnotesize$\colourfont{M_1}$}, above left = 0.04cm and 0.12cm of R] (M) {};\node[smallvertex, label={[label distance = -2pt]left:\footnotesize$\colourfont{M_2}$}, below left = 0.04cm and 0.12cm of R] (MT) {};\node[smallvertex, label={[label distance = -2pt]right:\footnotesize$\colourfont{M}$}, right = 0.1cm of R] (C) {};
                \draw (R) edge (M)
                edge (MT)
                edge (C);
            }) = 0]$,\label{enum:constraintM1M2RC}
            \item $[\hom(\tikz[baseline=-3pt]{
                \node[smallvertex, label={[label distance = -2pt]below:\footnotesize$\colourfont{R}$}] (R) {};\node[smallvertex, label={[label distance = -2pt]left:\footnotesize$\colourfont{M_1}$}, left = 0.1cm of R] (M) {};\node[smallvertex, label={[yshift = 3pt, label distance = -2pt]left:\footnotesize$\colourfont{B}$}, above left = 0.12cm and 0.04cm of R] (B) {};\node[smallvertex, label={[yshift = 3pt, label distance = -2pt]right:\footnotesize$\colourfont{Y}$}, above right = 0.12cm and 0.04cm of R] (Y) {};\node[smallvertex, label={[label distance = -2pt]right:\footnotesize$\colourfont{M}$}, right = 0.1cm of R] (C) {};
                \draw (R) edge (M)
                edge (B)
                edge (Y)
                edge (C);
            }) = 1]$,\label{enum:constraintM1BRYC}
            \item $[\hom(\tikz[baseline=-3pt]{
                \node[smallvertex, label={[label distance = -2pt]below:\footnotesize$\colourfont{R}$}] (R) {};\node[smallvertex, label={[label distance = -2pt]left:\footnotesize$\colourfont{M_2}$}, left = 0.1cm of R] (M) {};\node[smallvertex, label={[yshift = 3pt, label distance = -2pt]left:\footnotesize$\colourfont{A}$}, above left = 0.12cm and 0.04cm of R] (B) {};\node[smallvertex, label={[yshift = 3pt, label distance = -2pt]right:\footnotesize$\colourfont{X}$}, above right = 0.12cm and 0.04cm of R] (Y) {};\node[smallvertex, label={[label distance = -2pt]right:\footnotesize$\colourfont{M}$}, right = 0.1cm of R] (C) {};
                \draw (R) edge (M)
                edge (B)
                edge (Y)
                edge (C);
            }) = 1]$.\label{enum:constraintM2ARXC}
\item $\hom(F) = 0$ for a graph $F$ of every isomorphism type in $\CI$.\label{enum:constraintNonInjective}
        \end{alphaenumerate}
    \end{multicols}
    Note that every graph in $\CI$ is connected since our Kneser graphs are,
    is not bipartite as Kneser graphs are cores,
    and has at most $\ell$ vertices by the definition of $\ell$.
    Moreover, note that
    (\ref{enum:constraintNonInjective})
    only results in a finite number of constraints
    since we only use finitely many Kneser graphs.
	Finally, let $\CF$ be the set of graphs that occur in these constraints.
    Then, $\CF$ is finite and independent of the instance $(a,b,c)$.

	Let $(a,b,c)$ be an instance of $\QuadraticPolynomial$
    with $x,y \in \NN$ such that $ax^2 + by = c$.
    Let $G_{a,b,x,y}$ be the coloured graph from \Cref{fig:hardnessFixedGraph}
    and denote its colouring function by $c$.
	Let $G'_{a,b,x,y}$ be
	the graph obtained from $G_{a,b,x,y}$
	by the same encoding as before, i.e.\ first introducing a copy of $K^{c(v)}$
	for every vertex $v \in V(G_{a,b,x,y})$ and identifying the tip of $K^{c(v)}$
	with $v$
	and then replacing every original edge of $G_{a,b,x,y}$
	by a path $P_{2\ell}$.
	Let us verify that $G'_{a,b,x,y}$ satisfies all constraints.
	First, we observe that
	$G_{a,b,x,y}$ satisfies every unencoded constraint $\hom(F) = n$
	from (\ref{enum:constraintARC})--(\ref{enum:constraintM2ARXC}).
    Let $\hom(F') = n' \coloneqq [\hom(F) = n]$.
	Every homomorphism from $F$ to $G_{a,b,x,y}$ yields
	$\prod_{v \in V(G_{a,b,x,y})} a_{c(v)}$ homomorphisms from
	$F'$ to $G'_{a,b,x,y}$ by utilising tip-stabilising automorphisms
    of the Kneser graphs.
	Moreover, these are already all homomorphisms:
	Since our Kneser graphs are pairwise homomorphically incomparable
	and their own cores, every Kneser graph in $F'$
	has to be injectively mapped to a Kneser graph of its own isomorphism type.
	Moreover, these homomorphisms have to stabilise the tips
	of the Kneser graphs since $F$ is not an isolated vertex,
	which means that a non-tip-stabilising homomorphism would contradict
	the fact that Kneser graphs are cores by our use of long paths.
	This also means that original vertices (identified with
	the tip of a Kneser graph) have to be mapped to original
	vertices. Moreover, since there are no adjacent vertices
	in $F$ that have the same colour, adjacent original vertices
	cannot be mapped to the same original vertex and
	are hence mapped to adjacent
	original vertices.
    Finally, the constraints from (\ref{enum:constraintNonInjective})
    are satisfied since, otherwise, our use of long paths
    means that we would get a homomorphism
    from one of our Kneser graphs to a proper subgraph of itself
    or to one of the other Kneser graphs;
    this is a contradiction in both cases.

	\begin{figure}
		\centering
		\begin{tikzpicture}
			\node[vertex, label={above:$\colourfont{R}$}] (leftR) {};
			\node[vertex, label={below:$\colourfont{X}$}, below right = 0.5cm and 0.1cm of leftR] (leftX1) {};
			\node[vertex, label={below:$\colourfont{X}$}, below right = 0.5cm and 0.7cm of leftR] (leftXx) {};
			\node[vertex, label={below:$\colourfont{A}$}, below left = 0.5cm and 0.7cm of leftR] (leftA1) {};
			\node[vertex, label={below:$\colourfont{A}$}, below left = 0.5cm and 0.1cm of leftR] (leftAa) {};
			\node[vertex, label={above:$\colourfont{Y}$}, above right = 0.5cm and 0.25cm of leftR] (leftY) {};
			\node[vertex, label={above:$\colourfont{B}$}, above left= 0.5cm and 0.25cm of leftR] (leftB) {};
			\node[vertex, label={right:$\colourfont{M}$}, right = 0.3cm of leftR] (leftC) {};
			\node[vertex, label={left:$\colourfont{M_1}$}, left = 0.3cm of leftR] (leftM) {};

			\node[] (Xdots) at ($(leftX1)!0.5!(leftXx)$) {\scriptsize$\ldots$};
			\draw[very thick, decorate, decoration = {calligraphic brace}]
			([xshift = 2pt, yshift = -16pt]leftXx.east)
			-- node[below, yshift = -3pt] {$x$}
			([xshift = -2pt, yshift = -16pt]leftX1.west)
			;

			\node[] (Adots) at ($(leftA1)!0.5!(leftAa)$) {\scriptsize$\ldots$};
			\draw[very thick, decorate, decoration = {calligraphic brace}]
			([xshift = 2pt, yshift = -16pt]leftAa.east)
			-- node[below, yshift = -3pt] {$a$}
			([xshift = -2pt, yshift = -16pt]leftA1.west)
			;

			\draw
			(leftR) edge (leftY)
			edge (leftB)
			edge (leftM)
			edge (leftX1)
			edge (leftXx)
			edge (leftA1)
			edge (leftAa)
			edge (leftC)
			;

			\node[vertex, label={above:$\colourfont{R}$}, right = 4cm of leftR] (rightR) {};
			\node[vertex, label={below:$\colourfont{Y}$}, below right = 0.5cm and 0.1cm of rightR] (rightX1) {};
			\node[vertex, label={below:$\colourfont{Y}$}, below right = 0.5cm and 0.7cm of rightR] (rightXx) {};
			\node[vertex, label={below:$\colourfont{B}$}, below left = 0.5cm and 0.7cm of rightR] (rightA1) {};
			\node[vertex, label={below:$\colourfont{B}$}, below left = 0.5cm and 0.1cm of rightR] (rightAa) {};
			\node[vertex, label={above:$\colourfont{X}$}, above right = 0.5cm and 0.25cm of rightR] (rightY) {};
			\node[vertex, label={above:$\colourfont{A}$}, above left = 0.5cm and 0.25cm of rightR] (rightB) {};
			\node[vertex, label={right:$\colourfont{M}$}, right = 0.3cm of rightR] (rightC) {};
			\node[vertex, label={left:$\colourfont{M_2}$}, left = 0.3cm of rightR] (rightM) {};

			\node[] (Xdots) at ($(rightX1)!0.5!(rightXx)$) {\scriptsize$\ldots$};
			\draw[very thick, decorate, decoration = {calligraphic brace}]
			([xshift = 2pt, yshift = -16pt]rightXx.east)
			-- node[below, yshift = -3pt] {$y$}
			([xshift = -2pt, yshift = -16pt]rightX1.west)
			;

			\node[] (rightAdots) at ($(rightA1)!0.5!(rightAa)$) {\scriptsize$\ldots$};
			\draw[very thick, decorate, decoration = {calligraphic brace}]
			([xshift = 2pt, yshift = -16pt]rightAa.east)
			-- node[below, yshift = -3pt] {$b$}
			([xshift = -2pt, yshift = -16pt]rightA1.west)
			;

			\draw
			(rightR) edge (rightY)
			edge (rightB)
			edge (rightM)
			edge (rightX1)
			edge (rightXx)
			edge (rightA1)
			edge (rightAa)
			edge (rightC)
			;

		\end{tikzpicture}
		\caption{The modified graph $G_{a,b,x,y}$ constructed from $a$, $b$, $x$, and $y$ in
			the reduction of
			\Cref{th:hardnessFixed}.}
		\label{fig:hardnessFixedGraph}
	\end{figure}
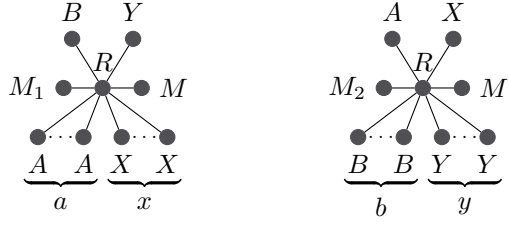

	Conversely, assume that there is a graph $G$ that satisfies all constraints produced
    by the reduction.
    By the constraints from (\ref{enum:constraintNonInjective}),
    a homomorphism from one of the constraint graphs from
    (\ref{enum:constraintARC})--(\ref{enum:constraintM2ARXC})
    to $G$ has to be injective on all copies of our Kneser graphs
    used in the constraint graph.
    Hence, a single homomorphism from the constraint graph of
	(\ref{enum:constraintM1RC}) to $G$
	yields $a_{\colourfont{M_1}} \cdot a_{\colourfont{R}} \cdot a_{\colourfont{C}}$
	homomorphisms to $G$ by considering tip-stabilising automorphisms of the Kneser graphs.
	Since the constraint enforces that we have
	precisely this number of homomorphisms from the constraint graph of
	(\ref{enum:constraintM1RC}) to $G$,
	this means that every such homomorphism has to map the tip of the copy of $K^\colourfont{R}$
	to the same vertex of $G$, say~$v_1$.
	Analogously, every homomorphism has to map the tip of the copy of $K^\colourfont{R}$
	in the constraint graph from~(\ref{enum:constraintM2RC}) to the same vertex of $G$, say $v_2$.
    \begin{claim}
        \label{cl:v1v2AreDistinct}
        The vertices $v_1$ and $v_2$ are distinct.
    \end{claim}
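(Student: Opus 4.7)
The plan is to prove the claim by contradiction: I would assume $v_1 = v_2$ and then construct a homomorphism witnessing that the constraint from~(\ref{enum:constraintM1M2RC}) must have positive value, contradicting the hypothesis that it equals~$0$.

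First, I would extract the local structure around $v_1$ and $v_2$ from the existing homomorphisms. The constraint $[\hom(\pictoColouredPathOf{M_1}{R}{M}) = 1]$ guarantees that, up to tip-stabilising automorphisms of the three Kneser copies (which are the only flexibility allowed by (\ref{enum:constraintNonInjective})), there is exactly one homomorphism $\phi_1$ of the encoded graph into $G$, and by definition of $v_1$ this $\phi_1$ sends the tip of $K^{\colourfont R}$ to $v_1$. Hence $v_1$ is incident in $G$ to the endpoint of a $P_{2\ell}$-path whose other endpoint carries an injectively embedded copy of $K^{\colourfont{M_1}}$, and also to a $P_{2\ell}$-path reaching a copy of $K^{\colourfont M}$. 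Symmetrically, $v_2$ has $P_{2\ell}$-attached copies of $K^{\colourfont{M_2}}$ and $K^{\colourfont M}$. I would also read off that $v_1$ (respectively $v_2$) carries the tip of an injectively embedded copy of $K^{\colourfont R}$, as ensured by~(\ref{enum:constraintNonInjective}).

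Next, assuming $v_1 = v_2$, I would glue these pieces into a single homomorphism $\psi$ from the encoded graph $F'$ of constraint~(\ref{enum:constraintM1M2RC}) to $G$: send the central $R$-labelled vertex of $F'$ to $v_1$; map its attached $K^{\colourfont R}$ onto the $K^{\colourfont R}$ witnessed above; map the $M_1$-branch of $F'$ (its $P_{2\ell}$-path together with the Kneser copy $K^{\colourfont{M_1}}$) via $\phi_1$; map the $M_2$-branch via the analogous $\phi_2$ coming from constraint~(\ref{enum:constraintM2RC}); and map the $M$-branch using, say, the $M$-piece from $\phi_1$. Because each branch of $F'$ is attached to the centre only through the $R$-vertex, the glued map is well-defined and is a homomorphism. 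This yields $\hom(F',G) \geq 1$, contradicting $[\hom(\tikz[baseline=-3pt]{\node[smallvertex] (R) {};\node[smallvertex, above left = 0.04cm and 0.12cm of R] (M) {};\node[smallvertex, below left = 0.04cm and 0.12cm of R] (MT) {};\node[smallvertex, right = 0.1cm of R] (C) {};\draw (R) edge (M) edge (MT) edge (C);}) = 0]$ from~(\ref{enum:constraintM1M2RC}).

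The main obstacle will be justifying that the glued map $\psi$ is indeed a valid homomorphism rather than accidentally producing conflicts inside $G$; I expect this to follow from the long paths $P_{2\ell}$ in the gadgets together with the fact that (\ref{enum:constraintNonInjective}) forces every Kneser copy occurring in the image to be an isolated injective embedding with no unintended incidences, so the three branches of $F'$ can be mapped independently once their shared root is pinned to $v_1 = v_2$.
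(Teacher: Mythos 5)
Your proof is correct and takes essentially the same route as the paper: assume $v_1 = v_2$, then combine the homomorphisms witnessing constraints (\ref{enum:constraintM1RC}) and (\ref{enum:constraintM2RC}) into a single homomorphism of the constraint graph of (\ref{enum:constraintM1M2RC}), contradicting its value of $0$. The paper states this in three sentences without spelling out the gluing; your additional justification that the branches meet only at the shared root is the right (and routine) detail.
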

    \begin{claimproof}
        Assume that $v_1 = v_2$. Then, there are homomorphisms $h_1$
        and $h_2$ from the encoding of (\ref{enum:constraintM1RC}) and
        (\ref{enum:constraintM2RC}), respectively, to $G$
        that map the tip of the respective copy of $K^\colourfont{R}$
        to $v_1 = v_2$.
        Then, however, we get a homomorphism from the constraint graph of
        (\ref{enum:constraintM1M2RC}), which contradicts the constraint.
        Hence, $v_1 \neq v_2$.
    \end{claimproof}

	Now, consider the constraint from (\ref{enum:constraintRC}).
    \begin{claim}
        \label{cl:eitherV1OrV2}
        Every homomorphism from the constraint graph of (\ref{enum:constraintRC})
        has to map the tip of the copy of $K^\colourfont{R}$
        to $v_1$ or $v_2$.
    \end{claim}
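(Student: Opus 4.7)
The plan is to show that the $2 \cdot a_{\colourfont{R}} \cdot a_{\colourfont{M}}$ homomorphisms guaranteed by constraint~(\ref{enum:constraintRC}) are fully accounted for by the two maps already constructed from $v_1$ and $v_2$, so no third vertex can serve as the image of the $\colourfont{R}$-tip.

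First, I would argue a general normal-form lemma for homomorphisms from any encoded constraint graph into $G$: by the constraints from (\ref{enum:constraintNonInjective}) every such homomorphism is injective on each copy of a Kneser graph $K^C$, and then the usual core/homomorphic-incomparability argument (as used in the forward direction of the proof) combined with the length-$2\ell$ paths forces every copy of $K^C$ in the source to be mapped isomorphically onto a copy of $K^C$ in $G$ in a tip-preserving way, and forces the long paths to land on corresponding long paths connecting these tips in $G$. Consequently, a homomorphism from the encoded $\colourfont{R}$-$\colourfont{M}$ edge to $G$ is completely determined, up to the $a_{\colourfont{R}} \cdot a_{\colourfont{M}}$ tip-stabilising automorphisms of $K^{\colourfont{R}}$ and $K^{\colourfont{M}}$, by the vertex of $G$ to which the $\colourfont{R}$-tip is mapped.

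Second, I would show that $v_1$ and $v_2$ are both realised as such images. The homomorphism from the encoded $\colourfont{M_1}$-$\colourfont{R}$-$\colourfont{M}$ path that sent the $\colourfont{R}$-tip to $v_1$ (which exists by the discussion preceding \Cref{cl:v1v2AreDistinct}) restricts, via the obvious subgraph embedding of the encoded $\colourfont{R}$-$\colourfont{M}$ edge into the encoded $\colourfont{M_1}$-$\colourfont{R}$-$\colourfont{M}$ path, to a homomorphism from the encoded $\colourfont{R}$-$\colourfont{M}$ edge to $G$ mapping the $\colourfont{R}$-tip to $v_1$. The symmetric argument with $\colourfont{M_2}$ yields one mapping the $\colourfont{R}$-tip to $v_2$.

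By \Cref{cl:v1v2AreDistinct}, $v_1 \neq v_2$, so the normal-form lemma produces $2 \cdot a_{\colourfont{R}} \cdot a_{\colourfont{M}}$ pairwise distinct homomorphisms from the encoded $\colourfont{R}$-$\colourfont{M}$ edge to $G$, exhausting the count required by constraint~(\ref{enum:constraintRC}). Therefore every such homomorphism must already belong to one of these two orbits, proving the claim. The main obstacle is the normal-form lemma for encoded constraint graphs; it is the same core/pairwise-incomparable-Kneser-graph argument that underlies the forward direction of the proof, but since a general graph $G$ satisfying the constraints need not respect our encoding conventions, one has to lean explicitly on the auxiliary constraints of item~(\ref{enum:constraintNonInjective}) to rule out non-injective behaviour on the Kneser pieces before the standard argument goes through.
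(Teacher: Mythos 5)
Your proposal is correct and follows essentially the same route as the paper: the paper likewise restricts the homomorphisms guaranteed by constraints~(\ref{enum:constraintM1RC}) and~(\ref{enum:constraintM2RC}) to the encoded $\colourfont{R}$-$\colourfont{M}$ edge, obtains $a_{\colourfont{R}} \cdot a_{\colourfont{M}}$ distinct homomorphisms with tip-image $v_1$ and another $a_{\colourfont{R}} \cdot a_{\colourfont{M}}$ with tip-image $v_2$ (distinct by \Cref{cl:v1v2AreDistinct}), and concludes from the exact count $2 a_{\colourfont{R}} a_{\colourfont{M}}$ in constraint~(\ref{enum:constraintRC}) that no other tip-image is possible. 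Your explicit normal-form lemma is a slightly more elaborate packaging of the injectivity-plus-automorphism argument the paper uses implicitly, but the substance is the same.
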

    \begin{claimproof}
        A homomorphism from the constraint graph of
        (\ref{enum:constraintM1RC}) to $G$
        already yields $a_{\colourfont{R}} \cdot a_{\colourfont{M}}$
        homomorphisms from the constraint graph of (\ref{enum:constraintRC})
        to $G$.
        Then,
        a homomorphism from the encoding of
        (\ref{enum:constraintM2RC}) to $G$
        yields
        $a_{\colourfont{R}} \cdot a_{\colourfont{M}}$ further homomorphisms,
        i.e.\ these homomorphisms are distinct from the ones obtained
        from (\ref{enum:constraintM1RC}) since these map the tip
        of the copy of $K^{\colourfont{R}}$ to $v_2$ and not $v_1$,
        which is distinct from $v_2$ by \Cref{cl:v1v2AreDistinct}.
        Hence, by the number in (\ref{enum:constraintRC}),
        the claim follows.
    \end{claimproof}

	\newcommand{\pictoRootedKOf}[1]{
        \hspace{-3pt}
		\tikz[baseline=-3pt]{
			\node[smallvertex] (K) {};\node[smallvertex, label={[label distance = -2pt]left:$r$}, left = 0.4cm of K] (r) {};\node[kneser, anchor=west, inner sep = 0pt, minimum size = 0.7cm] (Kneser) at (K) {\footnotesize$K^{\colourfont{#1}}$};\draw (r) edge node[above, yshift = -1pt] {\footnotesize$P_{2\ell}$} (K);
		}
	}
	\newcommand{\pictoRootedKB}{\pictoRootedKOf{B}}

	Consider a homomorphism from the constraint graph of
	(\ref{enum:constraintPoly}) to $G$.
    By \Cref{cl:eitherV1OrV2}, such a homomorphism has to map the tip
	of the copy of $K^\colourfont{R}$ to either $v_1$ or $v_2$.
	Hence,
	\begin{equation*}
		\hom(F_{\colourfont{poly}}', G)
		= \hom(F_{\colourfont{poly}}', G;\, \text{tip of $K^{\colourfont{R}}$} \mapsto v_1)
		+ \hom(F_{\colourfont{poly}}', G;\, \text{tip of $K^{\colourfont{R}}$} \mapsto v_2).
	\end{equation*}
    By \Cref{cl:eitherV1OrV2} and the constraint from (\ref{enum:constraintRC}),
    we further have
	\begin{align*}
		\hom(F_{\colourfont{poly}}', G;\, \text{tip of $K^{\colourfont{R}}$} \mapsto v_1)
		= a_\colourfont{R} \cdot a_\colourfont{M}
		{}\cdot{} &\hom(\pictoRootedKB; r \mapsto v_1)\\
		{}\cdot{} &\hom(\pictoRootedKOf{Y}; r \mapsto v_1)\\
		{}\cdot{} &\hom(\pictoRootedKOf{A}; r \mapsto v_1)\\
		{}\cdot{} &\hom(\pictoRootedKOf{X}; r \mapsto v_1)^2.
	\end{align*}

	\begin{claim}
        \label{cl:homEquationsFiniteSet}
		\begin{enumerate}
			\item $\hom(\pictoRootedKB; r \mapsto v_1) = a_{\colourfont{B}}$.
			\item $\hom(\pictoRootedKOf{Y}; r \mapsto v_1) = a_{\colourfont{Y}}$.
			\item $\hom(\pictoRootedKOf{A}; r \mapsto v_1) = a_{\colourfont{A}} \cdot a$.
		\end{enumerate}
	\end{claim}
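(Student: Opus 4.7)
The plan is to derive all three identities in one framework, by decomposing each relevant homomorphism count at a fixed image of the central $\colourfont{R}$-vertex and exploiting the fact that constraint~(\ref{enum:constraintNonInjective}) forces every Kneser-factor to be a positive multiple of the corresponding tip-stabilising automorphism count. Writing $\alpha^w \coloneqq \hom(K^\colourfont{R}, G; \mathrm{tip} \mapsto w)$ and $h_C^w \coloneqq \hom(\pictoRootedKOf{C}; r \mapsto w)$, each injective homomorphism of $K^C$ into $G$ generates an orbit of size exactly $a_C$ under precomposition with tip-stabilising source-automorphisms, since injectivity ensures that any non-trivial such automorphism yields a distinct composition. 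Hence $\alpha^w \in a_\colourfont{R}\mathbb{N}$ and $h_C^w \in a_C\mathbb{N}$ for every vertex $w$.

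First I would evaluate constraint~(\ref{enum:constraintM1RC}) at $v_1$. Fixing the image of the $R$-vertex to $v_1$ decouples the three components attached to it (the $K^\colourfont{R}$-copy identified with the centre, together with the $\colourfont{M_1}$- and $\colourfont{M}$-branches), giving $\alpha^{v_1} \cdot h_{\colourfont{M_1}}^{v_1} \cdot h_{\colourfont{M}}^{v_1} = a_\colourfont{R} \cdot a_{\colourfont{M_1}} \cdot a_\colourfont{M}$. Since the product is nonzero, each factor is a strictly positive multiple of its respective $a$-value, and the equality of the product with the product of these lower bounds forces equality factor by factor. The symmetric argument applied to constraint~(\ref{enum:constraintM2RC}) then yields $\alpha^{v_2} = a_\colourfont{R}$, $h_{\colourfont{M_2}}^{v_2} = a_{\colourfont{M_2}}$, and $h_{\colourfont{M}}^{v_2} = a_\colourfont{M}$.

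For the first two statements of the claim, the key observation is that deleting the $\colourfont{B}$- and $\colourfont{Y}$-branches from the encoded graph of constraint~(\ref{enum:constraintM1BRYC}) leaves a subgraph isomorphic to that of~(\ref{enum:constraintM1RC}); restricting any homomorphism to this subgraph therefore forces the $R$-tip to map to $v_1$. The resulting five-factor decomposition at $v_1$, divided through by the identity coming from~(\ref{enum:constraintM1RC}), reduces to $h_\colourfont{B}^{v_1} \cdot h_\colourfont{Y}^{v_1} = a_\colourfont{B} \cdot a_\colourfont{Y}$, and the divisibility-plus-positivity argument once more pins down each factor individually.

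Finally, for the third statement I use constraint~(\ref{enum:constraintARC}), which gives $\sum_w \alpha^w h_\colourfont{A}^w h_\colourfont{M}^w = (a+1) a_\colourfont{A} a_\colourfont{R} a_\colourfont{M}$. Any $w$ with nonzero contribution satisfies $\alpha^w h_\colourfont{M}^w > 0$ and so lies in $\{v_1, v_2\}$ by \cref{cl:eitherV1OrV2}. Substituting the values established in the second paragraph together with the $v_2$-analogue $h_\colourfont{A}^{v_2} = a_\colourfont{A}$, derived from constraint~(\ref{enum:constraintM2ARXC}) by exactly the argument used for the first two statements, reduces the equation to $h_\colourfont{A}^{v_1} + a_\colourfont{A} = (a+1) a_\colourfont{A}$, giving the claim. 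The most delicate point throughout will be the orbit-divisibility argument of paragraph one: one must carefully distinguish the group of tip-stabilising source-automorphisms of $K^C$, which acts freely on injective homomorphisms thanks to~(\ref{enum:constraintNonInjective}), from the automorphism group of its image inside $G$.
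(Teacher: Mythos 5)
Your proposal is correct and takes essentially the same approach as the paper: the non-injectivity constraints (\ref{enum:constraintNonInjective}) make the tip-stabilising automorphism groups act freely, so each factor of the cut-vertex decomposition is a positive multiple of the corresponding $a_C$ and the exact counts in (\ref{enum:constraintM1RC}), (\ref{enum:constraintM1BRYC}), (\ref{enum:constraintM2ARXC}) force equality factor by factor, after which (\ref{enum:constraintARC}) splits over $v_1$ and $v_2$ exactly as in the paper's treatment of the third identity. The only cosmetic deviation is that you pin the $R$-tip of the graph in (\ref{enum:constraintM1BRYC}) to $v_1$ by restricting to the subgraph from (\ref{enum:constraintM1RC}), where the paper instead excludes $v_2$ via (\ref{enum:constraintM1M2RC}); both arguments are sound.
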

	\begin{claimproof}
		For the first claim, we use the constraint from (\ref{enum:constraintM1BRYC}).
		The tip of the copy of $K^\colourfont{R}$ can only be mapped to $v_1$ or $v_2$;
		more precisely, it can only be mapped to $v_1$ by the constraint from (\ref{enum:constraintM1M2RC}).
		Hence, all $a_{\colourfont{M_1}} \cdot a_\colourfont{B} \cdot a_\colourfont{R} \cdot a_\colourfont{Y} \cdot a_\colourfont{M}$
		homomorphisms from the constraint graph of (\ref{enum:constraintM1BRYC}) to $G$
		map the tip of the copy of $K^\colourfont{R}$ to $v_1$.
		Moreover, all these homomorphisms are already obtained from
        the tip-stabilising automorphisms of the Kneser graphs,
		which means that the first equality must hold.
		The second equality is proven analogously.

		For the third inequality, we use the constraint from (\ref{enum:constraintARC}).
        Its constraint graph has precisely
		$a_{\colourfont{A}} \cdot a_{\colourfont{R}} \cdot a_{\colourfont{M}} \cdot (a+1)$ homomorphisms to $G$.
		The tip of the copy of $K^\colourfont{R}$ can only be mapped to $v_1$ or $v_2$
        by \Cref{cl:eitherV1OrV2}.
        Consider the constraint graph from (\ref{enum:constraintM2ARXC}):
        analogously to the proof of the first equation, the tip of the copy of
		of $K^\colourfont{R}$ can only be mapped to $v_2$.
		Hence, due to the constraint from (\ref{enum:constraintM2ARXC}), we have
		precisely $a_{\colourfont{A}} \cdot a_{\colourfont{R}} \cdot a_{\colourfont{M}}$
		homomorphisms from the constraint graph of (\ref{enum:constraintARC})
		that map the tip of the copy of $K^{\colourfont{R}}$ to $v_2$.
		Hence, the remaining $a_{\colourfont{A}} \cdot a_{\colourfont{R}} \cdot a_{\colourfont{M}} \cdot a$ homomorphisms
		map this tip to $v_1$.
		Then, the claim follows with \Cref{cl:eitherV1OrV2} and the constraint from (\ref{enum:constraintRC}).
	\end{claimproof}

    To finish the proof, let $x \coloneqq \hom(\pictoRootedKOf{X}; r \mapsto 
    v_1) / a_\colourfont{X}$
	and observe that this is a natural number.
	Then, with \Cref{cl:homEquationsFiniteSet}, we have
	\begin{align*}
		\hom(F_{\colourfont{poly}}', G;\, \text{tip of $K^{\colourfont{R}}$} \mapsto v_1)
		= a_\colourfont{R} \cdot a_\colourfont{M} \cdot a_\colourfont{B} \cdot a_\colourfont{Y} \cdot a_\colourfont{A} \cdot a \cdot a_\colourfont{X}^2 \cdot x^2.
	\end{align*}
	Let $y \coloneqq \hom(\pictoRootedKOf{Y}; r \mapsto v_2) / 
	a^\colourfont{Y}$.
	By symmetry, we get
	\begin{align*}
		\hom(F_{\colourfont{poly}}', G;\, \text{tip of $K^{\colourfont{R}}$} \mapsto v_2)
		= a_\colourfont{R} \cdot a_\colourfont{M} \cdot a_\colourfont{B} \cdot b \cdot a_\colourfont{Y} \cdot y \cdot a_\colourfont{A} \cdot a_\colourfont{X}^2.
	\end{align*}
	Hence,
	\begin{align*}
		\hom(F_{\colourfont{poly}}', G)
		= a_\colourfont{R} \cdot a_\colourfont{M} \cdot a_\colourfont{B} \cdot a_\colourfont{Y} \cdot a_\colourfont{A} \cdot a_\colourfont{X}^2 (a \cdot x^2 + b \cdot y).
	\end{align*}
	By the encoded constraint, this implies that $a \cdot x^2 + b \cdot y = c$.
\end{proof}

\subsection{Reduction from $\existsEqualsThreeColouring$}
\label{sec:uncolouredGeneralHardness}

We will replace labels used in the proof of $\NP^\CEP$-hardness, again 
employing the binary encoding via incomparable fixed Kneser graphs. Since the 
labelled constraint graph $F'$ we produced in the reduction of 
\cref{th:NPCEPReduction} can have an arbitrary structure and 
thus can contain undesirable images of the Kneser graphs, we do an additional 
clean-up step and replace the edges of $F'$ with bidirectional gadgets. 

\begin{proof}[Proof of \Cref{th:uncolouredGeneralHardness}]
As before, given an instance $x = (F,S,k)$ of 
	$\existsEqualsThreeColouring$,
	let $m := |S|$ and $S = \{s_1,\dots,s_m\}$. 
	For any homomorphism $c \colon F[S] \to K_3$ of $F$ with exactly $k$ 
	extensions $\hat{c} \colon F \to K_3$, there will be 
	at least one graph consistent with all of our constraints,
	and none if $x = (F,S,k) \notin \existsEqualsThreeColouring$.
	
	We fix the pairwise homomorphically
    incomparable Kneser graphs $K^C$ for $C \in \CC \coloneqq \{I_0, I_1, D_0, D_1, 
    A, Z, 0, 1\}$
    and a vertex as the \emph{tip} for each of these Kneser graphs.
	Let $\ell \coloneqq \max_{C \in \CC} \lvert V(K^C) \rvert$,
	and for $C \in \CC$,
	let $a_C$ be the number of automorphisms of $K^C$
	that stabilise the tip. 
	Let $\langle \ell_1 \rangle,\dots,\langle \ell_m \rangle$
    denote gadgets as shown in \Cref{fig:binaryGadget}
    that encode the labels $\ell_1, \dots, \ell_m$ in binary by using
	the Kneser graphs $K^A$, $K^Z$, $K^0$, and $K^1$
	and fix the vertex $t$ as their own tip.
    Let $\mathsf{I}$ denote the \emph{indicator gadget}
    from \Cref{fig:indicatorGadget}
    built from $K^{I_0}$ and $K^{I_1}$, and
	let $\mathsf{BD}$ denote the
	\emph{bidirectional gadget}
    obtained by taking two copies of the the direction gadget from
    \Cref{fig:directionGadget}
    built from $K^{D_0}$ and $K^{D_1}$
    in opposing directions
    and identifying their end vertices.
	For a label ${\ell}$, let $a_{\ell}$ be the product of the
    numbers $a_C$ of tip-stabilising automorphisms of the Kneser graphs $K^C$
    used in $\langle {\ell} \rangle$.
    Analogously, define
    $a_{\mathsf{I}} \coloneqq a_{I_0} a_{I_1}$
    and
	$a_{\mathsf{BD}}\coloneqq a_{D_0}^2 a_{D_1}^2$
    as the number of automorphisms of $\mathsf{I}$ and $\mathsf{BD}$, respectively,
    that stabilise all vertices
    but the non-tip vertices of the Kneser graphs.

	Let $F'$ be the labelled graph obtained from instance $x = (F,S,k)$ by
	assigning label~$\ell_i$ to
	vertex $s_i \in S$ for every $i \in [m]$.
	Next, we will describe how labelled graphs can be encoded as constraint
	graphs without labels.
For an $L$-labelled graph $H$ with $L \subseteq \{\ell_1, \dots, \ell_m\}$
    and a natural number $n$,
	let $[\hom(H) = n] \coloneqq \hom([H]) = n'$, 
	where
	$[H]$ is obtained from $H$ by first introducing a copy of $\langle 
	{\ell} \rangle$ for every label ${\ell}$ present in $H$
    and identifying the tip of $\langle {\ell} \rangle$ with
    the vertex that $\ell$ is assigned to,
    then introducing an indicator gadget $\mathsf{I}$
    for every original vertex $v$ of $H$ and identifying its tip with $v$,
	and finally replacing every original edge of~$H$ by a bidirectional gadget 
	$\mathsf{BD}$, and let
	$n' \coloneqq  n
		\cdot a_{\mathsf{I}}^{|V(H)|}
		\cdot a_{\mathsf{BD}}^{|E(H)|}
		\cdot \prod_{\ell \in L} a_{\ell}$.
To create a constraint similar to the non-injectivity constraint 
	in the reduction from $\QuadraticPolynomial$, we define
	the sets of non-injective homomorphic images
	\begin{equation*}
        \CI_1 \coloneqq \{h(K^C) \mid C \in \CC, 
        \text{ $h$ is a non-injective homomorphism from $K^C$ to another 
        graph}\}, \text{ and}
    \end{equation*}
    \begin{equation*}
		\CI_2 \coloneqq  \{h([K_3]) \mid
		\text{ $h$ is a non-injective homomorphism from $[K_3]$ to 
		another graph}\}.
    \end{equation*}
	Since the number of isomorphism types of graphs in $\CI_1$ and $\CI_2$ is 
	finite, and the size of label gadgets is 
	polynomial in $m$, we can add the following constraints using 
	encoded labels:
	\begin{multicols}{2}
		\begin{alphaenumerate}
			\item\label{B:hcfg} $[\hom(F') = k]$,
			\item\label{B:hcellany} $[\hom(\tikz[baseline=-3pt]{
				\node[smallvertex]
				(ell_any) {}
			}) = 3]$,
			\item\label{B:hcelli} [$\hom(\tikz[baseline=-3pt]{
				\node[smallvertex, 
				label={right:\small$\ell_i$}] (ell_i) 
				{}
			}) = 1]$ for $i\in[m]$,
                \label{B:hcellanyelli}
            \vfill\null
            \columnbreak
			\item\label{B:hedges} $[\hom(\tikz[baseline=-3pt]{
				\node[smallvertex] (L) 
				{};\node[smallvertex,
				right = 
				0.3cm of L] (R) {};
				\draw (L) edge (R);
			}) = 6]$,
			\item \label{B:hK3any}$[\hom({K_3}) = 6]$,
\item \label{ct:IConstraint} $\hom(F, G) = 0$ for a graph $F$ of 
			every isomorphism type in $\CI_1$ and $\CI_2$. 
		\end{alphaenumerate}
	\end{multicols}

	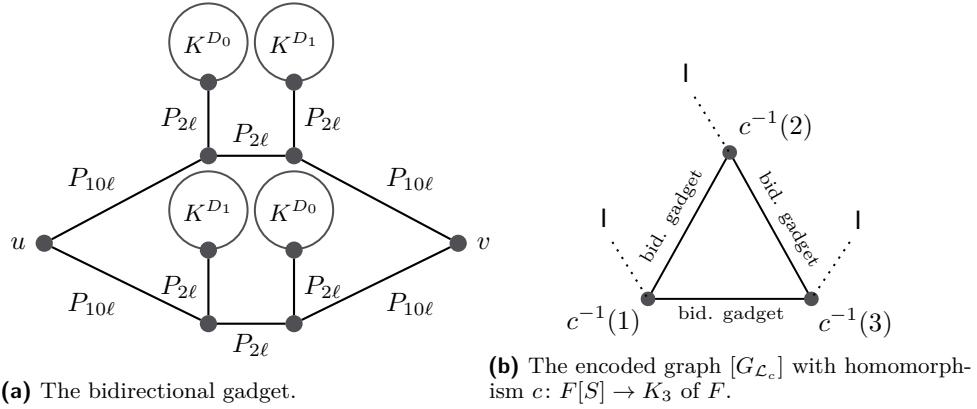
\begin{figure}
\centering
		\begin{subfigure}[b]{0.45\textwidth}
			\tikzset{every picture/.style={line width=0.75pt}}
\begin{tikzpicture}
				\node[vertex, label={left:$u$}] (u) {};
				\node[vertex, above right = 1cm and 2cm of u] (u2) {};
				\node[vertex, right = 0.9cm of u2] (v2) {};
				\node[vertex, label={right:$v$}, below right = 1cm and 2cm of 
				v2] (v) {};
				\node[vertex, above = 0.75cm of u2] (u3) {};
				\node[kneser, anchor=south] (uKneser) at (u3) {\small$K^{D_0}$};
				\node[vertex, above = 0.75cm of v2] (v3) 
				{};
				\node[kneser, anchor=south] (vKneser) at (v3) {\small$K^{D_1}$};
				
				\path[thick] (u) edge node[above left] {$P_{10\ell}$} (u2)
				(u2) edge node[left] {$P_{2\ell}$} (u3)
				(u2) edge node[above] {$P_{2\ell}$} (v2)
				(v2) edge node[right] {$P_{2\ell}$} (v3)
				(v2) edge node[above right] {$P_{10\ell}$} (v);
				
				\node[vertex, below=2cm of u2] (u2') {};
				\node[vertex, below=2cm of v2] (v2') {};
				\node[vertex, below=2cm of u3] (u3') {};
				\node[kneser, anchor=south] (uKneser) at (u3') 
				{\small$K^{D_1}$};
				\node[vertex, below=2cm of v3] (v3') {};
				\node[kneser, anchor=south] (vKneser) at (v3') 
				{\small$K^{D_0}$};
				
				\path[thick] (u) edge node[below left] {$P_{10\ell}$} (u2')
				(u2') edge node[left] {$P_{2\ell}$} (u3')
				(u2') edge node[below] {$P_{2\ell}$} (v2')
				(v2') edge node[right] {$P_{2\ell}$} (v3')
				(v2') edge node[below right] {$P_{10\ell}$} (v);
			\end{tikzpicture}
			\caption{The bidirectional gadget.}
			\label{fig:B:hardnessNPC=PEdge}
		\end{subfigure}
\begin{subfigure}[b]{0.45\textwidth}
			\centering
			
			\tikzset{every picture/.style={line width=0.75pt}} 

			\begin{tikzpicture}[x=0.75pt,y=0.75pt,yscale=-1,xscale=1]

\draw   (99.23,45.35) -- (140,119) -- (58.45,119) -- cycle ;
\draw  [color=lipicsGray  ,draw 
				opacity=1 ][fill=lipicsGray  
				,fill 
				opacity=1 ][line width=0.75]  (137.31,120.67) .. controls 
				(136.39,119.18) and (136.85,117.23) .. (138.33,116.31) .. 
				controls (139.82,115.39) and (141.77,115.85) .. (142.69,117.33) 
				.. controls (143.61,118.82) and (143.15,120.77) .. 
				(141.67,121.69) .. controls (140.18,122.61) and (138.23,122.15) 
				.. (137.31,120.67) -- cycle ;
\draw  [color=lipicsGray  ,draw 
				opacity=1 ][fill=lipicsGray  
				,fill 
				opacity=1 ][line width=0.75]  (96.53,47.02) .. controls 
				(95.61,45.53) and (96.07,43.58) .. (97.56,42.66) .. controls 
				(99.04,41.74) and (100.99,42.2) .. (101.92,43.68) .. controls 
				(102.84,45.17) and (102.38,47.12) .. (100.89,48.04) .. controls 
				(99.41,48.96) and (97.46,48.5) .. (96.53,47.02) -- cycle ;
\draw  [color=lipicsGray  ,draw 
				opacity=1 ][fill=lipicsGray  
				,fill 
				opacity=1 ][line width=0.75]  (55.76,120.67) .. controls 
				(54.84,119.18) and (55.3,117.23) .. (56.78,116.31) .. controls 
				(58.27,115.39) and (60.22,115.85) .. (61.14,117.33) .. controls 
				(62.06,118.82) and (61.6,120.77) .. (60.12,121.69) .. controls 
				(58.63,122.61) and (56.68,122.15) .. (55.76,120.67) -- cycle ;
\draw  [dash pattern={on 0.84pt off 2.51pt}]  (40.65,90.35) -- 
				(58.45,119) ;
\draw  [dash pattern={on 0.84pt off 2.51pt}]  (159.2,90.35) -- 
				(140,119) ;
\draw  [dash pattern={on 0.84pt off 2.51pt}]  (81.65,17.35) -- 
				(99.45,46) ;
				
\draw (102.29,23) node [anchor=north west][inner sep=0.75pt]   
				[align=left] {$\displaystyle c^{- 1}( 2)$};
\draw (142,122) node [anchor=north west][inner sep=0.75pt]   
				[align=left] {$\displaystyle c^{- 1}( 3)$};
\draw (16.29,121) node [anchor=north west][inner sep=0.75pt]   
				[align=left] {$\displaystyle c^{- 1}( 1)$};
\draw (51.69,98.86) node [anchor=north west][inner sep=0.75pt]  
				[font=\scriptsize,rotate=-300] [align=left] {bid. gadget};
\draw (71.78,121.41) node [anchor=north west][inner 
				sep=0.75pt]  [font=\scriptsize] [align=left] {bid. gadget};
\draw (119.51,51.9) node [anchor=north west][inner sep=0.75pt]  
				[font=\scriptsize,rotate=-60.35] [align=left] {bid. gadget};
\draw (33,72) node [anchor=north west][inner sep=0.75pt]   
				[align=left] {$\mathsf{I}$};
\draw (160,73) node [anchor=north west][inner sep=0.75pt]   
				[align=left] {$\mathsf{I}$};
\draw (74,-1) node [anchor=north west][inner sep=0.75pt]   
				[align=left] {$\mathsf{I}$};

			\end{tikzpicture}
			\caption{The encoded graph $[G_{\CL_c}]$ with 
				homomorphism $c \colon F[S] \to K_3$ of $F$.}
			\label{fig:B:hardnessNPC=PK3c}
		\end{subfigure}
        \caption{The bidirectional gadget $\mathsf{BD}$ for encoding labelled 
        graphs and an example of the graph $[G_{\CL_c}]$ satisfying the 
        constraints used in the reduction.}
	\end{figure}

	Let $(F,S,k)$ be an instance of $\existsEqualsThreeColouring$, and thus, 
	assume there exists a homomorphism $c \colon F[S] \to K_3$ of $F$, with 
	exactly $k$ extensions $\hat{c} \colon F \to K_3$.
	We define the colour classes of $F[S]$ with respect to $c$ as
	$L_i \coloneqq  \{\ell_j \mid c(s_j) = i\}$
	and their induced partition as 
	$\CL_c \coloneqq  \{L_i \mid i\in[3]\}$.
	Let $G_{\CL_c}$ denote the $\smallKthree$ with its three vertices labelled 
	by the labels in $L_1$, $L_2$, and $L_3$, respectively.
	Observe that $G_{\CL_c}$ satisfies every unencoded constraint from 
	(\ref{B:hcfg})--(\ref{B:hK3any}). 
	The encoding as $[G_{\CL_c}]$ is illustrated in 
	\Cref{fig:B:hardnessNPC=PK3c}. For $\BoundedFHomROf{\CG}$, we can also 
	compute the size of $[G_{\CL_c}]$ efficiently since 
	every label gadget appears exactly once, and set it as the additional size 
	constraint.
	
	\begin{claim}
		\label{cl:k3ca}
		$[G_{\CL_c}]$ satisfies constraints (\ref{B:hcfg}) through 
		(\ref{B:hK3any}).
	\end{claim}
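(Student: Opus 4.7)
\textbf{Proof plan for \cref{cl:k3ca}.} My plan is to verify each encoded constraint $[\hom(H) = n] = (\hom([H]) = n')$ by first checking the corresponding unencoded identity $\hom(H, G_{\CL_c}) = n$, and then applying an encoding-preservation lemma in the spirit of \cref{le:encodingPreservesHom} to deduce $\hom([H], [G_{\CL_c}]) = n'$.

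The encoding-preservation lemma I would establish asserts that for every labelled graph $H$ with labels from $L \subseteq \{\ell_1, \dots, \ell_m\}$,
\[
\hom([H], [G_{\CL_c}]) = \hom(H, G_{\CL_c}) \cdot a_{\mathsf{I}}^{\abs{V(H)}} \cdot a_{\mathsf{BD}}^{\abs{E(H)}} \cdot \prod_{\ell \in L} a_\ell,
\]
which matches exactly the normalisation factor $n'/n$ used in the encoding. The proof follows the template of \cref{le:encodingPreservesHom}: since the Kneser graphs $K^C$ for $C \in \CC$ are pairwise homomorphically incomparable connected cores separated in both $[H]$ and $[G_{\CL_c}]$ by paths of length at least $2\ell$, every homomorphism $[H] \to [G_{\CL_c}]$ is forced to map each Kneser copy in $[H]$ isomorphically and tip-preservingly to a Kneser copy of the same type in $[G_{\CL_c}]$. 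This in turn forces indicator, label, and bidirectional gadgets to align by type---including matching the binary-encoded label on both sides---so the restriction to original (non-gadget) vertices is a labelled graph homomorphism $H \to G_{\CL_c}$. Conversely, every such labelled homomorphism extends to exactly $a_{\mathsf{I}}^{\abs{V(H)}} \cdot a_{\mathsf{BD}}^{\abs{E(H)}} \cdot \prod_{\ell \in L} a_\ell$ homomorphisms $[H] \to [G_{\CL_c}]$ via independent choices of tip-stabilising automorphisms for each Kneser gadget.

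Second, I would verify the five unencoded identities directly. For (\ref{B:hcellany}), any isolated vertex has $\abs{V(G_{\CL_c})} = 3$ images. For (\ref{B:hcelli}), each label $\ell_i$ is attached to the unique vertex of $G_{\CL_c}$ lying in colour class $c(s_i)$, so exactly one labelled homomorphism exists. For (\ref{B:hedges}), an edge has $2\abs{E(K_3)} = 6$ homomorphisms into $K_3$. For (\ref{B:hK3any}), a homomorphism $K_3 \to K_3$ is an automorphism, and $\abs{\Aut(K_3)} = 6$. For (\ref{B:hcfg}), a labelled homomorphism $F' \to G_{\CL_c}$ is precisely a homomorphism $\hat c \colon F \to K_3$ with $\hat c(s_i)$ in colour class $c(s_i)$ for every $i \in [m]$, that is, $\hat c|_S = c$; by hypothesis on $(F, S, k)$ there are exactly $k$ such extensions. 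The main obstacle is the encoding-preservation lemma, specifically the case analysis excluding that a Kneser copy in $[H]$ maps partially onto long paths or onto a Kneser copy of a different type; both scenarios are ruled out by the core and pairwise-incomparability properties of the $K^C$, combined with the length-$2\ell$ buffer that prevents a Kneser copy from ``bleeding'' into neighbouring gadgets.
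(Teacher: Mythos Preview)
Your proposal is correct and follows essentially the same approach as the paper: both establish the encoding-preservation identity $\hom([H],[G_{\CL_c}]) = \hom(H,G_{\CL_c})\cdot a_{\mathsf{I}}^{|V(H)|}\cdot a_{\mathsf{BD}}^{|E(H)|}\cdot\prod_{\ell\in L}a_\ell$ via the core/incomparability/long-path argument, and then invoke the unencoded constraints on $G_{\CL_c}$. The paper merely states ``Observe that $G_{\CL_c}$ satisfies every unencoded constraint'' before the claim, whereas you spell out each of the five checks explicitly; otherwise the reasoning is identical.
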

	\begin{claimproof}
		Let $\hom(H) = n$ be an unencoded constraint satisfied by $G_{\CL_c}$,
        where $H$ is an $L$-labelled graph.
		For each homomorphism from $H$ to $G_{\CL_c}$,
		$[G_{\CL_c}]$ has at least
        $a_{\mathsf{I}}^{|V(H)|} \cdot a_{\mathsf{BD}}^{|E(H)|}
		\cdot \prod_{\ell \in L} a_{\ell}$ homomorphisms
		from $[H]$, corresponding to the number of automorphisms between 
		Kneser graphs.
		
		Further, there are no other homomorphisms from $[H]$ to $[G_{\CL_c}]$. 
		Since the
		Kneser graphs are pairwise homomorphically incomparable and their 
		own cores, they can only be mapped injectively to an image of the same 
		isomorphism type as $[G_{\CL_c}]$ consists of only Kneser graphs 
		connected 
		by long paths. This implies that homomorphisms have to stabilise the 
		tips of Kneser graphs, and hence, the tips of label gadgets, as well as 
		those of bidirectional gadgets.
        Note that, in particular, this holds for isolated vertices
        since the indicator gadget uses two Kneser graphs.
		Thus, original vertices of $H$ are mapped to original vertices of 
		$G_{\CL_c}$, and since the tips of bidirectional gadgets can 
		only be mapped to different original vertices in $[G_{\CL_c}]$ (cf. 
		\Cref{fig:B:hardnessNPC=PK3c}), original edges of $H$ are mapped to 
		original edges of $G_{\CL_c}$.
	\end{claimproof}
	
	Finally, $[G_{\CL_c}]$ satisfies constraint (\ref{ct:IConstraint}) as,
	otherwise, 
	since $[G_{\CL_c}]$ consists of only Kneser graphs connected by long paths,
	there would exist at least one homomorphism from a proper subgraph of a 
	Kneser graph to itself or between Kneser graphs of different isomorphism 
	types; a contradiction.
	
	Conversely, let $G$ be a graph satisfying all constraints. Since constraint 
	(\ref{B:hcellany}) enforces $3\cdot a_{\mathsf{I}}$ homomorphisms to $G$
	and homomorphisms from copies of the Kneser graphs $K^{I_0}$ and $K^{I_1}$
	have to be injective by constraint (\ref{ct:IConstraint}),
	there are exactly three images of indicator gadgets
	$[\tikz[baseline=-3pt]{
		\node[smallvertex]
		(ell_any) {}
	}]$
	in $G$ by the
	tip-stabilising automorphisms of Kneser graphs.
	Denote the vertices in $G$ to which tips of indicator gadgets are mapped to
	by $v_1$, $v_2$, and $v_3$. While every homomorphism that is 
	injective on all 
	Kneser graphs necessarily 
	requires the existence of at least $a_{\mathsf{I}}$ distinct homomorphisms,
    without further 
	consideration, the three images of
	$[\tikz[baseline=-3pt]{
		\node[smallvertex]
		(ell_any) {}
	}]$
	in $G$ might very well not be disjoint and in particular, could share 
	the same tip.
	\begin{claim}
		\label{cl:ct1}
		The vertices $v_1$, $v_2$, and $v_3$ are distinct.
	\end{claim}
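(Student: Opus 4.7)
\textbf{Proof plan for \cref{cl:ct1}.}
The plan is to derive the distinctness of $v_1,v_2,v_3$ from constraint (\ref{B:hK3any}) together with the non-injectivity constraint (\ref{ct:IConstraint}) for the set $\CI_2$. The encoded constraint $[\hom(K_3)=6]$ unfolds to $\hom([K_3],G) = 6\cdot a_{\mathsf{I}}^3 \cdot a_{\mathsf{BD}}^3$, so there exists at least one homomorphism $h\colon[K_3]\to G$. First I would argue that any such $h$ must in fact be injective on $[K_3]$: if $h$ were non-injective, then its image $h([K_3])$ would be an element of $\CI_2$ that occurs as a subgraph of $G$, so $\hom(h([K_3]),G)\geq 1$, contradicting constraint (\ref{ct:IConstraint}). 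Thus $h$ maps the three indicator-gadget tips of $[K_3]$ to three pairwise distinct vertices $w_1,w_2,w_3\in V(G)$.

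Next I would trace where these three tips can land. By the $\CI_1$ part of constraint (\ref{ct:IConstraint}), each copy of $K^{I_0}$ and $K^{I_1}$ in $[K_3]$ is mapped injectively to a copy of the respective Kneser graph in $G$. Because the Kneser graphs are pairwise homomorphically incomparable cores and all gadgets are connected to the original vertices by long paths, the standard argument (analogous to the one used in the proof of \Cref{th:ConstantEncoding} and to \Cref{le:encodingPreservesHom}) forces each indicator gadget in $[K_3]$ to be mapped onto an indicator gadget image in $G$, with its tip going to the tip of that image. Since constraint (\ref{B:hcellany}) was already shown to pin down exactly three indicator gadget images in $G$, with tips $v_1,v_2,v_3$, the three tips $w_1,w_2,w_3$ obtained above satisfy $\{w_1,w_2,w_3\}\subseteq\{v_1,v_2,v_3\}$.

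Finally, I would combine the two observations: $\lvert\{w_1,w_2,w_3\}\rvert=3$ by injectivity of $h$, while $\lvert\{v_1,v_2,v_3\}\rvert\leq 3$ by construction. Hence $\{v_1,v_2,v_3\}=\{w_1,w_2,w_3\}$ has exactly three elements, i.e.\ $v_1,v_2,v_3$ are pairwise distinct, as claimed. The only delicate step is the first one, ensuring that $\CI_2$ really does contain \emph{every} proper quotient of $[K_3]$ that could arise when two tips collapse or a bidirectional gadget folds onto itself; this is handled by the definition of $\CI_2$ as the set of all non-injective homomorphic images of $[K_3]$, so any violation of injectivity is caught directly.
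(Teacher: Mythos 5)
Your proposal is correct and follows essentially the same route as the paper: the constraint $[\hom(K_3)=6]$ forces at least one homomorphism $[K_3]\to G$, which must be injective by the $\CI_2$ part of constraint (\ref{ct:IConstraint}), so $G$ contains $[K_3]$ as a subgraph and hence three disjoint indicator-gadget images with distinct tips. The paper states this more tersely, but your additional step of tracing the embedded tips back to $\{v_1,v_2,v_3\}$ is just an explicit spelling-out of the same argument.
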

	\begin{claimproof}
		Up to tip-stabilising automorphisms, we require exactly six 
		homomorphisms from $[K_3]$ to $G$, which have to be 
		injective by constraint (\ref{ct:IConstraint}). Hence, $G$ contains 
		$[K_3]$ as a subgraph, and thus, three disjoint images of 
        the indicator gadget.
	\end{claimproof}
	\begin{claim}
		\label{cl:ct2}
		Homomorphisms from
        $[\tikz[baseline=-3pt]{
			\node[smallvertex] (ell_1) 
			{}
		}],
        [\tikz[baseline=-3pt]{
			\node[smallvertex, 
			label={right:\small$\ell_1$}] (ell_1) 
			{}
		}],\dots,
		[\tikz[baseline=-3pt]{
			\node[smallvertex, 
			label={right:\small$\ell_m$}] (ell_m) 
			{}
		}]$ to $G$ map their tips to $v_1$, $v_2$, or 
		$v_3$.
	\end{claim}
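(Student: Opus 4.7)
The plan is to establish that in every homomorphism from either $[\tikz[baseline=-3pt]{\node[smallvertex] (e) {};}]$ or $[\tikz[baseline=-3pt]{\node[smallvertex, label={right:\small$\ell_i$}] (e) {};}]$ to $G$, the image of the tip must lie in $\{v_1, v_2, v_3\}$. First, I would handle the unlabelled case, where the source graph is simply an indicator gadget $\mathsf{I}$. Given any homomorphism $h \colon \mathsf{I} \to G$, constraint~(\ref{ct:IConstraint}) applied to $\CI_1$ forces $h$ to restrict injectively to each of the two Kneser graph copies $K^{I_0}$ and $K^{I_1}$ contained in $\mathsf{I}$. Using the pairwise homomorphic incomparability of the Kneser graphs indexed by $\CC$, each such copy must land in a subgraph of $G$ isomorphic to itself. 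I would then argue that the length-$2\ell$ paths of $\mathsf{I}$ cannot partially fold into any Kneser image, since such a fold would yield either a homomorphism from a Kneser graph onto a proper subgraph of itself (contradicting the core property) or one between Kneser graphs of different types (contradicting incomparability). Consequently, the tip of $\mathsf{I}$ must be mapped to a vertex of $G$ that is itself the tip of an indicator gadget image, and by the very definition of $v_1, v_2, v_3$, such vertices are exactly these three.

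Next, I would extend to $[\tikz[baseline=-3pt]{\node[smallvertex, label={right:\small$\ell_i$}] (e) {};}]$, which is the union of an indicator gadget and the label gadget $\langle \ell_i \rangle$ glued along their tips. Any homomorphism to $G$ restricts to a homomorphism of the indicator-gadget portion, so by the analysis of the previous paragraph the shared tip is sent to one of $v_1, v_2, v_3$.

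The main obstacle is ruling out all non-canonical homomorphisms that could otherwise break the correspondence between source tips and $\{v_1, v_2, v_3\}$. Three specific failure modes must be foreclosed: a Kneser graph folding onto a smaller subgraph of itself, a Kneser graph of one type mapping into a Kneser graph of a different type, and a long path partially collapsing into a Kneser image. Constraint~(\ref{ct:IConstraint}), the pairwise homomorphic incomparability of the Kneser graphs in $\CC$, and the core property of Kneser graphs together rule out precisely these three modes, leaving only the intended gadget-preserving mappings. Once only such canonical homomorphisms remain, the claim follows immediately from the defining property of $v_1, v_2, v_3$ as the only tips of indicator gadget images present in $G$.
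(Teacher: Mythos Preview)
Your proof is correct, and the labeled case—restricting any homomorphism to the embedded indicator gadget and invoking the unlabeled case—is exactly how the paper (implicitly) handles it. However, you work much harder than necessary in the unlabeled case. The vertices $v_1,v_2,v_3$ were \emph{defined}, in the paragraph immediately preceding Claim~\ref{cl:ct1}, as the complete set of images of the tip of $\mathsf{I}$ under homomorphisms to $G$; that paragraph already used constraint~(\ref{B:hcellany}) together with the injectivity from~(\ref{ct:IConstraint}) to show there are at most three such images. So for $[\tikz[baseline=-3pt]{\node[smallvertex] (e) {};}]=\mathsf{I}$ the claim is tautological given that definition, which is why the paper's proof is one line citing~(\ref{B:hcellany}) and the definition. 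Your arguments about incomparability, cores, and paths not folding into Kneser images are not wrong, but they re-derive the content of that earlier setup paragraph rather than contributing to the present claim.
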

	\begin{claimproof}
		Follows from (\ref{B:hcellany}) and the definition of $v_1$, $v_2$, and $v_3$.
	\end{claimproof}
	\begin{claim}
		\label{cl:ct3}
		Every homomorphism from
		$[\tikz[baseline=-3pt]{
			\node[smallvertex, label={left:\small$\ell$}] (L) {};\node[smallvertex, label={right:\small$\ell'$}, right = 
			0.3cm of L] (R) {};
			\draw (L) edge (R);
		}]$
		to $G$ maps the tips of
		$\langle \ell \rangle$ and $\langle \ell' 
		\rangle$ to two distinct vertices from $v_1$, $v_2$, or $v_3$.
	\end{claim}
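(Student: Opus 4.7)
The plan is to reduce the claim to the analogous statement about homomorphisms from the unlabelled edge that appears in constraint (\ref{B:hedges}). The key observation is that the encoded labelled edge $[\ell\text{-edge-}\ell']$ contains the encoded unlabelled edge $[\text{edge}]$ as a subgraph: since the tip of each label gadget $\langle \ell \rangle$ is, by construction, identified with the corresponding original vertex of the edge (which is also the tip of the indicator gadget $\mathsf{I}$ attached there), deleting the two label gadgets produces exactly $[\text{edge}]$. Thus any homomorphism $h : [\ell\text{-edge-}\ell'] \to G$ restricts to a homomorphism $h' : [\text{edge}] \to G$, and it suffices to show that $h'$ sends the two indicator tips to two distinct vertices of $\{v_1, v_2, v_3\}$.

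To control $h'$, I would first pin down the structure of $G$ on $\{v_1,v_2,v_3\}$ using constraint (\ref{B:hK3any}) in exactly the same spirit as \Cref{cl:k3ca,cl:ct1}. By \Cref{cl:ct2}, each of the three indicator tips in any homomorphism $[K_3] \to G$ must land on $\{v_1,v_2,v_3\}$, and by constraint (\ref{ct:IConstraint}) every copy of a Kneser graph inside any gadget must be mapped injectively, forcing Kneser graphs in $[K_3]$ to be routed to Kneser graphs of matching type whose tips are stabilised. Combining this with the $\hom([K_3],G) = 6 \cdot a_{\mathsf{I}}^3 \cdot a_{\mathsf{BD}}^3$ count of (\ref{B:hK3any}) forces $v_1,v_2,v_3$ to be pairwise connected in $G$ by bidirectional gadgets, i.e.\ they carry an encoded triangle.

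Now I would simply count: the encoded triangle on $\{v_1,v_2,v_3\}$ already contributes $6 \cdot a_{\mathsf{I}}^2 \cdot a_{\mathsf{BD}}$ homomorphisms $[\text{edge}] \to G$, one Kneser-automorphism class for each of the six ordered edges of the triangle. Constraint (\ref{B:hedges}) caps the total at exactly $6 \cdot a_{\mathsf{I}}^2 \cdot a_{\mathsf{BD}}$, so these are \emph{all} the homomorphisms $[\text{edge}] \to G$. In particular, no homomorphism sends both indicator tips to the same vertex of $\{v_1,v_2,v_3\}$, and none sends a tip outside this set. Applied to $h'$, this gives the claim.

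The main obstacle I expect is ruling out ``diagonal'' mappings in which the bidirectional gadget folds back onto itself at a single vertex of $\{v_1,v_2,v_3\}$; engaging with such foldings combinatorially (verifying that every Kneser graph can still be embedded injectively while the enclosing paths collapse) would be delicate. The point of going through constraints (\ref{B:hK3any}) and (\ref{B:hedges}) in tandem is precisely to sidestep this: the exact counts leave no budget for any additional homomorphism beyond those coming from the triangle, so the diagonal case is excluded purely by bookkeeping rather than by analysing gadget foldings directly.
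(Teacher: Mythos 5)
Your proposal is correct and follows essentially the same route as the paper: the paper's one-line proof ("by constraint (\ref{B:hedges}), all images of the encoded edge in $G$ are contained in the subgraph copy of $[K_3]$") is exactly the counting argument you spell out, namely that the encoded triangle guaranteed by \Cref{cl:ct1} already accounts for all $6 \cdot a_{\mathsf{I}}^2 \cdot a_{\mathsf{BD}}$ homomorphisms permitted by the constraint, leaving no budget for a homomorphism that collapses the two tips or escapes $\{v_1,v_2,v_3\}$. Your restriction from the labelled encoded edge to the unlabelled one is also sound, since the label-gadget tips are identified with the original (indicator-gadget) vertices.
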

	\begin{claimproof}
		By constraint (\ref{B:hedges}), all images of
		$[\tikz[baseline=-3pt]{
			\node[smallvertex] (L) 
			{};\node[smallvertex, 
			right = 
			0.3cm of L] (R) {};
			\draw (L) edge (R);
		}]$
		in $G$ are contained in the subgraph copy of 
		$[K_3]$ in $G$.
	\end{claimproof}
	
	From \cref{cl:ct1} and \cref{cl:ct2} follows that every homomorphism to $G$
	from a graph that contains gadgets
    $\langle \ell_1 \rangle,\dots,\langle \ell_m \rangle$ on vertices
    marked by indicator gadgets
	induces a partition of 
	labels $\CL \coloneqq  \{L_1,L_2,L_3\}$ into at most three parts with 
	respect to mapping their tips to either $v_1$, $v_2$, or $v_3$.
	For any set of labels partitioned in this way, let $G_{\CL}$ 
	denote the $\smallKthree$ with its three vertices labeled 
	by the labels in $L_1$, $L_2$, and $L_3$, respectively.
	
	\begin{claim}
        Let $L \coloneqq \{\ell_1, \dots, \ell_m\}$.
		For every $L$-labelled graph $F$
        and for every partition 
		of $L$ into $\CL = \{L_i\}_{i\in[3]}$, it holds that
		\[ \hom([F], G) = \hom(F,G_{\CL})
			\cdot a_{\mathsf{I}}^{|V(F)|}
			\cdot a_{\mathsf{BD}}^{|E(F)|}
			\cdot\prod_{i \in [m]}
				a_{\ell_i}.\]
	\end{claim}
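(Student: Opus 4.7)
The plan is to establish a bijective correspondence between homomorphisms $F \to G_{\CL}$ (together with independent choices of tip-stabilising automorphisms inside each Kneser-graph copy of $[F]$) and homomorphisms $[F] \to G$. The multiplicative factor $a_{\mathsf{I}}^{|V(F)|} \cdot a_{\mathsf{BD}}^{|E(F)|} \cdot \prod_{i\in[m]} a_{\ell_i}$ counts exactly the latter choices: one tip-stabilising automorphism per indicator gadget attached to an original vertex, one per bidirectional gadget attached to an original edge, and one per label gadget. So the claim reduces to showing that every homomorphism $h \colon [F] \to G$ restricts, on the original vertices of $F$, to a homomorphism $F \to G_{\CL}$ and, on each gadget, to a tip-stabilising automorphism of that gadget's Kneser graphs into the corresponding gadget in $G$.

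The first step is the rigidity argument for Kneser graphs, which mirrors the reasoning in the proofs of \Cref{th:ConstantEncoding} and \Cref{th:hardnessFixed}: by constraint (\ref{ct:IConstraint}) and the fact that the $K^C$ are connected, pairwise homomorphically incomparable cores, every copy of a Kneser graph $K^C$ inside $[F]$ must be mapped injectively onto a copy of a Kneser graph of the same isomorphism type in $G$, and moreover this restriction must stabilise the tip (otherwise one could fold a long path and obtain a homomorphism from $K^C$ to a proper subgraph of itself or to one of the other $K^{C'}$). The long paths of length $2\ell$ ($10\ell$ in the bidirectional gadget) guarantee that distinct Kneser-graph copies in $[F]$ cannot collide onto the same Kneser-graph copy in $G$. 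Consequently each indicator, bidirectional, and label gadget in $[F]$ is mapped isomorphically onto such a gadget in $G$, and the tip of each gadget in $[F]$ is mapped to the tip of the image gadget in $G$.

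Next I would use \Cref{cl:ct1,cl:ct2,cl:ct3} to identify where the original vertices and edges of $F$ must go. By \Cref{cl:ct2}, every indicator-gadget tip in $[F]$ (i.e. every original vertex of $F$) is sent to one of $v_1, v_2, v_3$. By \Cref{cl:ct3} applied to the bidirectional gadgets replacing the edges of $F$, adjacent original vertices of $F$ must be mapped to distinct elements of $\{v_1,v_2,v_3\}$. Finally, by the label-gadget constraints (\ref{B:hcellanyelli}) together with the rigidity of label gadgets, the tip of $\langle \ell_i\rangle$ inside $[F]$ — which is identified with the vertex $s_i \in S$ of $F$ — is forced to go to the unique $v_j$ where $\ell_i \in L_j$. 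Collecting these observations, the restriction of $h$ to the original vertices of $F$ is exactly a homomorphism $F \to G_{\CL}$; conversely, any such homomorphism together with independent choices of tip-stabilising automorphisms on every Kneser-graph copy of every gadget extends uniquely to a homomorphism $[F] \to G$.

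The main obstacle is checking that the two directions of this correspondence are genuinely bijective and that all the automorphism factors are counted exactly once, with no double counting between gadgets that share endpoints. Careful bookkeeping is needed at the vertices of $F$ where an indicator gadget, several bidirectional gadgets, and possibly a label gadget all meet: the tip-stabilising automorphisms inside different gadgets act independently precisely because each gadget's long paths force its image to sit inside a distinct block of $G$, so the automorphism group acting above a homomorphism $F \to G_{\CL}$ is the direct product $\prod_v \Aut(\mathsf{I})_{\mathrm{tip}} \times \prod_e \Aut(\mathsf{BD})_{\mathrm{tip}} \times \prod_i \Aut(\langle \ell_i \rangle)_{\mathrm{tip}}$ whose order is exactly the claimed factor. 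This factorisation — rather than any one individual verification — is the place where one must be most careful.
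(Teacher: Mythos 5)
Your proposal is correct and follows essentially the same route as the paper: rigidity of the Kneser-graph gadgets (enforced by the non-injectivity constraints), Claims~\ref{cl:ct1}--\ref{cl:ct3} to pin the images of original vertices and edges to distinct elements of $\{v_1,v_2,v_3\}$, and then the correspondence ``homomorphism $F \to G_{\CL}$ plus independent tip-stabilising automorphisms per gadget'' exactly as in the analogue of \Cref{le:encodingPreservesHom}. The paper's own proof is terser but makes the same moves, including the point you rightly emphasise that constraints (\ref{B:hedges}) and (\ref{ct:IConstraint}) guarantee exactly one encoded-edge image per ordered pair of distinct $v_i$'s, so the automorphism factors multiply out without double counting.
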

	\begin{claimproof}
		Let $c' \colon [F] \to G$ be a homomorphism. By \cref{cl:ct2}, $c'$ 
		maps encoded vertices of $F$ to $v_1$, $v_2$, or $v_3$.
		With \cref{cl:ct3} follows that vertices 
		incident to an encoded edge of $F$ are mapped to distinct vertices 
		from $v_1$, $v_2$, or $v_3$.
		Furthermore, up to automorphisms of label gadgets, 
		constraint (\ref{B:hedges}) and the non-injectivity constraint 
		(\ref{ct:IConstraint}) enforce that there is precisely one homomorphism 
		from an encoded edge to every pair of distinct vertices $u,v \in 
		\{v_1,v_2,v_3\}$.
		Now, analogously to the proof for encodings of directed edges in 
		\cref{le:encodingPreservesHom}, any homomorphism from $F$ to $G_{\CL}$ 
		corresponds to a homomorphism from $[F]$ to $G$ modulo automorphisms of 
		Kneser graphs.
	\end{claimproof}

	Finally, from constraint (\ref{B:hcfg}) follows that $F'$ has exactly $k$ 
	homomorphisms to $G_{\CL}$ with its three vertices labelled by 
	$L_1$, $L_2$, and $L_3$, respectively. 
	Thus, $(F,S,k)$ is in $\existsEqualsThreeColouring$.
\end{proof}
\clearpage{}
\clearpage{}\section{Binomial Equations is $\NP$-complete}
\label{sec:binomialEquations}

\dproblem{\BPoly}{Natural numbers $a$, $b$, and $c$ in binary encoding.}
{Are there natural numbers $x$ and $y$ such that $ax(x-1) + by = c$?}

\begin{theorem}
	\BPoly is $\NP$-complete.
\end{theorem}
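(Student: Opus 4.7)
The plan is to establish both $\NP$ membership and $\NP$-hardness of \BPoly.

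Membership in $\NP$ follows from the observation that any solution $(x,y) \in \NN^2$ to $ax(x-1) + by = c$ with $a \geq 1$ satisfies $x(x-1) \leq c/a$, so $x$ has a polynomial-length binary representation; a nondeterministic algorithm guesses $x$ and verifies that $(c - ax(x-1))/b$ is a non-negative integer. The degenerate cases $a = 0$ or $b = 0$ reduce to trivial divisibility or equality checks in polynomial time.

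For $\NP$-hardness, I would reduce from \QuadraticPolynomial, which is $\NP$-complete by Manders--Adleman~\cite{manders_npcomplete_1976}. The central algebraic observation is the identity
\[
4a\, x(x-1) \;=\; a(2x-1)^2 - a,
\]
which recasts \BPoly as a form of \QuadraticPolynomial restricted to odd quadratic variable: multiplying $ax(x-1) + by = c$ by $4$ and applying the identity yields $az^2 + 4by = 4c + a$ with $z = 2x-1$ ranging over positive odd integers. Conversely, any \QuadraticPolynomial instance $(A, B, C)$ can be decomposed by splitting a prospective solution $z$ by its $2$-adic valuation $v \in \{0, 1, \dots, O(\log C)\}$ (bounded since $Az^2 \leq C$): writing $z = 2^v m$ with $m$ odd converts the equation into $(4^v A) m^2 + By = C$, which by the identity above is equivalent to the \BPoly instance $(4^{v+1} A,\; B,\; C - 4^v A)$ asking for an odd quadratic variable (with the separate case $z = 0$ handled by the polynomial-time check $B \mid C$). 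This yields a polynomial-sized disjunction of \BPoly instances capturing \QuadraticPolynomial, establishing $\NP$-hardness under disjunctive truth-table reductions.

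The main obstacle is upgrading this to the many-one reduction required for standard $\NP$-completeness and for the downstream application in \cref{th:subgraphHardnessFixed}. A clean route is to adapt the Manders--Adleman reduction from 3SAT directly: their combinatorial construction relies on Lagrange's four-square theorem, which in the triangular-number setting can be replaced by Gauss's Eureka theorem stating that every non-negative integer is the sum of three triangular numbers (already invoked in \cref{thm:numbertheory}); the remaining number-theoretic and combinatorial bookkeeping carries over with minor adjustments, yielding a direct many-one reduction from 3SAT to \BPoly.
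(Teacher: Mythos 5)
Your $\NP$ membership argument is fine, and you have found the same algebraic core as the paper: the identity $a(2X-1)^2 = 4aX(X-1)+a$, which converts between $\QuadraticPolynomial$ with an \emph{odd} quadratic variable and $\BPoly$. However, your hardness argument has a genuine gap. The first route you describe (splitting a prospective solution by its $2$-adic valuation) only yields a disjunctive truth-table reduction, which you acknowledge does not establish $\NP$-completeness in the standard many-one sense required here, and in particular is not the reduction type the paper uses downstream. The second route --- redoing the Manders--Adleman construction from \threeSAT with Gauss's Eureka theorem in place of Lagrange's four-square theorem --- is asserted without any detail, and it is far from a ``minor adjustment'': the Manders--Adleman argument is a delicate number-theoretic construction, and it is not at all evident that its bookkeeping survives the substitution you sketch. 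As it stands, neither route completes the proof.

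The missing ingredient is that the variant of $\QuadraticPolynomial$ with the additional constraint $x \equiv 1 \pmod 2$ is \emph{already known} to be $\NP$-complete (Moore and Mertens, \emph{The Nature of Computation}, Lemma~5.1); this is exactly what the paper invokes. With that fact in hand, your identity immediately gives a many-one reduction: restrict further to instances with $c \geq a$ (instances with $c < a$ are trivially negative, since $x$ odd forces $ax^2 \geq a$), and map $(a,b,c)$ to the $\BPoly$ instance $(4a, b, c-a)$. A solution $x \equiv 1 \pmod 2$, $y$ yields $X = (x+1)/2$, $Y=y$ with $4aX(X-1)+bY = a(x^2-1)+by = c-a$; conversely a solution $X,Y$ yields $x = 2X-1$ (or $x=1$ when $X=0$) and $y=Y$. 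This replaces both of your proposed routes with a two-line many-one reduction.
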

\begin{proof}
	Membership in $\NP$ follows as for \QuadraticPolynomial.
	By \cite[Lemma~5.1]{moore_nature_2011}, the variant of \QuadraticPolynomial with the additional constraint that $x \equiv 1 \mod 2$ is $\NP$-complete.
	The problem remains $\NP$-complete when further restricted to instances with $c \geq a$.
	Let $a,b,c$ be an instance of this problem.

	\begin{claim}
		$ax^2 + by = c$, $x \equiv 1 \mod 2$ has a solution in $\mathbb{N}$ if and only if $4aX(X-1) + bY = c-a$ has a solution in $\mathbb{N}$.
	\end{claim}
	\begin{claimproof}
		Given a solution $x,y \in \mathbb{N}$ to the first system of equations, define $X \coloneqq \frac{x+1}{2}$ and $Y \coloneqq y$.
		Then 
		\[
			4aX(X-1) + bY = a(x^2 - 1) + by = c-a.
		\]
		Conversely, given a solution $X, Y \in \mathbb{N}$ to the second equation, 
		distinguish cases:
		If $X \geq 1$, define $x \coloneqq 2X - 1$ and $y \coloneqq Y$. Then
		\[
			ax^2 + by = 4aX(X-1) + a + bY = c. 
		\]
		If $X = 0$, define $x \coloneqq 1$ and $y \coloneqq Y$. Then
		$
			ax^2 + by = a + bY = c.
		$
	\end{claimproof}
	Hence, \BPoly is $\NP$-complete.
\end{proof}

\clearpage{}
\clearpage{}\section{Triangle subgraphs on $n+1$ vertices}
\label{sec:triangles}
In this section, the following theorem is proven:

\thmtriangles*

In the following, we prepare for the proof of \cref{thm:triangles}, in which we show that there exists a degree sequence and a corresponding tree, such that the complement of said tree on $n+1$ vertices has the required triangle subgraph count.

We define multisets that correspond to degree sequences of trees. We consider the following inductively defined set of multisets of positive integers. 
Let $p \in \mathbb{N}$. The set $\mathcal{D}^p$ is defined as follows:
\begin{enumerate}
	\item the multiset $D_0^p \coloneqq \multiset{2, \dots, 2, 1,1}$ with $p$ copies of $2$ and two copies of $1$ is in $\mathcal{D}^p$.
	\item if $D \in \mathcal{D}^p$ contains an entry $c \geq 2$ at least twice, define $D'$ as the multiset obtained from $D$ by replacing two entries of $c$ by $c+1$ and $c-1$. Then $D' \in \mathcal{D}^p$. We call this operation \textit{splitting} and write $D \to D'$.
\end{enumerate}
We remark that a similar construction was described in \cite[Definition~2]{wagner_class_2006}.

A \emph{chain} in $\mathcal{D}^p$ is a collection $D_1, \dots, D_m$ such that $D_1 \to D_2 \to \dots \to D_m$.
Note that all $D \in \mathcal{D}^p$ are comprised of $p+2$ entries summing to $2p+2$.
Since the sum of squares of entries strictly increases in every splitting step, the length of any chain in $\mathcal{D}^p$ is bounded by some function of $p$.
Write $s(p)$ for number of multisets in the longest chain in $\mathcal{D}^p$.
The sequence $(s(p))_{p \in \mathbb{N}}$ is \href{https://oeis.org/A121924}{\texttt{A121924}} in The On-Line Encyclopedia of Integer Sequences.

\begin{lemma} \label{obs:monot}
	If $p \leq q$ then $s(p) \leq s(q)$.
\end{lemma}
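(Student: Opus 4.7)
The plan is to show that any chain in $\mathcal{D}^p$ can be lifted to a chain of the same length in $\mathcal{D}^q$ by padding each multiset in the chain with $q-p$ additional copies of $2$.

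More precisely, suppose $D_1 \to D_2 \to \cdots \to D_m$ is a chain in $\mathcal{D}^p$. For each $i \in [m]$, define $D_i'$ to be the multiset obtained from $D_i$ by adding $q - p$ copies of the entry~$2$. First, $D_1' = D_0^q$ since $D_1 = D_0^p$ consists of $p$ copies of $2$ and two copies of $1$, so $D_1'$ has $p + (q-p) = q$ copies of $2$ and two copies of $1$. Second, if $D_i \to D_{i+1}$ is obtained by replacing two occurrences of an entry $c \geq 2$ with $c+1$ and $c-1$, then exactly the same two occurrences are still present in $D_i'$ (since the padding only added fresh $2$'s, which we do not touch), so the same splitting step shows $D_i' \to D_{i+1}'$. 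By induction on $i$, every $D_i'$ lies in $\mathcal{D}^q$, and $D_1' \to \cdots \to D_m'$ is a chain of length~$m$ in~$\mathcal{D}^q$.

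Applying this construction to a longest chain in $\mathcal{D}^p$ yields a chain in $\mathcal{D}^q$ of length $s(p)$, so $s(p) \leq s(q)$.

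The argument is essentially routine once one observes that splitting is a local operation and padding with additional copies of~$2$ never disables a splitting move (and, in fact, only ever enables more moves). The only minor subtlety is confirming that the padded start multiset coincides with $D_0^q$, which follows immediately from the definition of $D_0^p$.
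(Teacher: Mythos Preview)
Your proof is correct and follows exactly the same approach as the paper: pad each multiset in a chain in $\mathcal{D}^p$ with $q-p$ additional copies of $2$ and perform the same splitting operations, yielding a chain of equal length in $\mathcal{D}^q$. The only minor sloppiness is the assertion ``$D_1 = D_0^p$'' for an arbitrary chain, but this is harmless since the longest chain can always be taken to start at $D_0^p$ (any $D_1 \in \mathcal{D}^p$ is reachable from $D_0^p$ by definition, so one may prepend).
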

\begin{proof}
	Clearly, for any chain in $\mathcal{D}^p$ there is a chain in $\mathcal{D}^q$ that corresponds to the chain in $\mathcal{D}^p$ in the sense that the same splitting operations were performed and the $q-p$ additional $2$s in $D_0^q$ were ignored.
\end{proof}

\begin{lemma} \label{lem:lowerbound}
	For $p = \binom{x}{2}$ with $x \in \mathbb{N}$, there exists a chain in $\mathcal{D}^p$ of length $\binom{x}{3}$.
	Hence, $s(\binom{x}{2}) \geq \binom{x}{3}$ for all $x \in \mathbb{N}$.
\end{lemma}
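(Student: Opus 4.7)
The plan is to prove by induction on $x$ that the ``staircase'' multiset $T_x \coloneqq \multiset{x, x-1, \dots, 2, 1^{\binom{x-1}{2}+2}}$ is reachable from $D_0^{\binom{x}{2}}$ via a chain of length $\binom{x}{3}+1$, which is stronger than the claimed bound. The base case $x = 2$ is immediate since $T_2 = \multiset{2, 1, 1} = D_0^1$ and $\binom{2}{3}+1 = 1$.

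For the inductive step, the key observation is $\binom{x}{2} - \binom{x-1}{2} = x-1$, so $D_0^{\binom{x}{2}} = D_0^{\binom{x-1}{2}} \cup \multiset{2^{x-1}}$. The inductive hypothesis yields a chain in $\mathcal{D}^{\binom{x-1}{2}}$ from $D_0^{\binom{x-1}{2}}$ to $T_{x-1}$ of length $\binom{x-1}{3}+1$; adjoining the extra $x-1$ copies of $2$ to every term of this chain lifts it to a chain of the same length in $\mathcal{D}^{\binom{x}{2}}$, ending at the intermediate multiset $M \coloneqq T_{x-1} \cup \multiset{2^{x-1}} = \multiset{x-1, x-2, \dots, 3, 2^x, 1^{\binom{x-2}{2}+2}}$. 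A short linear-algebra computation---using that each $c$-split changes the multiplicity vector by $n_c \mapsto n_c - 2$ and $n_{c\pm 1} \mapsto n_{c\pm 1} + 1$---shows that turning $M$ into $T_x$ requires performing exactly $a_c \coloneqq x - c$ splits of pairs of $c$'s for each $c \in \{2, \dots, x-1\}$, totaling $\binom{x-1}{2}$ splits.

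The main obstacle is to schedule these $\binom{x-1}{2}$ splits so that each is executable, meaning the required pair $(c, c)$ is present when called for. I would arrange them into $x-2$ \emph{waves}: wave $w$ (for $w = 1, \dots, x-2$) performs, in order of increasing $c$, one split of a pair of $c$'s for $c = 2, 3, \dots, x-w$. Feasibility is verified by the invariant that at the start of wave $w$ one has $n_2 = x-w+1 \geq 2$ and, for $c \geq 3$, $n_c = 1$ precisely for $c \in \{3, \dots, x-w\} \cup \{x-w+2, \dots, x\}$ (with $n_c = 0$ otherwise); this invariant is maintained from wave to wave by the computed net effect of one wave. The first split of each wave (of a $2$-pair) is then executable, and each subsequent $c$-split is executable because the immediately preceding $(c-1)$-split has just added a second copy of $c$, momentarily raising $n_c$ to $2$.

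Concatenating the lifted inductive chain with the $\binom{x-1}{2}$ new multisets produced by the waves yields a chain of total length $(\binom{x-1}{3}+1) + \binom{x-1}{2} = \binom{x}{3}+1$ by Pascal's rule, and hence $s(\binom{x}{2}) \geq \binom{x}{3}+1 \geq \binom{x}{3}$, completing the induction.
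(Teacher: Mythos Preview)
Your proof is correct and follows essentially the same approach as the paper's. Both argue by induction on $x$ that the staircase multiset $\multiset{x, x-1, \dots, 2, 1, \dots, 1}$ is reachable from $D_0^{\binom{x}{2}}$, lift the inductive chain by adjoining the extra copies of $2$, and then extend by $\binom{x-1}{2}$ further splits organised as cascades that start at level $2$ and run upward; your ``waves'' are exactly the paper's phases, and your invariant makes explicit the feasibility that the paper leaves to the reader.
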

\begin{proof}
	By induction on $x$, we prove that $D_0^p \in \mathcal{D}^p$ can be transformed into $\multiset{x, x-1, \dots, 1, \dots, 1}$ in at least $\binom{x}{3}$ steps.
	Here, $\multiset{x, x-1, \dots, 1, \dots, 1}$ denotes the multiset containing every $2 \leq y \leq x$ precisely once and whose entries sum to $2p +2$.
	
	If $x = 2$ then $p = 1$ and $\binom{x}{3} = 0$. The initial sequence $D_0^p$ is as desired.
	
	By the inductive hypothesis, we can transform $D_0^{\binom{x}{2}}$ into $\multiset{x,x-1,\dots, 1, \dots, 1}$ in at least $\binom{x}{3}$ many steps. 
	Write $T_x \coloneqq \multiset{2,\dots,2}$ for the multiset containing $x$ copies of $2$.
	Since $\binom{x+1}{2} - \binom{x}{2} = x$, the initial sequence $D_0^{\binom{x+1}{2}} = D_0^{\binom{x}{2}} \cup T_x$ can be transformed into $\multiset{x,x-1,\dots, 1, \dots, 1} \cup T_x$ in at least $\binom{x}{3}$ many steps.
	
	We can now transform
	\begin{itemize}
		\item  $\multiset{x,x-1,\dots, 1, \dots, 1} \cup T_x$ to $\multiset{x+1,x-1,\dots, 1, \dots, 1} \cup T_{x-1}$ in $x-1$ steps by using one extra $2$,
		\item  $\multiset{x+1,x-1,\dots, 1, \dots, 1} \cup T_{x-1}$ to $\multiset{x+1,x, x-2,\dots, 1, \dots, 1} \cup T_{x-2}$ in $x-2$ steps by using one extra $2$.
		\item \dots
	\end{itemize}
	In this way, we end up at the desired multiset in $\sum_{y = 1}^{x-1} y = \binom{x}{2}$ steps consuming all extra $2$s.
	Observing $\binom{x}{3} + \binom{x}{2} = \binom{x+1}{3}$, we conclude.
\end{proof}

\begin{corollary} \label{cor:slarge}
	For $n \geq 130$, $s(\lfloor n/2 \rfloor - 1) \geq n-2$.
\end{corollary}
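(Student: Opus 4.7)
The plan is to combine \cref{obs:monot} and \cref{lem:lowerbound} by selecting an appropriate value of the parameter at which to apply the lower bound. Specifically, given $n \geq 130$, I will let $x$ be the smallest integer such that $\binom{x}{3} \geq n-2$, and verify that this same $x$ also satisfies $\binom{x}{2} \leq \lfloor n/2 \rfloor - 1$. Once this is established, the chain of inequalities
\[
s(\lfloor n/2 \rfloor - 1) \;\geq\; s\!\left(\tbinom{x}{2}\right) \;\geq\; \tbinom{x}{3} \;\geq\; n-2
\]
follows immediately by applying monotonicity (\cref{obs:monot}) to the first step and \cref{lem:lowerbound} to the second.

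First I would observe that for $n \geq 130$ the chosen $x$ is necessarily at least $11$: since $\binom{10}{3} = 120 < 128 \leq n-2$, the minimal $x$ with $\binom{x}{3} \geq n-2$ must be at least $11$. Moreover, the minimality of $x$ yields $\binom{x-1}{3} \leq n-3$, so $n \geq \binom{x-1}{3} + 3$.

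The key algebraic step is then to show that for every integer $x \geq 11$ we have $\binom{x-1}{3} \geq x(x-1) = 2\binom{x}{2}$. Cancelling the common factor $x-1$, this reduces to $(x-2)(x-3) \geq 6x$, i.e.\ $x^2 - 11x + 6 \geq 0$, which holds for every integer $x \geq 11$ (for instance, at $x=11$ the left-hand side equals $6$, and the expression is increasing for $x \geq 6$). Combining this with the bound $n \geq \binom{x-1}{3} + 3$ from the previous paragraph yields
\[
n \;\geq\; \tbinom{x-1}{3} + 3 \;\geq\; 2\tbinom{x}{2} + 3,
\]
so that $\binom{x}{2} \leq (n-3)/2 \leq \lfloor n/2 \rfloor - 1$, as required.

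I do not anticipate a serious obstacle: the argument is a short calculation once the correct choice of $x$ is made. The only mild subtlety is arranging the threshold on $n$ so that $x$ is forced to be large enough (here, $x \geq 11$) to make the quadratic $x^2 - 11x + 6$ non-negative; this is exactly what the hypothesis $n \geq 130$ achieves, via the inequality $\binom{10}{3} = 120 < n - 2$.
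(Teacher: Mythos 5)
Your proof is correct and follows essentially the same route as the paper: both arguments chain \cref{obs:monot} and \cref{lem:lowerbound} through an intermediate parameter $x$ satisfying $\binom{x}{2} \leq \lfloor n/2\rfloor - 1$ and $\binom{x}{3} \geq n-2$. The only difference is in which inequality defines $x$ and which is verified afterwards — the paper takes $x$ maximal with $\binom{x}{2} \leq \lfloor n/2\rfloor - 1$ and checks $\binom{x}{3}\geq n-2$ via the asymptotic bound $\binom{\sqrt{n-3}-1}{3} \in \Theta(n^{3/2})$ plus a numerical check at $n=130$, whereas you take $x$ minimal with $\binom{x}{3}\geq n-2$ and verify $\binom{x}{2}\leq\lfloor n/2\rfloor-1$ by the exact integer inequality $x^2-11x+6\geq 0$ for $x\geq 11$, which is arguably the cleaner and more self-contained of the two verifications.
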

\begin{proof}
	Let $x \in \mathbb{N}$ be such that $\binom{x}{2} \leq \lfloor n/2 \rfloor -1 \leq \binom{x+1}{2}$. 
	Then $x \geq \sqrt{2\lfloor n/2 \rfloor - 2} -1 \geq \sqrt{n-3} - 1$.
	By \cref{obs:monot,lem:lowerbound},
	\[
	s(\lfloor n/2 \rfloor -1) \geq s(\binom{x}{2}) \geq \binom{x}{3} \geq \binom{\sqrt{n-3}-1}{3}.
	\]
	Clearly, $\binom{\sqrt{n-3}-1}{3} \in \Theta(n^{3/2})$ is larger than $n-2$ for $n$ large enough. Evaluating both functions numerically shows that taking $n \geq 130$ suffices.
\end{proof}

In the next step, we construct trees from the multisets in $\mathcal{D}^p$.

\begin{lemma} \label{lem:multiset-tree}
	For every $D \in \mathcal{D}^p$, there exists a tree $T$ with degree sequence $D$.
\end{lemma}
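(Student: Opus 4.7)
My plan is to prove this by induction on the length of the chain witnessing $D \in \mathcal{D}^p$, although one could equally well cite the classical characterisation of tree degree sequences (a sequence $d_1,\dots,d_n$ of positive integers is realised by some tree on $n$ vertices if and only if $\sum_i d_i = 2(n-1)$, which is readily verified since every $D \in \mathcal{D}^p$ has $p+2$ entries summing to $2p+2 = 2((p+2)-1)$). I will instead give a constructive inductive argument because it integrates smoothly with the splitting operation used to define $\mathcal{D}^p$ and will likely be useful later in \cref{thm:triangles}.

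For the base case, $D_0^p = \multiset{2,\dots,2,1,1}$ is realised by the path $P_{p+2}$ on $p+2$ vertices, whose two endpoints have degree $1$ and whose $p$ interior vertices have degree $2$. For the inductive step, suppose $T$ is a tree realising $D$ and $D \to D'$, where $D'$ is obtained by replacing two occurrences of some entry $c \geq 2$ with $c+1$ and $c-1$. Pick two distinct vertices $u,v \in V(T)$ of degree $c$. Since $T$ is a tree, $v$ has at least one neighbour $w$ distinct from $u$ (and in fact since $\deg(v) = c \geq 2$ and $T$ is a tree, there exists a neighbour of $v$ not lying on the unique $u$–$v$ path). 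The plan is to remove the edge $vw$ and add the edge $uw$, obtaining a graph $T'$ on the same vertex set in which $\deg(u)$ has increased by $1$ to $c+1$, $\deg(v)$ has decreased by $1$ to $c-1$, and all other degrees are unchanged, so $T'$ realises $D'$.

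The one step that needs care is verifying that $T'$ is still a tree, i.e.\ that adding $uw$ does not create a cycle. The hard part is choosing $w$ so that this works: if $w$ is chosen to be the neighbour of $v$ on the unique $u$–$v$ path in $T$, then deleting $vw$ disconnects $T$ into two components, one containing $u$ and $w$ and the other containing $v$; in this case adding $uw$ would create a cycle. Hence I will instead choose $w$ to be any neighbour of $v$ \emph{not} on the $u$–$v$ path, which exists because $\deg(v) = c \geq 2$ and only one edge incident to $v$ lies on that path. With this choice, deleting $vw$ separates $T$ into the component containing $v$ (and $u$, since the $u$–$v$ path is untouched) and the component containing $w$; adding $uw$ then reconnects them without introducing a cycle, so $T'$ is connected and has $\lvert V(T)\rvert - 1$ edges, hence is a tree with the desired degree sequence $D'$. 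This completes the induction.
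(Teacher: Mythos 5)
Your proof is correct and follows essentially the same route as the paper's: the base case is the path $P_{p+2}$, and the inductive step performs the same edge swap, moving an edge from a degree-$c$ vertex to the other degree-$c$ vertex via a neighbour chosen off the unique path between them so that the result stays a tree. Your choice of $w$ off the $u$--$v$ path also automatically guarantees $uw \notin E(T)$, so no gap arises there.
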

\begin{proof}
	For $D_0^p$, define $T_0$ as the tree with vertices $V(T_0) \coloneqq \{v_i ~|~ i \in [p+2]\}$ and $E(T_0) \coloneqq \{(v_i, v_{i+1}) ~|~ i \in [p+1]\}$, cf.\@ \cref{fig:tree-construction}. Clearly, $T_0$ is a tree with degree sequence $D_0^p$. 
	
	Now by structural induction we assume that for $D_1 \to D_2$ we have a tree $T_1$ of degree sequence $D_1$ given and $D_2$ results from changing two entries of value $c \geq 2$ to $c-1$ and $c+1$. Let $v_{1}, v_{2} \in V(T_1)$ with $v_1 \neq v_2$ be two vertices with $\deg(v_1) = \deg(v_2) = c$. Since $T_1$ is a tree and $c \geq 2$ there must be a vertex $v_{3} \neq v_{2}$ with $v_1 v_3 \in E(T_1)$, $v_2 v_3 \notin E(T_1)$ and $v_3$ is not on the path from $v_1$ to $v_2$. Now let $T_2$ be a tree with $V(T_2)\coloneqq V(T_1)$ and $E(T_2) \coloneqq (E(T_1) \setminus \{v_1 v_3\}) \cup v_2 v_3$. Clearly, $\deg(v_3)$ remains unchanged while $\deg(v_2)$ increases by one and $\deg(v_1)$ decreases by one such that $D_2$ is the degree sequence of $T_2$. Since $T_1$ is a tree and $v_3$ is not on the path between $v_1$ and $v_2$, removing the edge $v_1 v_3$ creates two subtrees, one including $v_1$ and $v_2$, while the other includes $v_3$. Adding the edge $v_2 v_3$ connects the two subtrees again, therefore $T_2$ is also a tree.
\end{proof}
\begin{figure}
	\centering
	\begin{tikzpicture}[scale=1.2]
		\tikzstyle{highlighted} = [fill=lipicsYellow, draw=lipicsYellow];
		\node [vertex] (0) at (0, -1) {};
\node [vertex,highlighted] (1) at (0, -1.5) {};
		\node [vertex] (2) at (0, -2) {};
		\node [vertex] (3) at (0, -2.5) {};
		\node [vertex] (4) at (0, -3) {};
		\node [vertex,highlighted] (5) at (0, -3.5) {};
		\node [vertex] (6) at (0, -4) {};

		\node [vertex] (7) at (2, -1.5) {};
		\node [vertex, highlighted] (8) at (2, -2) {};
		\node [vertex] (9) at (2, -2.5) {};
		\node [vertex, highlighted] (10) at (2, -3) {};
		\node [vertex] (11) at (2, -3.5) {};
		\node [vertex] (12) at (2, -4) {};
		
		\node [vertex] (13) at (2.5, -3.5) {};
		\node [vertex] (14) at (4, -2) {};
		\node [vertex] (15) at (4, -2.5) {};
		\node [vertex,highlighted] (16) at (4, -3) {};
		\node [vertex,highlighted] (17) at (4, -3.5) {};
		\node [vertex] (18) at (4, -4) {};
		
		\node [vertex] (19) at (4.5, -3.5) {};
		\node [vertex] (20) at (4.5, -3) {};
		\node [vertex] (21) at (6, -2) {};
		\node [vertex,highlighted] (22) at (6, -2.5) {};
		\node [vertex,highlighted] (23) at (6, -3) {};
		\node [vertex] (24) at (6, -3.5) {};
		
		\node [vertex] (25) at (6, -4) {};
		\node [vertex] (26) at (6.5, -3.5) {};
		\node [vertex] (27) at (6.5, -3) {};
		
		\node [vertex] (28) at (8.5, -2.5) {};
		\node [vertex] (29) at (8, -2.5) {};
		\node [vertex] (30) at (8, -3) {};
		\node [vertex] (31) at (8, -3.5) {};
		\node [vertex] (32) at (8, -4) {};
		\node [vertex] (33) at (8.5, -3.5) {};
		\node [vertex] (34) at (8.5, -3) {};
		\draw (0) to (1);
		\draw (1) to (2);
		\draw (2) to (3);
		\draw (3) to (4);
		\draw (4) to (5);
		\draw (5) to (6);
		\draw (7) to (8);
		\draw (8) to (9);
		\draw (9) to (10);
		\draw (10) to (11);
		\draw (11) to (12);
		\draw (11) to (13);
		\draw (14) to (15);
		\draw (15) to (16);
		\draw (16) to (17);
		\draw (17) to (18);
		\draw (17) to (19);
		\draw (16) to (20);
		\draw (21) to (22);
		\draw (22) to (23);
		\draw (23) to (24);
		\draw (24) to (25);
		\draw (24) to (26);
		\draw (24) to (27);
		\draw (29) to (30);
		\draw (30) to (31);
		\draw (31) to (32);
		\draw (31) to (33);
		\draw (31) to (34);
		\draw (30) to (28);
	\end{tikzpicture}
	\caption{Trees corresponding to a chain of multisets in $\mathcal{D}^5$. 
		The leftmost tree corresponds to $\multiset{2,2,2,2,2,1,1}$ and the rightmost tree corresponds to $\multiset{4,3,1,1,1,1,1}$. The yellow vertices have the same degree and were chosen for the operation described in \cref{lem:multiset-tree}.}
	\label{fig:tree-construction}
\end{figure}
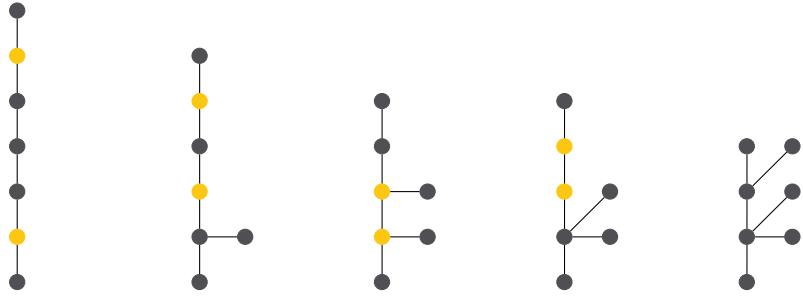

Given a graph $G$, write $\overline{G}$ for its \emph{complement}, i.e.\@ the graph with vertex set $V(G)$ and edge set $\binom{V(G)}{2} \setminus E(G)$.
Alternatively, one may think of $\overline{G}$ as the graph obtained from a clique on the vertex set of $G$ by deleting all edges in $G$. \cref{thm:nr-triangles} gives a formula for calculating the number of triangles in $\overline{G}$ based on $G$.

\begin{lemma}
	\label{thm:nr-triangles}
	Let $G$ be a graph on $n$ vertices. Then
	\[
	\sub(\smallKthree, \overline{G}) = \binom{n}{3} - \sub(\smallKtwo, G) (n-2) + \sum_{v \in V(G)} \binom{\deg_G(v)}{2} - \sub(\smallKthree, G).
	\]
\end{lemma}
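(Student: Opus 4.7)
The plan is to prove \cref{thm:nr-triangles} via a double-counting argument, classifying the $\binom{n}{3}$ unordered vertex-triples according to the number of $G$-edges that lie in each.

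For $i \in \{0,1,2,3\}$, I would let $T_i$ denote the number of $3$-subsets $U \subseteq V(G)$ such that $G[U]$ has exactly $i$ edges. Clearly,
\[
T_0 + T_1 + T_2 + T_3 = \binom{n}{3},
\]
and, because a $3$-set $U$ induces a triangle in $\overline{G}$ exactly when $G[U]$ has no edges, $\sub(\smallKthree, \overline{G}) = T_0$. Similarly, $\sub(\smallKthree, G) = T_3$.

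Next I would express the two auxiliary quantities on the right-hand side in terms of $T_1, T_2, T_3$. First, the sum $\sub(\smallKtwo, G)(n-2)$ counts pairs $(e, w)$ where $e$ is an edge of $G$ and $w \in V(G) \setminus e$. Grouping such pairs by the triple $U \coloneqq e \cup \{w\}$, each $U$ with exactly $i$ edges of $G$ is counted $i$ times, so
\[
\sub(\smallKtwo, G)(n-2) = T_1 + 2T_2 + 3T_3.
\]
Second, $\sum_{v \in V(G)} \binom{\deg_G(v)}{2}$ counts pairs $(v, \{x,y\})$ with $vx, vy \in E(G)$, i.e.\ cherries (paths of length two) in $G$. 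A $3$-set inducing exactly two edges contains a unique cherry, and a $3$-set inducing a triangle contains three, hence
\[
\sum_{v \in V(G)} \binom{\deg_G(v)}{2} = T_2 + 3T_3.
\]

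It then remains to solve the resulting linear system for $T_0$. Substituting $T_2 = \sum_{v} \binom{\deg_G(v)}{2} - 3T_3$ and $T_1 = \sub(\smallKtwo,G)(n-2) - 2T_2 - 3T_3$ into $T_0 = \binom{n}{3} - T_1 - T_2 - T_3$ and simplifying yields exactly the claimed identity. Since every step is a purely bookkeeping expansion, I do not anticipate any real obstacle; the only thing to be careful about is tracking the multiplicities $1,2,3$ correctly in each of the two counting identities above.
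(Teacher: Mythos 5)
Your proposal is correct, and it is essentially the paper's argument in a different packaging: the paper computes $\sub(\smallKthree,\overline{G}) = \binom{n}{3} - \bigl|\bigcup_{e \in E(G)} D_e\bigr|$ by inclusion--exclusion over the sets $D_e$ of triples containing an edge $e$, and your two counting identities $\sub(\smallKtwo,G)(n-2) = T_1 + 2T_2 + 3T_3$ and $\sum_{v}\binom{\deg_G(v)}{2} = T_2 + 3T_3$ are exactly its first- and second-order inclusion--exclusion terms. Your partition-by-edge-count bookkeeping and the final linear-algebra step check out, so there is no gap.
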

\begin{proof}
	Write $V \coloneqq V(G)$.
	For $e \in \binom{V}{2}$, write $D_e = \{t \in \binom{V}{3} \mid e \subseteq t\}$ for the set of $3$-sets of vertices containing $e$.
	Observe that $|D_e| = n-2$ for every $e$.
	A $3$-set of vertices induces a triangle in $\overline{G}$ if and only if none of its edges is in $G$.
	Hence, $\sub(\smallKthree, \overline{G}) = \binom{n}{3} - |\bigcup_{e \in E(G)} D_e|$.
	Then by Inclusion--Exclusion,
	\begin{align*}
		\left|\bigcup_{e \in E(G)} D_e \right|
		& = \sum_{\emptyset \neq E \subseteq E(G)} (-1)^{|I|+1} \left|\bigcap_{e \in E} D_e \right| \\
		& = \sum_{e \in E(G)} |D_e| - \sum_{\substack{\{e_1, e_2\} \subseteq E(G) \\ |e_1 \cap e_2| = 1}} 1  + \sum_{\substack{\{e_1, e_2,e_3\} \subseteq E(G) \\ |e_1 \cap e_2| = |e_1 \cap e_3| = |e_2 \cap e_3| = 1}} 1 \\
		&= \sub(\smallKtwo, G) (n-2) - \sum_{v \in V(G)} \binom{\deg_G(v)}{2} + \sub(\smallKthree, G).
	\end{align*}
	This concludes the proof.
\end{proof}

This concludes the preparations for the proof of \cref{thm:triangles}.

\begin{proof}[Proof of \cref{thm:triangles}]
For $h \leq \binom{n-1}{3}$, this follows directly from \cref{thm:reconstruct-cliques}. To close the gap to $h \leq \binom{n}{3}$ we show that for $\binom{n-1}{3} \leq h \leq \binom{n}{3}$ there is a cycle-free graph $G$ on $n+1$ vertices such that $\sub(\smallKthree, \overline{G})=h$.

This graph is constructed based on the multisets discussed above.
Let $p \leq n-1$.
For $D \in \mathcal{D}^p$, write $T(D)$ for a tree as in \cref{lem:multiset-tree}.
Since the sum of the entries of $D$ is $2p + 2$ and $D$ is the multisets of degree of $T(D)$, 
the tree $T(D)$ has $p + 1$ edges and $p+2 \leq n+1$ vertices.
Write $F(D)$ for the forest obtained from $T(D)$ by adding isolated vertices such that $F(D)$ has $n+1$ vertices.
Write $G(D)$ for the complement of $F(D)$.

By \cref{thm:nr-triangles},
\begin{equation} \label{eq:Gp1}
	\sub(\smallKthree, G(D_0^p)) = \binom{n+1}{3} - (p+1)(n-1) + p.
\end{equation}
Clearly, increasing $p$ decreases $\sub(\smallKthree, G(D_0^p))$.
More precisely, the gap between the $\smallKthree$-counts in $G(D_0^p)$ and $G(D_0^{p+1})$ is
\begin{equation} \label{eq:gap}
	\sub(\smallKthree, G(D_0^p)) - \sub(\smallKthree, G(D_0^{p+1})) = n-2.
\end{equation}
Lastly, by \cref{eq:Gp1},
\begin{equation}
	\sub(\smallKthree,G(D^{\lfloor n/2 \rfloor -1}_0))
	\geq \binom{n+1}{3} - \binom{n}{2} + \frac{n-3}{2} \geq \binom{n}{3}.
\end{equation}
Since $h \leq \binom{n}{3}$, there exists $p \geq \lfloor n/2 \rfloor - 1$ such that $\sub(\smallKthree,G(D_0^{p+1})) < h \leq \sub(\smallKthree, G(D_0^{p}))$.
By \cref{eq:gap}, the gap between this upper and lower bound is $n-2$.
By \cref{thm:nr-triangles}, if $D, D' \in \mathcal{D}^p$ are such that $D \to D'$ then
\begin{equation} \label{eq:gap1}
	\sub(\smallKthree, G(D)) = \sub(\smallKthree, G(D')) -1
\end{equation}
since $\binom{d}{2} + \binom{d}{2} = \binom{d+1}{2} + \binom{d-1}{2} - 1$ for all $d \in \mathbb{N}$.
By \cref{obs:monot,cor:slarge}, $s(p) \geq n-2$.
Hence, there exists a chain $D_0^p \to D_1 \to \dots \to D_{n-3}$ in $\mathcal{D}^p$.
By \cref{eq:gap1}, one of these multisets $D_i$ is such that $h = \sub(\smallKthree, G(D_i))$.
\end{proof}\clearpage{}
\clearpage{}\section{Material Omitted in \Cref{sec:parameterized}}
\label{app:parameterised}

\begin{proof}[Proof of \cref{lem:homindsum}]
	Write $\surj(F, G)$ for the number of homomorphisms $h \colon F \to G$ such that $h$ is surjective on vertices.  The following equality is well-known, cf.\@ \cite{lovasz_large_2012}. For all graphs $F$ and $G$,
	\begin{equation}\label{lem:surjhom}
		\surj(F, G) = \sum_{U \subseteq V(G)} (-1)^{\abs{V(G) \setminus U}} \hom(F, G[U]).
	\end{equation}
	
	We partition the homomorphisms from $F$ to $G$ by their images. This yields the first step in the following chain of equalities. Subsequently, we rearrange the sums:
	\begin{align*}
		\hom(F, G)
		&= \sum_{\substack{U \subseteq V(G)\\\abs{U}\leq k}} \surj(F, G[U]) \\
		&\overset{\eqref{lem:surjhom}}{=} \sum_{\substack{U \subseteq V(G)\\\abs{U}\leq k}} \sum_{U' \subseteq U} (-1)^{|U \setminus U'|} \hom(F, G[U'])\\
		&=\sum_{H : \abs{H} \leq k} \hom(F, H) \sum_{\substack{ U \subseteq V(G)\\\abs{U}\leq k}}  (-1)^{\abs{U}-\abs{H}} \sum_{U' \subseteq U} \delta_{G[U'] \cong H}\\
		&=\sum_{H : \abs{H} \leq k} \hom(F, H) \sum_{\substack{ U' \subseteq V(G)\\|U'| = |H|}} 
		\delta_{G[U'] \cong H}
		\sum_{\substack{ U' \subseteq U \subseteq V(G)\\\abs{U}\leq k}}
		(-1)^{\abs{U}-\abs{H}}\\
		&=\sum_{H : \abs{H} \leq k} \hom(F, H) \sum_{\substack{ U' \subseteq V(G)\\|U'| = |H|}} 
		\delta_{G[U'] \cong H}
		\sum_{j=0}^{k- \abs{H}} \binom{\abs{G}-\abs{H}}{j} (-1)^j
		\\
		&=\sum_{H : \abs{H} \leq k} \hom(F, H) \indsub(H, G)
		\sum_{j=0}^{k- \abs{H}} \binom{\abs{G}-\abs{H}}{j} (-1)^j \\
		&=\sum_{H : \abs{H} \leq k} \hom(F, H) \indsub(H, G) (-1)^{k-\abs{H}}\binom{\abs{G}-\abs{H}-1}{k-\abs{H}}.
	\end{align*}
	The last equality follows from a standard identity involving binomial coefficients, cf.\@ \cite[Section~24.1.1.II.B]{abramowitz_handbook_1965}. This is the only point where the assumption $|G| > k$ matters.
\end{proof}

\begin{proof}[Proof of \cref{lem:bezout}]
	Let $y$ denote the greatest common divisor of the $y_1, \dots, y_n$.
	Clearly, $X \subseteq y\mathbb{N}$.
	By Bezóut's identity, write $y = \sum \gamma_iy_i$ for some $\gamma_i \in \mathbb{Z}$. If all $\gamma_i$ are non-negative then $y \in X$ and $y\mathbb{N} = X$. Otherwise, let $c = - \min \gamma_i > 0$. Let $N = cy_1 \sum y_i$. Clearly, $N \in X \cap y\mathbb{N}$. Let $z \in y\mathbb{N}$ be greater than $N$. Write $z = N + ay_1 + by$ for some $a \geq 0$ and $0 \leq b < y_1/y$. Then 
	\[
	z = N + ay_1 + by = cy_1\sum y_i + ay_1 + b\sum \gamma_i y_i
	= (cy_1+a+b\gamma_i)y_1 + \sum_{i \geq 2} (cy_1 + b\gamma_i)y_i
	\]
	If $\gamma_i \geq 0$ then clearly $cy_1 + a +b\gamma_i \geq cy_1 + b\gamma_i \geq 0$. If $\gamma_i < 0$ then $cy_1 + a +b\gamma_i \geq cy_1 + y_1 \gamma_i/y \geq \gamma_i(1/y -1)y_1 \geq 0$. Hence, $z \in X$.
\end{proof}
\newcommand{\pile}{\textup{\textsc{$p$-IntegerLinearEquations}}\xspace}
\begin{theorem}[\cite{damschke_sparse_2013}] \label{thm:damschke}
	The following problem is in $\FPT$:
	\pproblem{\pile}{A  matrix $A \in \mathbb{N}^{k \times m}$, a vector $b \in \mathbb{N}^k$, both given in binary}{$km$}
	{Does there exist $x \in \mathbb{N}^m$ such that $Ax = b$?}
\end{theorem}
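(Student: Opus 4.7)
The plan is to reduce the problem to integer linear programming (ILP) feasibility in fixed dimension and then invoke Lenstra's theorem. The key observation is that the parameter $km$ bounds both the number $m$ of unknowns and the number $k$ of equations, so in particular $m, k \leq km$.

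First, I would rewrite the condition ``$\exists x \in \NN^m : Ax = b$'' as an ILP feasibility question in $m$ variables by combining $Ax \leq b$, $-Ax \leq -b$, and $-x \leq 0$; this is a system of at most $2k + m \leq 3km$ linear inequalities over $\mathbb{Z}^m$ equivalent to the original problem. Since the number of variables $m$ is bounded by the parameter, this is an instance of integer linear programming in fixed dimension, and Lenstra's theorem applies: ILP feasibility with $m$ variables can be decided in time $f(m) \cdot \operatorname{poly}(L)$, where $L$ is the binary encoding length of the coefficients. Instantiated here, the running time is $f(km) \cdot \operatorname{poly}(km, \log\|A\|_\infty, \log\|b\|_\infty)$, which is fpt in $km$ as required.

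The main obstacle, which is precisely what Lenstra's algorithm circumvents, is that feasible solutions can have entries exponentially large in the binary encoding length of $A$ and $b$, so no brute-force search over the solution space is feasible; Lenstra's argument instead uses lattice basis reduction to locate a short direction in which the feasible polytope is thin and then recurses by branching on the integer values of the corresponding coordinate. An alternative route, closer to Damaschke's original treatment, would be to first observe via a Carath\'eodory-type argument that whenever $Ax = b$ admits a non-negative rational solution it admits one whose support has size at most $k$, then enumerate the at most $\binom{m}{k} \leq 2^{km}$ candidate supports and, for each, solve the restricted square-ish system either by invoking Lenstra again or via Hermite normal form together with a bounded search, again yielding an algorithm of fpt running time in $km$.
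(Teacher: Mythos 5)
Your main argument is correct but takes a genuinely different route from the paper's. The paper proves the theorem by directly invoking Damaschke's Theorem~4: it enumerates in fpt time all \emph{minimal feasible} column sets $C$ (those for which $A_C x_C = b$ has a strictly positive integer solution while no proper subset of $C$ does) and observes that $Ax=b$ is solvable over $\NN^m$ if and only if some minimal feasible set exists. You instead rewrite the problem as integer-program feasibility in $m \le km$ variables and apply Lenstra's (or Kannan's) theorem, whose running time $f(m)\cdot\operatorname{poly}(L)$ is fpt in the parameter; this is a clean, self-contained and entirely standard argument, arguably more transparent than the citation-based proof in the paper. One caveat concerns your sketched alternative route: the Carath\'eodory bound of $k$ on the support size holds for non-negative \emph{rational} solutions of $Ax=b$ but not for non-negative \emph{integer} ones (for instance $2x_1+3x_2=5$ with $k=1$ has the integer solution $(1,1)$ of support size $2$ but no non-negative integer solution of support size $1$), so restricting the enumeration to supports of size at most $k$ would be unsound; one must enumerate all supports (or the minimal feasible ones, as Damaschke does), which is still fine since $2^m \le 2^{km}$. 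As this is only an aside and your primary Lenstra-based argument is complete, the proof stands.
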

\begin{proof}
	We outline how the theorem follows from the results in \cite{damschke_sparse_2013}.
	For a set of columns $C$, write $A_C$ for the submatrix of $A$ with columns $C$ and $x_C$ analogously for the subvector of $x$ corresponding to $C$.
	A set of columns $C \neq \emptyset$ is \emph{feasible} if the system $A_C x_C = b$ has a solution where all entries of $x_C$  are positive integers. A set of columns $C$ is \emph{minimal feasible} if $C$ is feasible but no $C' \subsetneq C$ is feasible.
	By \cite[Theorem~4]{damschke_sparse_2013}, the minimal feasible sets of the input system can be enumerated in $\FPT$.
	Observe that $Ax = b$ is feasible iff there exists a minimal feasible set. This concludes the proof.
\end{proof}\clearpage{}

\end{document}